\newtheorem{theorem}{Theorem}[section]
\newtheorem{lemma}{Lemma}[section]
\newtheorem{definition}{Definition}[section]
\newcommand{\iif}{\mathtt{if}}
\newcommand{\puse}{\mathit{use}}
\newcommand{\pentry}{\mathit{entry}}
\newcommand{\pexit}{\mathit{exit}}
\newcommand{\CD}{\mathit{CD}}
\newcommand{\LIDD}{\mathit{LIDD}}
\newcommand{\LCDD}{\mathit{LCDD}}
\newcommand{\val}{\mathit{val}}
\newcommand{\store}{\mathit{store}}
\newcommand{\state}{\mathit{state}}
\newcommand{\Se}{\mathcal{E}}
\newcommand{\BT}{\mathbf{T}}
\newcommand{\BF}{\mathbf{F}}
\newcommand{\DefOrd}{\mathit{DefOrd}}
\newcommand{\ct}{\mathit{ct}}
\newcommand{\cf}{\mathit{cf}}
\newcommand{\econf}{\mathit{econf}}
\newcommand{\act}{\mathit{act}}
\newcommand{\ec}{\mathit{ec}}
\newcommand{\Next}{\mathit{Next}}
\newcommand{\iret}{\mathtt{ret}}
\newcommand{\avail}{\mathit{avail}}
\newcommand{\unchk}{\mathit{unchk}}
\newcommand{\chk}{\mathit{chk}}
\newcommand{\av}{\mathit{av}}
\newcommand{\udav}{\mathit{udav}}
\newcommand{\udec}{\mathit{udec}}
\newcommand{\mca}{\mathit{mca}}
\newcommand{\pcondC}{\mathit{condC}}
\newcommand{\pcondF}{\mathit{condF}}
\newcommand{\pcondL}{\mathit{condL}}
\newcommand{\pcondD}{\mathit{condD}}
\newcommand{\pcondCFLD}{\mathit{condCFLD}}
\newcommand{\Var}{\mathit{Var}}
\newcommand{\UR}{\mathit{UR}}
\newcommand{\Store}{\mathit{Store}}
\newcommand{\Val}{\mathit{Val}}
\newcommand{\Expr}{\mathit{Exp}}
\newcommand{\State}{\mathit{State}}
\newcommand{\Avail}{\mathit{Avail}}
\newcommand{\Econf}{\mathit{Econf}}
\begin{document}
\title{Semantical Equivalence of the Control Flow Graph and the Program
Dependence Graph}
\author{Sohei Ito\thanks{Contact: \texttt{ito@fish-u.ac.jp}} \\
{\small National Fisheries University, Japan}}
\date{}
\maketitle

\begin{abstract}
The program dependence graph (PDG) represents data and control dependence
 between statements in a program.
This paper presents an operational semantics of program dependence
 graphs.
Since PDGs exclude artificial order of statements that resides in
 sequential programs, executions of PDGs are not unique.
However, we identified a class of PDGs that have unique final states of
 executions, called \emph{deterministic PDGs}.
We prove that the operational semantics of control flow graphs is
 equivalent to that of deterministic PDGs.
The class of deterministic PDGs properly include PDGs obtained
 from well-structured programs.
Thus, our operational semantics of PDGs is more general than that of
 PDGs for well-structured programs, which are already established in
 literature.
\end{abstract}

\section{Introduction}
The program dependence graph (PDG)
\cite{kuck81dependence,FerranteOW87,horwitz89integrating} is a kind of
an intermediate representation of programs.
It is a directed graph whose nodes are program statements and edges
represent data dependence or control dependence between them.

The PDG is a useful representation for compiler code optimisation,
because many optimisation techniques are based on analyses
of data dependence and control dependence between statements in a program
\cite{FerranteO83,FerranteOW87}.
Using PDGs, we can simplify program optimisations, because many
optimisations can be carried out by simply scanning PDGs.
If optimisation techniques are applied to control flow graphs (CFGs), we
need to re-analyse dependence relationships in the program because
optimisation changes them.
However, if we apply optimisation techniques to PDGs, we do not need to
re-analyse dependence relationships because optimisation itself changes
them in the PDG.
Furthermore, since the PDG excludes artificial order of statements,
it is a suitable representation for vectorisation and parallelisation
\cite{warren84hierarchical,FerranteOW87,baxter89program}.

Although the PDG is a suitable representation for program optimisation,
the correctness of such optimisation techniques on PDGs is not
well-studied due to the lack of research on semantics of the PDG.

There are a few researches on the semantics of PDGs
\cite{selke89rewriting,cartwright89semantics,ramalingam89semantics}.
Selke \cite{selke89rewriting} introduced a rewriting semantics of a
PDG and proved that it is equivalent to the program semantics.
Cartwright and Felleisen \cite{cartwright89semantics} introduced a
denotational semantics of a PDG.
They proved that their semantics coincides with that of terminating
programs, and also proved that even a PDG corresponding to a
non-terminating program can return a result.
Ramalingam and Reps \cite{ramalingam89semantics} presented a formal
semantics of program representation graphs (PRG) that are extension of
PDGs.
A PRG has so-called $\phi$ nodes that determine which definition will
be used when the same variable is defined in multiple assignment
statements.
Thanks to $\phi$ nodes, PRGs can be naturally interpreted as data-flow
programs.
They define a semantics of a PRG as a function that maps an initial store
to a sequence of values for each node, and is given as the least fixed
point of a set of mutually recursive equations.

The above semantics of PDGs and PRGs cover only well-structured
programs.
Conditionals are restricted to the form $(\textbf{if}~e~\textbf{then}
\dots \textbf{else} \dots)$ and loops are restricted to the form
$(\textbf{while}~e\dots)$.
However, not all programs that appear during compiler code optimisation
fall into this type of programs.
Let us consider a program in Fig. \ref{fig:loop_while} and assume that
the expression \verb|x*y| is loop-invariant.
Since this loop has a conditional that determines iteration, we cannot
directly move the computation of \verb|x*y| outside of the loop.
Therefore, we need to transform the loop and move the computation of
\verb|x*y| before the loop as in Fig. \ref{fig:loop_while_mod}
\cite{nakata99eng}.
This loop has the form $\textbf{do} \dots \textbf{while}$ and is left
out of consideration in the existing semantics of PDGs.
Therefore, existing semantics of PDGs are not sufficient to prove
correctness of this optimsation technique.

\begin{figure}[tb]
\begin{center}
\includegraphics[scale=0.5]{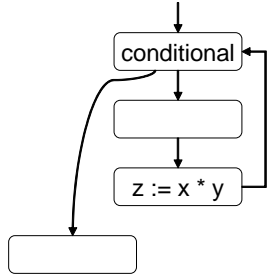}
\end{center}
\caption{A loop containing a loop invariant expression.}
\label{fig:loop_while}
\end{figure}

\begin{figure}[tb]
\begin{center}
\includegraphics[scale=0.5]{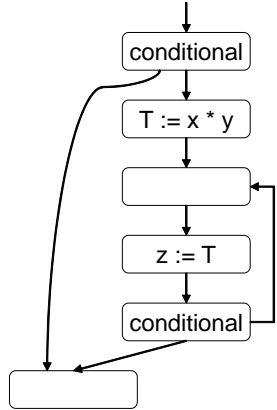}
\end{center}
\caption{After appling loop invariant code motion.}
\label{fig:loop_while_mod}
\end{figure}

In this paper, we present an operational semantics of PDGs that have
more complex control flow structures than that of well-structured
programs.
In our operational semantics, statements that are independent of each
other can be simultaneously executable.
Hence, there are many executions for a single PDG.
However, PDGs that satisfy certain structural constraints have the same
execution results.
We call this type of PDGs \emph{deterministic PDGs (dPDGs)}.

Moreover, we prove that our operational semantics of PDGs coincide with
that of CFGs for the class of dPDGs.
It gurantees that if programs are semantically equivalent to CFGs, they
are also semantically equivalent to PDGs.
This enables us to use PDGs as a foundation to discuss correctness of
several program transformations on PDGs.

This paper is organised as follows.
In section \ref{sec:cfg}, we introduce the CFG as a
presentation of programs in this paper and its operational semantics.
In addition to that, we define several types of dependence between
statements in a CFG.
In section \ref{sec:pdg}, we define the PDG.
Section \ref{sec:pdgsemantics} presents the operational semantics of
PDGs.
Furthermore, we define the class of deterministic PDGs and prove that
PDGs constructed from usual programs are deterministic PDGs.
In section \ref{sec:cfgpdgsem}, we prove that CFGs are semantically
equivalent to the corresponding PDGs.
The final section offers conclusion and future directions.

This paper is a revised version of the author's previous works
\cite{10.1007/978-3-540-77505-8_22,ito09operational} in that
definitions are revised and proofs are optimised and rectified.

\section{Control flow graph (CFG)} \label{sec:cfg}
We consider the CFG as a representation of programs.
A CFG is a pair $(N,E)$ of a set $N$ of nodes and a set $E \subseteq N
\times N$ of edges.
The nodes of a CFG are labelled by statements, and the edges of a CFG represent
 control flow of a program.
We only consider three types of statements; an assignment statement
($x:=e$), a conditional statement ($\iif~e$) and a return statement
($\iret~x $).
Assignment statements have exactly one successor and conditional statements
have exactly two successors.
The edges from a conditional statement are labelled differently from
each other, that is, either with true or false.
A CFG has a unique start node and a unique end node.
The start node has no predecessor, while the end node has no successor.
Any node in a CFG is reachable from the start node.
The last node must be a return statement.

Fig. \ref{fig:cfgex} is an example of a CFG.

\begin{figure}[tb]
\begin{center}
\includegraphics[scale = 0.5]{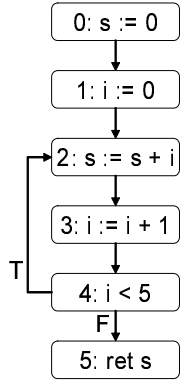}
\end{center}
\caption{An example of a CFG.}
\label{fig:cfgex}
\end{figure}

\begin{definition} \label{def:store}
Let $\Var$ be the set of variables and $\Expr$ be the set of
 expressions appearing in a CFG.
Let $\val$ be the set of values, which contains special values $\BT$ and $\BF$.
A \emph{store} is a function of the signature $\Var \longrightarrow \Val$.
We write $\Store$ for the set of stores.
An \emph{evaluation function} of expressions is $\Se : \Expr
 \longrightarrow (\Store \longrightarrow \Val)$.
\end{definition}

\begin{definition}(The operational semantics of CFG)
Let $G=(N,E)$ be a CFG.
A \emph{run} is a sequence of stores $\sigma_0 \stackrel{n_0}{\to}
 \sigma_1 \stackrel{n_1}{\to} \dots$ satisfing the following:
\begin{enumerate}
 \item $\sigma_0$ is a given initial store.
 \item $n_0$ is the initial node of $G$.
 \item $n_{i+1}$ is a successor of $n_i$ in $G$ and determined as
       follows:
 \begin{enumerate}
  \item If $n_i$ is an assignment statement, then $n_{i+1}$ is the
	unique successor of $n_i$.
  \item If $n_i$ is a conditional statement $(\iif~e)$, and $(n_i,p)_\BT
	\in E$ and $(n_i,q)_\BF \in E$, then
	\[
	n_{i+1} = 
	\begin{cases}
	 p & \mathrm{if} ~ \Se(e)\sigma_i = \BT\\
	 q & \mathrm{if} ~ \Se(e)\sigma_i = \BF
	\end{cases}
	\]
 \end{enumerate}
 \item $\sigma_{i+1}$ is determined as follows:
       \[
       \sigma_{i+1} = 
       \begin{cases}
	\sigma_i[x \mapsto \Se(e)\sigma_i] & \mathrm{if} ~ n_i = (x:=e) \\
	\sigma_i & \mathrm{otherwise}
       \end{cases},
       \]
where $\sigma[x \mapsto v]$ is the same as $\sigma$ except it maps $x$
       to $v$.
\end{enumerate}
\end{definition}

The following theorem is a trivial consequence of the definition.
\begin{theorem}
A run of a CFG is deterministic, that is, for any CFG $G$ and $\sigma_0$
 there exists the unique run of $G$.
\end{theorem}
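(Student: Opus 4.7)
The plan is to argue by induction on the index $i$ that both the node $n_i$ and the store $\sigma_i$ appearing in any run are uniquely determined by $G$ and the initial store $\sigma_0$. Since the definition builds a run step by step from these two pieces of data, once uniqueness at each step is established, uniqueness of the whole sequence follows immediately. Existence is handled in the same induction: at each step there is at least one choice satisfying the clauses of the definition, so the partial run can always be extended until a return statement is reached (at which point the successor relation is empty and the run ends).

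For the base case, clause~1 fixes $\sigma_0$ and clause~2 fixes $n_0$ as the unique start node of $G$, which exists and is unique by the stipulations imposed on CFGs. For the inductive step, assume $n_i$ and $\sigma_i$ are uniquely determined. The successor $n_{i+1}$ is chosen by clause~3: if $n_i$ is an assignment, the CFG definition guarantees exactly one successor, so $n_{i+1}$ is unique; if $n_i$ is a conditional $(\iif~e)$, the CFG has exactly two successors distinguished by the labels $\BT$ and $\BF$, and $\Se(e)\sigma_i$ is a single value in $\{\BT,\BF\}$, so clause~3(b) picks out exactly one of them. The store $\sigma_{i+1}$ is then fixed by clause~4, which is a case split on the syntactic form of $n_i$ whose cases are mutually exclusive and exhaustive; in each case the resulting store is given by a deterministic expression involving $\sigma_i$ and $\Se$, both of which are already uniquely determined.

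The only step requiring any care is the conditional case, where one must use the assumption that the two outgoing edges of an $\iif$ node carry distinct labels $\BT$ and $\BF$ and that $\Se(e)\sigma_i \in \{\BT,\BF\}$, so that the case distinction in clause~3(b) is well-defined and unambiguous. This is guaranteed by the syntactic conditions stated at the end of Section~\ref{sec:cfg} and by the signature of $\Se$ from Definition~\ref{def:store}, so no genuine obstacle arises. The result therefore reduces to unfolding the definition of a run and checking that each clause determines its output uniquely from the inputs provided by the preceding clauses.
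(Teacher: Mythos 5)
Your induction is correct and is exactly the routine unfolding of the definition that the paper has in mind: the paper offers no explicit proof, stating only that the theorem is ``a trivial consequence of the definition,'' and your argument is the natural formalisation of that remark. The one point worth noting is that a run need not terminate (the paper allows infinite runs through loops), so the phrase about extending ``until a return statement is reached'' should be read as extending indefinitely when no return statement is ever reached; this does not affect uniqueness.
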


There are several types of dependence between program statements.
In the following sections, we define control dependence, data dependence
and def-order dependence.

\subsection{Control flow dependence}
To define control dependence, we first define the notion of
\emph{post-domination}.
We add $\pentry$ node and $\pexit$ node to a CFG.
Entry node is the predecessor of the start node and exit node is
the successor of the end node.
Entry node has true edge to the start node and false edge to
exit node.
We call the resulting graph the \emph{augmented CFG} of an original CFG.
Figure \ref{fig:additionalcfg} is the augmented CFG for the graph in
Figure \ref{fig:cfgex}.

A \emph{path} is a sequence of nodes $n_0 n_1 n_2 \dots$ where $\forall
i \ge 0. (n_i, n_{i+1}) \in E$.
The \emph{length} of a path is the length of the sequence.

\begin{definition}[Post-domination]
Let $s$ and $t$ be nodes in a CFG $G$.
We say $t$ \emph{post-dominates} $s$ iff every path from $s$ contains
 $t$, and $t$ \emph{strictly post-dominates} $s$ iff $t$ post-dominates
 $s$ and $s \neq t$.
\end{definition}

\begin{definition}[Control dependence]
Let $s$ and $t$ be nodes in a CFG $G$.
We say $t$ is \emph{control-dependent} on $s$ iff
there exists a path of length greater than 1 from $s$ to $t$ such that
 $t$ post-dominates all nodes on that path except $s$, and $t$ does not
 strictly post-dominate $s$.
\end{definition}

The above definitions are slightly modified from usual post-domination and
control dependence.
The definition of post-domination and control dependence in this paper
are called ``strong forward domination'' and ``weak control dependence''
\cite{podgurski90formal}.
The usual definition is ``$s$ \emph{post-dominates} $t$ if every path from
$t$ \underline{to the exit node} contains $s$''.
Our definition omits the underlined phrase.

\begin{figure}[tb]
\begin{center}
\includegraphics[scale = 0.5]{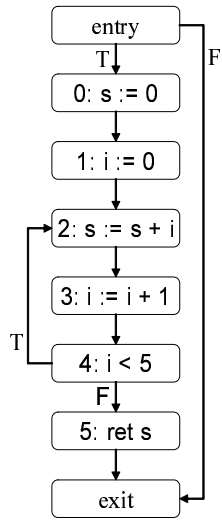}
\end{center}
\caption{A CFG augmented with entry node and exit node.}
\label{fig:additionalcfg}
\end{figure}

The difference between the usual definition and our definition appears
when a CFG contains a loop.
We illustrate this difference using an example CFG in
Fig. \ref{fig:additionalcfg}.
In this CFG, node 5 is not control-dependent on node 4 in the usual
definition since every path from node 4 to the exit node contains node
5, i.e. node 5 post-dominates node 4.
However, in our definition, node 5 is control-dependent on node 4 since
every path from node 4 does not contain node 5.
There exists a path iterating the loop infinitely.
Thus, node 5 does not post-dominate node 4.
Our definition reflects the intuition that whether node 5 is executed or not
 depends on the result of condition at node 4.

We discriminate two types of control dependence; true control dependence
and false control dependence.
Suppose $t$ is control-dependent on $s$. 
By definition, $s$ has two successors i.e. $s$ is a conditional statement.
Let $(s,u)_\BT \in E$ and $(s,v)_\BF \in E$.
If $u$ is post-dominated by $t$, $t$ is \emph{true control-dependent} on
$s$.
If $v$ is post-dominated by $t$, $t$ is \emph{false control-dependent}
on $s$.

We write $\CD(s,t)$ to represent that $t$ is control dependent on $s$.

\subsection{Data dependence}
There are two kinds of data dependence -- loop-independent data
dependence and loop-carried data dependence.
To define these types of dependence, we must define loops formally.
We cannot use the definition of \emph{natural loops} \cite{ASU86} since
we treat irreducible control flow graphs.

\begin{definition}[Loops, back edges]
Let $G$ be a directed graph.
A \emph{loop} is a strongly connected region in $G$.
A \emph{back edge} of a loop is an edge contained in that loop whose
 target node have an incoming edge from the outside of that loop.
\end{definition}

In Fig. \ref{fig:ex4}, the set of nodes $\{1,2,3,4,5,6\}$ comprises a
loop.
Back edges for this loop are $(3,4)$ and $(6,1)$.
In Fig. \ref{fig:loopex2}, the set of nodes $\{1,2,3,4,5\}$ and
$\{2,3,4\}$ comprise loops.
The back edges for $\{1,2,3,4,5\}$ and $\{2,3,4\}$ are respectively
$(5,1)$ and $(4,2)$.

\begin{figure}
\begin{minipage}{0.5\hsize}
\begin{center}
\includegraphics[height = 3cm]{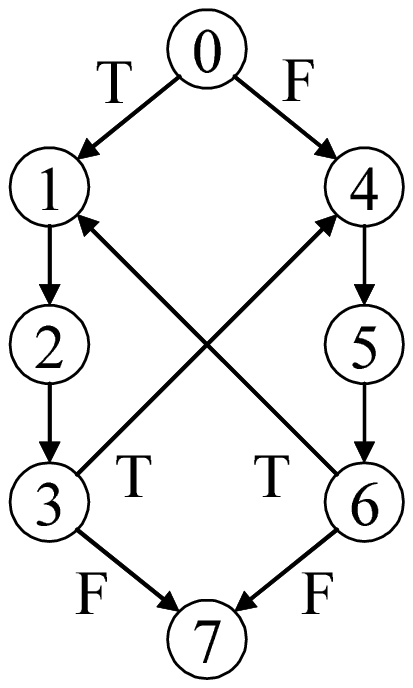}
\end{center}
\caption{An example of a loop.}
\label{fig:ex4}
\end{minipage}
\begin{minipage}{0.5\hsize}
\begin{center}
\includegraphics[height = 3cm]{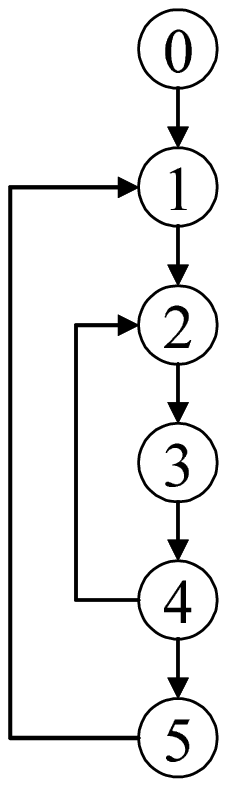}
\end{center}
\caption{An example of a nested loop.}
\label{fig:loopex2}
\end{minipage}
\end{figure}

\begin{definition}[Data dependence] \label{def:dd}
Let $G$ be a CFG. Let $s$ and $t$ be nodes in $G$.
We say \emph{the definition at $s$ reaches $t$} if there exists a
 variable $w$ such that $s$ defines $w$ and $t$ uses $w$ and there
 exists a path whose length is more than 1 in $G$ from $s$ to $t$ such
 that no node except $s$ on that path defines $w$.
We call such a path a \emph{reaching path} from $s$ to $t$.
We say $t$ is \emph{data-dependent} on $s$ if the definition at $s$
 reaches $t$.
There is a \emph{loop-independent data dependence} from $s$ to $t$ if
 $t$ is data dependent on $s$, and i) both $s$ and $t$ are not
 contained in the same loop or ii) if both $s$ and $t$ are contained in
 the same loop, there is a reaching path from $s$ to $t$ that does not
 contain any back edge of the loop.
There is a \emph{loop-carried data dependence} from $s$ to $t$ if both
 $s$ and $t$ are contained in the same loop, and there exists a reaching
 path from $s$ to $t$ that contains a back edge of the loop and
 there is no loop-independent data dependence from $s$ to $t$.
\end{definition}

We write $\LIDD(s,t)$ and $\LCDD(s,t)$ to represent there is a
loop-independent data dependence from $s$ to $t$ and there is a
loop-carried data dependence from $s$ to $t$, respectively.

\subsection{Def-order dependence}
Finally we define def-order dependence.

\begin{definition}[Def-order dependence]
Let $G$ be a CFG. Let $s$ and $t$ be nodes in $G$.
We say $t$ is \emph{def-order-dependent} on $s$ if there exists a node
 $u$ which is loop-independently data dependent on both $s$ and $t$, and
 both $s$ and $t$ define the same variable, and i) there exists a path
 from $s$ to $t$ and both $s$ and $t$ are not contained in the same
 loop, or ii) both $s$ and $t$ are contained in the same loop and $t$ is
 reachable from $s$ without passing any back edge of the loop, or iii)
 if both $s$ and $t$ are contained in the same loop, and $s$ is
 reachable from $t$ and $t$ is also reachable from $s$, then such paths
 must contain back edges of the loop.
\end{definition}

Note that in the case of iii), there is also a def-order dependence from
$t$ to $s$.
We write $\DefOrd(s,t)$ to represent $t$ is def-order-dependent on $s$.

\section{Program dependence graph (PDG)} \label{sec:pdg}
The definition of the PDG in this paper depends on the one by Selke
\cite{selke89rewriting}.
A PDG is a quintaple $(N,C,F,L,D)$ where $N$ is a set of nodes, and $C$,
$F$, $L$ and $D$ sets of edges, that is, subsets of $N \times N$.
The sets of edges $C$, $F$, $L$ and $D$ respectively represent four
types of dependence: control flow dependence, loop-independent data
dependence, loop-carried data dependence and def-order dependence.
$C$-edges are divided into true edges and false edges.

The subgraph $(N,C)$ is called the \emph{control dependence graph (CDG)}
of a PDG $(N,C,F,L,D)$.

Nodes are labeled with statements as in CFGs. Statements are one of the
following: an assignment statement ($x:=e$), a conditional statement
($\iif~e$) or a return statement ($\iret~x$).

If a node is labeled with an assignment statement, then there is no
$C$-edges outgoing from that state.
If a node is labeled with a conditional statement, then there is no
outgoing $F$, $L$ and $D$-edges from that node.
If $(p,q) \in F$ or $(p,q) \in L$, there is a variable $w$ that $p$
defines and $q$ uses.
If $(p,q) \in D$ then there is a variable $w$ such that both $p$ and $q$
define it and there is a node $u$ such that $(p,u) \in F$ and $(q,u) \in
F$.
Nodes labeled with $\iret$ statements do not have any outgoing edges.

\subsection{Construction of the PDG from a CFG}
Let $G=(N,E)$ be a CFG.
The PDG corresponding to $G$ is a graph $(N \cup \{\pentry\},C,F,L,D)$,
where, supposing $s$ and $t$ range over $N \cup \{\pentry\}$,
\begin{align*}
C &= \{(s,t)~|~\CD(s,t) \} \\
F &= \{(s,t)~|~\LIDD(s,t) \} \\
L &= \{(s,t)~|~\LCDD(s,t) \} \\
D &= \{(s,t)~|~\DefOrd(s,t) \}
\end{align*}

Note that the node $\pentry$ is the root node of the CDG $(N\cup
\{\pentry\}, C)$.

Hereafter, we write $c(s,t)$ for an element of $C$, $f(s,t)$ for an
element of $F$, $l(s,t)$ for an element of $L$ and $d(s,t)$ for an
element of $D$.
Furthermore, we discriminate true control dependence and false control
dependence in $C$ by writing $ct(s,t)$ and $cf(s,t)$.
We also discriminate elements in $F$ and $L$ by explicating a related
variable like $f_x(s,t)$ and $l_x(s,t)$.
Fig. \ref{fig:pdgex} presents the PDGand its CDG corresponding to the
CFG in Fig. \ref{fig:cfgex}.

\begin{figure*}[t]
\begin{minipage}{0.5\hsize}
\begin{center}
\includegraphics[scale=0.4]{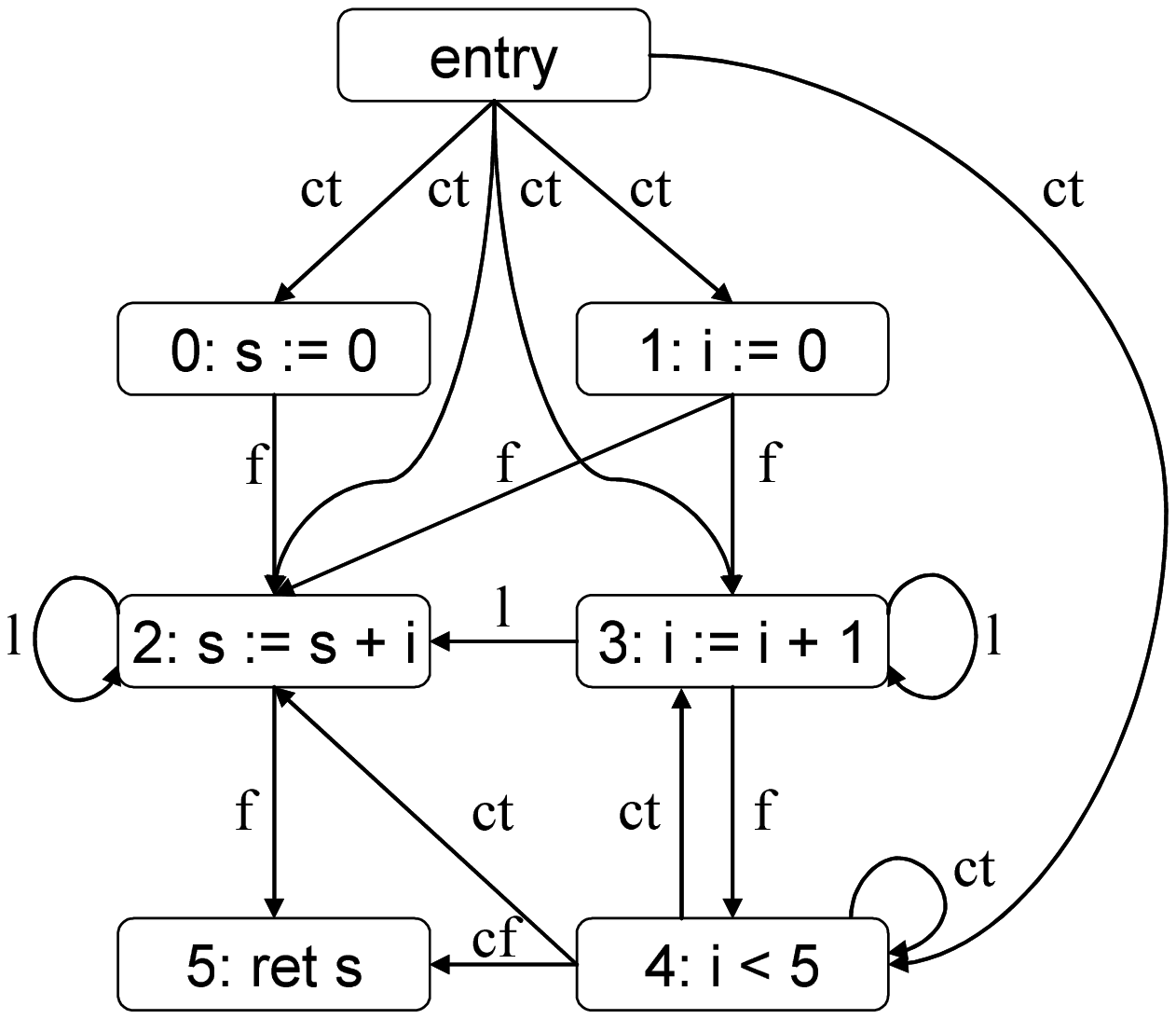}
\end{center}
\end{minipage}
\begin{minipage}{0.5\hsize}
\begin{center}
\includegraphics[scale=0.4]{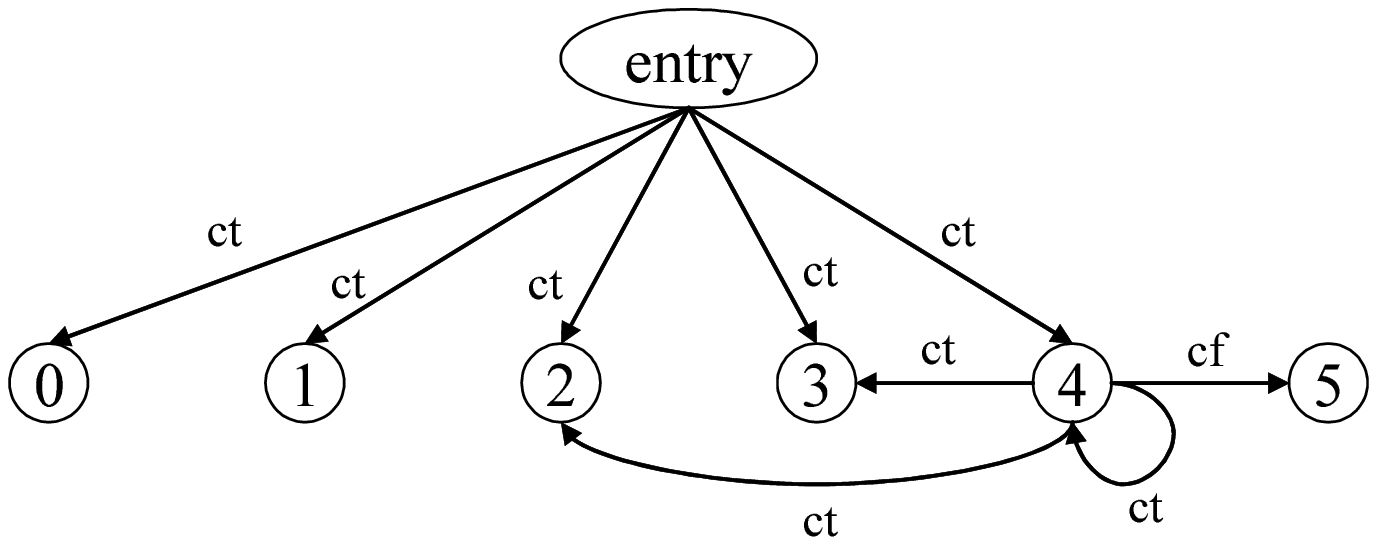}
\end{center}
\end{minipage}
\caption{The PDG and CDG corresponding to the CFG in Fig. \ref{fig:cfgex}}
\label{fig:pdgex}
\end{figure*}

\section{An operational semantics of the PDG} \label{sec:pdgsemantics}
In this section, we present an operational semantics of the PDG.
The basic idea is to define states in runs of PDGs as follows.
\[
 \state = (\avail, \econf)
\]

$\avail$ is a function that takes a node and a variable
and return a value.
Recall that $\store$ is a set of functions taking a variable and return
a value.
The difference between $\store$ and $\avail$ is that $\store$ is global
bindings of variables while $\avail$ is local bindings of variables for
each node.
We do not need global stores since an execution of a statement affects
only nodes that are dependent on it.

$\econf$ is a function that takes an edge and return the state of
the edge.
Functions in $\econf$ indicate whether nodes are executable or not.
For example, if node $n$ is an $\iif$ statement and the expression is
evaluated to $\BT$ then its outgoing $\ct$-edges are ``activated'' and
outgoing $\cf$-edges are ``inactivated.''
Whether a node is executable or not is determined by the state of its
incoming edges.
$\econf$ is used to represent a configuration of edges in executions
of PDGs.

To define the operational semantics of the PDG, we introduce the
following notions on PDGs.

\subsection{Preliminary}
In order to define the operational semantics of the PDG, we define
several notions related to it.

\begin{definition}[Looping edges]
Let $G=(N,C,F,L,D)$ be a PDG and $G_C = (N,C)$ be the CDG of $G$.
Let $R$ be a loop in $G_C$.
Suppose $\ct(p,q)$ (resp. $\cf(p,q)$) is a back edge of $R$.
The \emph{looping edges} of $R$ are all $\ct$-edges (resp. $\cf$-edges)
 from $p$ whose destinations have predecessors other than $p$.
\end{definition}

Note that the notions of loops and back edges are not restricted to
the CFG.
In the CDG in Figure \ref{fig:pdgex}, there is a loop which consists of
only node 4.
Its back edge is $\ct(4,4)$.
Therefore, its looping edges are $\ct(4,2), \ct(4,3)$ and $\ct(4,4)$.

We explain the correspondence between loops in a CFG and those in the
CDG.
Before that, we define \emph{iteration statements} of loops in the CFG.

\begin{definition} [Iteration statements] \label{def:iteration_statement}
An \emph{iteration statement} of a loop in a CFG is an $\iif$ statement
 in that loop that has outgoing edges to nodes both inside and outside
 of that loop.
It is trivial that for any loop in a CFG, there is at least one
 iteration statement.
\end{definition}

Compare the CFG in Fig. \ref{fig:ex4} with its corresponding CDG in
Fig. \ref{fig:subgex}.
We see that there is a loop $\{1,2,3,4,5,6\}$ in Fig. \ref{fig:ex4}
whose back edges are $(3,4)$ and $(6,1)$.
Therefore, node 3 and 6 are iteration statements in this program.
Meanwhile, the loop existing in Fig. \ref{fig:subgex} is $\{3,6\}$.
Thus, we can see that iteration statements in a CFG will be nodes in
a loop in the corresponding CDG.
Furthermore, the back edges of the loop $\{3,6\}$ in
Fig. \ref{fig:subgex} are $\ct(3,6)$ and $\ct(6,3)$.
Therefore, the looping edges are $\ct(3,4)$, $\ct(3,5)$, $\ct(3,6)$,
$\ct(6,1)$, $\ct(6,2)$ and $\ct(6,3)$.
If a node in a loop in a CDG is an iteration statement, then the
looping edges outgoing from that node are $C$-edges to the nodes that
will be executed when the loop iterates.

Intuitively, looping edges represent control dependence to the
statements that will be executed in the next step of the loop.

\begin{figure}[tb]
\begin{center}
\includegraphics[height=5cm]{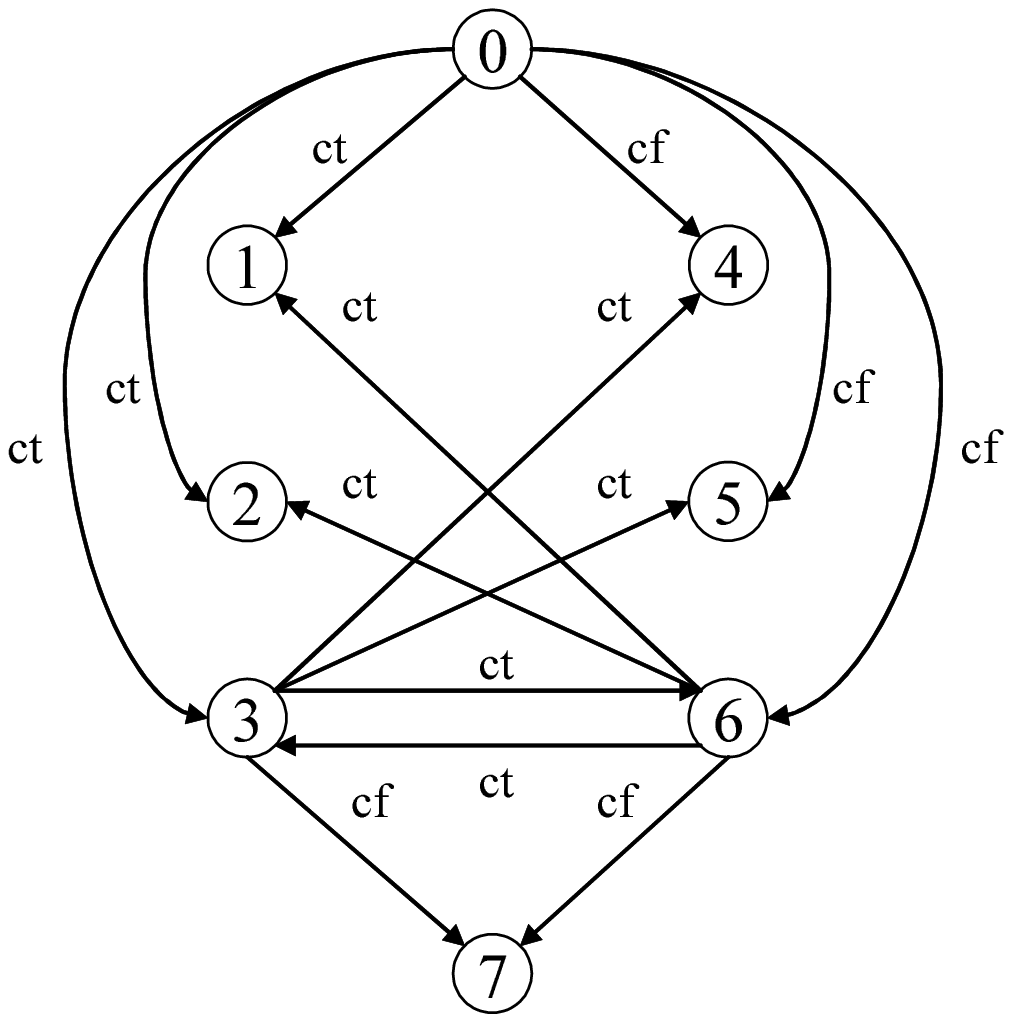}
\begin{align*}
G(0) &\quad \{1,2,3,4,5,6,7\} \\
G_T(0) &\quad \{1,2,3,7\} \\
G_F(0) &\quad \{4,5,6,7\} \\
G(3) &\quad \{4,5,6,7\} \\
G_T(3) &\quad \{4,5,6\} \\
G_F(3) &\quad \{7\} \\
G(6) &\quad \{1,2,3,7\} \\
G_T(6) &\quad \{1,2,3\} \\
G_F(6) &\quad \{7\}
\end{align*}
\end{center}
\caption{The CDG corresponding to Fig. \ref{fig:ex4} and its subgraphs.}
\label{fig:subgex}
\end{figure}

\begin{definition}
 Let $G=(N,C,F,L,D)$ be a PDG.
We define $\widehat{C}$ as the set obtained by removing all looping
 edges from $C$.
\end{definition}

\begin{definition}
 Let $p$ and $q$ be nodes in a PDG $(N,C,F,L,D)$.
We say $q$ is \emph{$C$-reachable} from $p$ if $q$ is reachable from $q$
 by $C$-edges.
We say $q$ is \emph{$\widehat{C}$-reachable} from $p$ if $q$ is
 reachable from $q$ by $\widehat{C}$-edges.
We respectively say a $C$-successor, a $ct$-successor and a
 $cf$-successor for a successor by a $C$-edge, a $ct$-edge and a
 $cf$-edge.
\end{definition}

\begin{definition}[Subgraphs] \label{def:subgraphs}
 Let $G = (N,C,F,L,D)$ be a PDG and $p \in N$.
$G(p) = (N',C',F',L',D')$ is the \emph{subgraph} of node $p$
 defined as follows:
\begin{align*}
N' &= \{n ~|~ \exists q. c(p,q) \in C \wedge n \textrm{ is
 $\widehat{C}$-reachable from $q$} \} \\
C' &= C \cap (N' \times N') \\
F' &= F \cap (N' \times N') \\
L' &= L \cap (N' \times N') \\
D' &= D \cap (N' \times N')
\end{align*}
$G(p)$ is the subgraph that consists of nodes reachable from $p$ by
 $C$-edges that are not looping edges.
Note that $c(p,q)$ itself may be a looping edge.
Similarly, we define $G_T(p)$ (resp. $G_F(p)$) as the subgraph consisting
 of the nodes $\widehat{C}$-reachable from $ct$-successors
 (resp. $cf$-successors) of $p$.
Formally, $G_T(p)=(N',C',F',L',D')$ is defined as:
\begin{align*}
N' &= \{n ~|~ \exists q. \ct(p,q) \in C \wedge n \textrm{ is
 $\widehat{C}$-reachable from $q$} \} \\
C' &= C \cap (N' \times N') \\
F' &= F \cap (N' \times N') \\
L' &= L \cap (N' \times N') \\
D' &= D \cap (N' \times N')
\end{align*}
The definition of $G_F(p)$ is obtained from replacing the condition
 $\ct(p,q) \in C$ with $\cf(p,q) \in C$.
We further define $G^*(p) = (N',C',F',L',D')$ as follows:
\begin{align*}
N' &= \{n ~|~ \exists q. c(p,q) \in C \wedge n \textrm{ is $C$-reachable
 from $q$} \} \\
C' &= C \cap (N' \times N') \\
F' &= F \cap (N' \times N') \\
L' &= L \cap (N' \times N') \\
D' &= D \cap (N' \times N')
\end{align*}
We can similarly define $G_T^*(p)$ and $G_F^*(p)$.
\end{definition}

Intuitively, $G(p)$ consists of such statements that are at the same
iteration level in the corresponding CFG.
$G_T(p)$ consists of statements that will be executed when $p$ is
evaluated to $\BT$, and $G_F(p)$ consists of statements that will be
executed when $p$ is evaluated to $\BF$.

Fig. \ref{fig:subgex} shows subgraphs of the CDG in Fig. \ref{fig:ex4},
where looping edges are $\ct(3,4)$, $\ct(3,5)$, $\ct(3,6)$, $\ct(6,1)$,
$\ct(6,2)$ and $\ct(6,3)$.

Using above notions, we define the operational semantics of PDGs.
\begin{definition}
An \emph{avail} and an \emph{econf} are respectively members of the
 following sets of functions $\Avail$ and $\Econf$:
\begin{gather*}
\Avail = N \times \Var \longrightarrow \Val \\
\Econf = C \oplus F \oplus L \oplus D \longrightarrow \{\chk, \unchk,
 \act \}.
\end{gather*}
Here $\oplus$ is the direct sum.
An evaluation function $\Se$ is defined as
\[
 \Se : N \times \Expr \longrightarrow (\Avail \longrightarrow \Val).
\]
A \emph{state} is a member of $\State= \Avail \times \Econf$.
\end{definition}

Although we use the same symbol $\Se$ as the evaluation function in
definition \ref{def:store}, they are easily discriniminated by the
signatures.


\begin{definition}[Executable nodes]
 We first introduce the following predicates.
Suppose $s=(\av,\ec)$, $\av \in \Avail$, $\ec \in \Econf$ and $n$ is a
 node.
\[
\begin{array}{lcl}
\pcondC(s,n) & \stackrel{\mathrm{def}}{=} & \exists c(p,n) \in
 C. \ec(c(p,n)) = \act \\
& & \wedge (\forall q \in G^*(n).q \neq n \Rightarrow \forall c(r,q) \in
 C. \ec(c(r,q)) \neq \act) \\
\pcondF(s,n) & \stackrel{\mathrm{def}}{=} & \forall f(p,n) \in
 F. \ec(f(p,n)) = \chk \\
\pcondL(s,n) & \stackrel{\mathrm{def}}{=} & \forall l(n,p) \in
 L. n\neq p \Rightarrow \ec(l(n,p)) = \chk \\
\pcondD(s,n) & \stackrel{\mathrm{def}}{=} & \forall d(p,n) \in
 D. \ec(d(p,n)) = \chk
\end{array}
\]
We write $\pcondCFLD(s,n)$ for $\pcondC(s,n) \wedge \pcondF(s,n) \wedge
 \pcondL(s,n) \wedge \pcondD(s,n)$.
We define function $\Next : \State \longrightarrow 2^N$ as:
\[
 \Next(s) = \{n ~|~ \pcondCFLD(s,n)\}.
\]
$\Next(s)$ represent the \emph{executable nodes} at state $s$.
\end{definition}

\begin{definition}[Update function of avails] \label{def:udav}
 We define function $\udav : N \times \Avail \longrightarrow \Avail$ as
 follows:
\[
\udav(n, \av) =
\begin{cases}
av[(p,x) \mapsto \Se(n, e)\av : (n,p) \in F \oplus L] \quad \mathrm{if }~ n = (x:=e) \\
av \quad \textnormal{otherwise}
\end{cases},
\]
where $\av[(p,x) \mapsto v : (n,p) \in F \oplus L]$ is the same as $\av$ except it maps
 $(p,x)$ to $v$ for each $p$ such that $(n,p) \in F \oplus L$.
\end{definition}

\begin{definition}[Update function of econfs] \label{def:udec}
 We define function $\udec : N \times \State \longrightarrow \Econf$ as
 follows:
\[
 \udec(n, (\av, \ec)) = \ec',
\]
where $\ec'$ is determined according to $n$, $\av$ and $\ec$ as follows.
\begin{itemize}
\item $n$ is any type.
\begin{itemize}
\item
$\ec'(c(p,n)) := \unchk$, if $c(p,n) \in C$.
\item
$\ec'(l(p,n)) := \chk$, if $l(p,n) \in L$.
\end{itemize}

\item $n = (x:=e)$.
\begin{itemize}
\item
$\ec'((n,p)) := \chk$, if $(n,p) \in F \oplus D$.
\end{itemize}
\item $n = (\iif~e)$ and $\Se(n, e)\av = \BT$.
\begin{itemize}
\item
$\ec'(ct(n, p)) := \act$, if $\ct(n,p) \in C$.
\item
$\ec'((q,r)) := \unchk$, if $q \in G_T(n) - \{n\}$ and $(q,r) \in
     F \oplus D$.
\item
$\ec'(l(r,q)) := \unchk$, if $q \in G_T(n) - \{n\}$, $r \in G_T(n)$ and
     $l(r,q) \in L$.
\item
$\ec'((q,r)) := \chk$, if $q \in G_F(n) - G_T(n)$ and $(q,r) \in F
     \oplus D$. 
\item
$\ec'(l(q,r)) := \unchk$, if $q \in G_F(n) - G_T(n)$, $r \not \in G_F(n)
     - G_T(n)$ and $l(q,r) \in L$.
\item
$\ec'(l(r,q)) := \chk$, if $q \in G_F(n) - G_T(n)$, $r \not \in G_F(n)
     - G_T(n)$ and $l(r,q) \in L$.
\end{itemize}
\item $n = (\iif~e)$ and $\Se(n, e)\av = \BF$．
This case is the same as the case $\Se(n, e)\av = \BT$.
Replace $\ct$ with $\cf$ and $G_T(n)$ with $G_F(n)$ in the definition.
\end{itemize}
$\ec$ and $\ec'$ coincides on the edges that are not modified according
 to the rule above.
\end{definition}

\begin{definition}[Operational semantics of the PDG] \label{def:pdgsemantics}
 Let $G=(N,C,F,L,D)$ be a PDG.
A \emph{run} of $G$ is a sequence of states $s_0 \stackrel{n_0}{\to} s_1
 \stackrel{n_1}{\to} \dots$, where $s_i \in \State$ and $n_i \in N$.
Let $s_i = (\av_i, \ec_i)$, $\av_0$ be a given initial avail and $\ec_0$
 be the initial econf defined as follows:
\begin{gather*}
\forall (p,q) \in C. \ec_0((p,q)) = 
\begin{cases}
\act & \textnormal{if}~p = \pentry \\
\unchk & \textnormal{otherwise}
\end{cases}, \\
\forall (p,q) \in F \oplus L \oplus D. \ec_0((p,q)) = \unchk
\end{gather*}
Then $n_i$ and $s_{i+1}$ are determined as follows:
\begin{align*}
s_i \stackrel{n_i}{\to} s_{i+1}
& \stackrel{\textrm{def}}{\Longleftrightarrow} n_i \in \Next(s_i) \wedge
 \\
& \av' = \udav(n_i,\av_i) \wedge \ec' = \udec(n_i,s_i).
\end{align*}
A run of a PDG is either finite or infinite.
If a run is finite and the
 last state is $s$, then $\Next(s)=\emptyset$.
\end{definition}

\subsection{Deterministic PDG} \label{subsec:dPDG}
A PDG does not necessarily have a unique run even if it starts with the
same initial state, because it can express programs more expressive than
usual sequential ones.
However, we can prove that for a specific class of PDGs, called
deterministic PDGs (dPDGs), it has runs that end with the same state
from the same initial state, if they are finite.
In this section, we define the class of dPDGs and prove some properties
of it.

We first introduce the notion of the \emph{minimal common ancestor (mca)}.

\begin{definition}[Minimal common ancestors]
 Let $G=(N,C,F,L,D)$ be a PDG and $p, q \in N$.
The \emph{minimal common ancestors} of nodes $p$ and $q$ are the last
 nodes of the longest common prefix of the acyclic paths from $\pentry$
 to $p$ and $q$ in $(N,C)$.
We write $\mca(p,q)$ for the set of minimal common ancestors of $p$ and
 $q$.
\end{definition}

\begin{definition}[Deterministic PDGs] \label{def:dPDG}
A PDG $G=(N,C,F,L,D)$ is a deterministic PDG if it satisfies the
 following three conditions where the term $\mathit{LP}$ denotes the set
 of loops in the CDG $(N,C)$:
\begin{enumerate}
\item $(p,n) \in C \wedge (q,n) \in C$ implies $p \in G^*(q)$ or $q \in
      G^*(p)$ or $\forall r \in \mca(p,q). \forall Q \in
      \{\BT,\BF\}. \{p,q\} \not \subseteq G_Q^*(r)$.
\item $f_x(p,u) \in F \wedge f_x(q,u) \in F \wedge \exists r \in
      \mca(p,q). \exists Q \in \{\BT,\BF\}. \{p,q\} \subseteq G^*_Q(r)$
      implies $d(p,q) \in D \vee d(q,p) \in D$．
\item $(p,q) \in F \oplus D$ implies $\forall R \in \mathit{LP}. \forall
      r \in R. p \in G(r) \Rightarrow q \in G^*(r)$．
\end{enumerate}
\end{definition}

Condition 1 states that if a node $n$ is control-dependent on both $p$
and $q$, then $p$ and $q$ cannot be executable simultaneously.
Fig. \ref{fig:dpdgcond1cdg} illustrates this condition.
The outgoing edges from any minimal common ancestor $r$ of $p$ and $q$
must have different labels between the one that leads to $p$ and the one
to $q$. Otherwise either $p$ must be controlled by $q$ or $q$ must be
controlled by $p$.

\begin{figure}[tb]
\begin{center}
\includegraphics[scale=0.5]{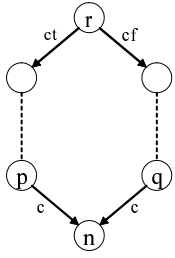}
\caption{Condition 1 of dPDG.}
\label{fig:dpdgcond1cdg}
\end{center}
\end{figure}

Condition 2 states that if $p$ and $q$ define the same variable $x$,
have a common $F$-successor and can be executable simultaneously, then
there must be a def-order dependence between $p$ and $q$.
This is illustrated in Fig. \ref{fig:dpdgcond3pdg}.
In this case, $p$ and $q$ can be simultaneously executable.
If there is no def-order dependence between $p$ and $q$, the value of
the variable at node $n$ can be different depending on which node
(i.e., $p$ or $q$) is executed first.

\begin{figure}[tb]
\begin{center}
\includegraphics[scale=0.5]{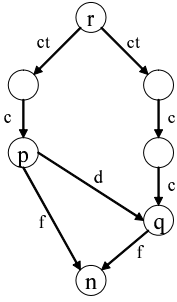}
\caption{Condition 2 of dPDG.}
\label{fig:dpdgcond3pdg}
\end{center}
\end{figure}

Condition 3 states that if there is data dependence from $p$ to $q$ and
$p$ is control-dependent on a node $r$ in some loop $R$, then $q$ must
be control-dependent on some node in that loop.
Fig. \ref{fig:dpdgcond4pdg} depicts the condition, where $r'$
corresponds to 'some node' in the previous sentence (note that $r' \in
G^*(r)$ because both are contained in the same loop).
In Fig. \ref{fig:dpdgcond4pdg}, if node $q$ is not contained in $G(r')$
then whether $q$ satisfies $pcondC$ can be determined independently on a
condition of $\iif$ statements in $R$.
Meanwhile, since $p$ is contained in $G(r)$, executability of $p$
depends on $r$.
Since $R$ is a loop, $r$ can be repeated, which implies that $p$ can be
repeated many times and the value of a variable used in $q$ may be
changed many times.
However, as stated above, $q$ can satisfy $\pcondC$ independently on the
loop.
Therefore, once $p$ is executed and data dependence to $q$ is satisfied,
$q$ can be executable.
Then the result of execution of the PDG depends on how many times $p$ is
repeated when $q$ is executed.

\begin{figure}[tb]
\begin{center}
\includegraphics[scale=0.5]{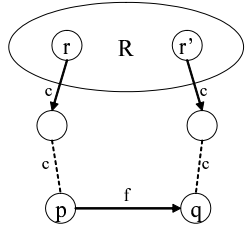}
\caption{Condition 3 of dPDG}
\label{fig:dpdgcond4pdg}
\end{center}
\end{figure}

Actually, there are PDGs constructed from CFGs that are not dPDGs.
CFGs like the one in Fig. \ref{fig:loopex4} do not satisfy condition
1 of dPDGs, where a back edge from some $\iif$ statement comes into a
branch.

\begin{figure}[tb]
\begin{center}
\includegraphics[scale=0.5]{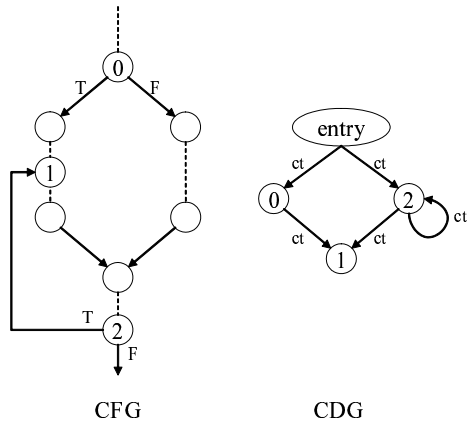}
\end{center}
\caption{An example CFG whose corresponding PDG is not dPDG.}
\label{fig:loopex4}
\end{figure}

As for Condition 2 and 3, we can prove that for all CFGs, their PDGs
satisfy them.

\begin{theorem}
A PDG constructed from a CFG satisfies Condition 2 and 3 of Definition
 \ref{def:dPDG}.
\end{theorem}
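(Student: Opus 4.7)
The plan is to handle the two conditions separately, in each case translating the PDG-level hypothesis into CFG-level statements about paths and loops, and then unfolding the definitions of $\LIDD$, $\LCDD$, and $\DefOrd$.

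For Condition 2, the hypothesis gives nodes $p, q, u$ with $f_x(p,u), f_x(q,u) \in F$ together with a minimal common ancestor $r \in \mca(p,q)$ and a branch $Q \in \{\BT,\BF\}$ such that $\{p,q\} \subseteq G_Q^*(r)$, and the goal is to show $d(p,q) \in D$ or $d(q,p) \in D$. Unfolding $\DefOrd$, the ``defines the same variable'' clause (both define $x$) and the mediator clause (the common LIDD target $u$) are immediate from the hypothesis, so what remains is to verify one of the three path clauses (i)--(iii). I would fix the reaching paths $\pi_p : p \rightsquigarrow u$ and $\pi_q : q \rightsquigarrow u$ granted by LIDD, translate the CDG condition into a CFG statement (namely, that $r$ corresponds to an $\iif$ whose $Q$-branch reaches both $p$ and $q$), and then split on whether $p$ and $q$ share a CFG loop. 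In each case the plan is to glue pieces of $\pi_p$ and $\pi_q$ together with the common prefix descending from $r$ to either build a direct CFG path between $p$ and $q$ or to establish the two-way reachability-with-back-edges required by clause (iii).

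For Condition 3, I would proceed by contrapositive: suppose $(p,q) \in F \oplus D$ and $p \in G(r)$ for some $r$ in a CDG loop $R$, and assume for contradiction that $q \notin G^*(r)$. The plan is to exploit the correspondence between CDG loops and CFG loops described after Definition~\ref{def:iteration_statement}: $R$ is made up of iteration statements of a single CFG loop $L$, and $p \in G(r)$ forces $p$ to lie in the body of $L$ at the same iteration level as $r$. For the $F$ case I unfold $\LIDD(p,q)$: it supplies a reaching path in the CFG that avoids back edges of any common loop. If $q$ also lies in $L$, then $q$ is control-dependent on some iteration statement $r' \in R$, whence $q \in G^*(r') \subseteq G^*(r)$; if $q$ lies outside $L$, the reaching path must leave $L$ through an iteration statement's exit branch, which again places $q$ under the control dependence of some element of $R$. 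Either outcome contradicts $q \notin G^*(r)$. The $D$ case reduces to the $F$ case by unfolding $\DefOrd$ to extract a mediating $u$ together with LIDD edges $f(p,u)$ and $f(q,u)$, and then rerunning the path argument on $p$ and $q$ directly using the clause of $\DefOrd$ that is in force.

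The main obstacle will be the case analysis in Condition 2, in particular excluding the ``diamond'' configuration in which $\pi_p$ and $\pi_q$ are disjoint parallel subpaths from $r$ to $u$ with no CFG path between $p$ and $q$. The $G_Q^*$ hypothesis is what should rule this out: in a genuine diamond, $p$ and $q$ would be control-dependent on distinct successors of $r$ on opposite branches (contradicting $\{p,q\} \subseteq G_Q^*(r)$), or one of them would transitively control the other and thereby sit inside $G^*$ of the other. A secondary subtlety throughout is that ``same loop'' means different things in the CDG and in the CFG, so I must keep explicit track of which graph I am reasoning in at every step.
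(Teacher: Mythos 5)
Your proposal is correct and follows essentially the same route as the paper's proof: for Condition 2 it reduces the claim to showing that the mca/same-branch hypothesis forces one of $p,q$ to be CFG-reachable from the other (the paper asserts this directly; you correctly identify the ``diamond'' as the only obstruction and note that it would relocate the minimal common ancestor to the inner branch point, violating $\{p,q\}\subseteq G_Q^*(r)$), and for Condition 3 it uses the correspondence between CDG loops and CFG iteration statements together with the fact that any node reachable from a CFG loop lies in $G^*$ of its iteration statements, which is exactly the content of the paper's Lemmas \ref{lemma:CFGloopA} and \ref{lemma:CFGloopB}. The only cosmetic differences are that you argue Condition 3 by contradiction and split the $F$ and $D$ cases, whereas the paper handles both uniformly by extracting CFG reachability of $q$ from $p$ directly from $(p,q)\in F\oplus D$.
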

\begin{proof} We prove the claim separately.
 \begin{description}
  \item[Condition 2] In the corresponding PDG, suppose that for
	     statements $p, q$ and $u$ and variable $x$ we have
	     $f_x(p,u) \in F \wedge f_x(q,u) \in F \wedge \exists r \in
	     \mca(p,q). \exists Q \in \{\BT,\BF\}. \{p,q\} \subseteq
	     G^*_Q(r)$.
By the definition of loop-independent data dependence, $u$ is reachable
	     from both $p$ and $q$ in the CFG.
By the definition of control dependence, both $p$ and $q$ are reachable
	     from a $\BT$-successor of $r$ or a $\BF$-successor of $r$.
Since $r$ is a mca of $p$ and $q$, $p$ is reachable from $q$ or $q$ is
	     reachable from $p$ in the CFG.
Then, by the definition of def-order dependence, we have $\DefOrd(p,q)$
	     or $\DefOrd(q,p)$.
  \item[Condition 3] In the corresponding PDG, suppose that for
	     statements $p$ and $q$ we have $(p,q) \in F \oplus D$, and
	     for a node $r$ in some loop $R$ in the CDG, we have $p \in
	     G(r)$.
By Lemma \ref{lemma:CFGloopA} (below), a node in a loop in a CDG is an
	     iteration statement in a loop in the corresponding CFG.
Therefore, $r$ is an iteration statement in the CFG.
Since $(p,q) \in F \oplus D$, $q$ is reachable from $p$ in the CFG.
Moreover since $p \in G(r)$, $p$ is reachable from $r$ in the CFG.
Thus, $q$ is also reachable from $r$ in the CFG.
Since $r$ is an iteration statement of some loop in the CFG, by Lemma
	     \ref{lemma:CFGloopB} (below) we have $q \in G^*(r)$.
 \end{description}
\end{proof}

\begin{lemma} \label{lemma:CFGloopA}
 Let $P$ be a CFG and $G$ be the CDG of the corresponding PDG.
A node in a loop in $G$ is an iteration statement in a loop in $P$.
\end{lemma}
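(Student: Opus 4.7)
My plan is to produce, for any node $n$ lying on a cycle in the CDG $G$, an explicit CFG loop $R'$ of which $n$ is an iteration statement. The argument proceeds in three steps.

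First, I would argue that $n$ must be an $\iif$ statement. Since $n$ lies on a cycle in the CDG, $n$ has at least one outgoing $C$-edge, and by the PDG-from-CFG construction only $\iif$ nodes originate $C$-edges (assignment nodes have a single CFG successor, so they cannot witness control dependence, and $\iret$ nodes have no CFG successors at all).

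Second, I would lift the CDG cycle $n = n_0, n_1, \ldots, n_k = n$ to a closed walk in the CFG. By the definition of control dependence, each $C$-edge $(n_i,n_{i+1})$ comes with a witness CFG path $\pi_i$ of length strictly greater than one from $n_i$ to $n_{i+1}$. Concatenating $\pi_0\pi_1\cdots\pi_{k-1}$ gives a closed CFG walk through $n$, from which standard cycle extraction yields a simple CFG cycle $R^{*}\;=\;n, x_1, x_2, \ldots, x_m, n$ with the $x_i$ pairwise distinct and each different from $n$. The node set of $R^{*}$ is a strongly connected subgraph of the CFG, hence a CFG loop containing $n$.

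Third, I would select a CFG loop in which exactly one successor of $n$ appears. Let $u, v$ be the two CFG successors of $n$, and without loss of generality $x_1 = u$ (so $R^{*}$ exits $n$ through $u$). If $v \notin \{x_1, \ldots, x_m\}$, take $R' := R^{*}$; then $u \in R'$ and $v \notin R'$, so $n$ is an iteration statement of $R'$. Otherwise $v = x_j$ for some $j \geq 2$ (since $x_1 = u \neq v$), and I instead take the shortcut cycle $R' := n, x_j, x_{j+1}, \ldots, x_m, n$. This is a valid simple cycle because the CFG edge $n \to v = x_j$ exists (as $v$ is a successor of $n$), the remaining edges are inherited from $R^{*}$, and because $R^{*}$ was simple the retained suffix $x_j, \ldots, x_m$ cannot contain $u = x_1$. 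Hence $u \notin R'$ and $v \in R'$, so again $n$ is an iteration statement of $R'$.

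The main obstacle is foreseeing and handling the case in which the lifted CFG cycle through $n$ happens to pass through both CFG successors of $n$, so that $R^{*}$ by itself does not witness the iteration-statement property. Working with a simple cycle rather than an arbitrary closed walk is precisely what makes the shortcut construction succeed: simplicity guarantees that $u$ cannot reappear in the retained suffix, so dropping the $u$-prefix and splicing in the direct edge $n \to v$ produces a legitimate loop for which $n$ qualifies as an iteration statement.
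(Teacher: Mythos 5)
Your proof is correct, and it takes a genuinely different route from the paper's. The paper argues by propagation: it first maps the whole CDG loop into a single CFG loop $R$ (mutual $C$-reachability implies mutual CFG-reachability), invokes the existence of one iteration statement $r$ of $R$, and then pushes the iteration-statement property backwards along the $C$-edges of the CDG loop, using the fact that a $C$-predecessor of an iteration statement is not post-dominated by it and hence must itself branch both into and out of $R$. You instead give a direct per-node construction: lift the CDG cycle through $n$ to a closed CFG walk via the witness paths supplied by the definition of control dependence, extract a simple CFG cycle through $n$, and use the shortcut trick to force exactly one CFG successor of $n$ onto the cycle. What your version buys is independence from the paper's unjustified step ``let $s \in R'$ be a statement such that $(s,r) \in C$'' (nothing in the paper guarantees that the chosen iteration statement has a $C$-predecessor inside the CDG loop), and an explicit witness loop; the price is that you obtain a different loop for each node rather than one common CFG loop, which is still sufficient for every downstream use of the lemma (the Condition 3 argument and Lemma \ref{lemma:CFGloopB} only need ``some loop''). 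Two small caveats: your construction relies on the paper's convention that \emph{any} strongly connected region counts as a loop, so that a simple cycle qualifies (the nested-loop example in Fig.~\ref{fig:loopex2} confirms this reading); and the degenerate case in which one CFG successor of $n$ is $n$ itself falls through both branches of your case split, though it is dispatched immediately by taking the self-loop on $n$ as the witness loop.
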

\begin{proof}
 Let $p$ and $q$ be nodes in a loop $R'$ in $G$.
By the definition of the loop, $p$ is reachable from $q$ and vice versa
 in $G$.
Since $p$ and $q$ have outgoing $C$-edges, they are $\iif$ statements.
By the definition of control dependence, if a node is reachable from
 another node in $G$, it is also reachable in $P$.
Thus, $p$ and $q$ are $\iif$ statements in some loop $R$ in $P$.
In other words, we showed that a node in a loop in $G$ is contained in
 some loop in $P$.
Then there exists at least one loop iteration statement in $R$ by
 Definition \ref{def:iteration_statement}.
We call it $r$.
Let $s \in R'$ be a statement such that $(s,r) \in C$.
By the definition of control dependence, $s$ is not post-dominated by
 $r$ in $P$, which implies $s$ has an outgoing edge that comes into a
 node inside $R$ and one that comes into a node outside $R$.
Thus, by the definition of the iteration statement, $s$ is an iteration
 statement.
Since $s$ is an iteration statement, any node $s' \in R'$ such that
 $(s',s) \in C$ is again an iteration statement by the above argument.
Thus, we proved that every node in a loop in $G$ is an iteration
 statement.
\end{proof}

\begin{lemma} \label{lemma:CFGloopB}
 Let $P$ be a CFG and $G$ be the CDG of the corresponding PDG.
If $p$ is a statement reachable from some node in a loop $R$ in $P$,
 then  for any iteration statement $r \in R$, we have $p \in G^*(r)$.
\end{lemma}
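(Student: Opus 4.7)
Since $R$ is strongly connected and $r\in R$, every node $s\in R$ is reachable from $r$ in $P$, so it suffices to prove $p\in G^*(r)$ whenever $p$ is reachable from $r$ in $P$. The argument then splits into two parts: (A) $p\neq r$, treated by strong induction on the length of a shortest $P$-path from $r$ to $p$; and (B) $p=r$, which amounts to producing a nontrivial $C$-cycle through $r$ and is reduced to (A) via a simple $P$-cycle in $R$.

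The key structural observation is that, because $r$ is an iteration statement, it has one successor $u\in R$ and one successor $v\notin R$. For any $p\neq r$ reachable from $r$: if $p\in R$ then the two-step path $r\to v$ avoids $p$ (since $v\notin R$), and if $p\notin R$ then the infinite path $r\to u\to\cdots\to r\to u\to\cdots$ staying inside $R$ (which exists by strong connectivity of $R$) avoids $p$. Hence $p$ does not strictly post-dominate $r$.

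For part (A), I would fix a shortest path $\pi:\, r=x_0\to x_1\to\cdots\to x_m=p$ and let $j$ be the largest index in $\{0,\ldots,m-1\}$ for which $p$ does not strictly post-dominate $x_j$; the observation above guarantees that $j=0$ qualifies. Maximality gives that $p$ strictly post-dominates each of $x_{j+1},\ldots,x_{m-1}$, so $p$ post-dominates every intermediate node of the subpath $x_j\to\cdots\to x_m$. A short backward-propagation argument along assignments (unique-successor nodes) shows $x_j$ cannot itself be an assignment, so it is a conditional; combined with the fact that $p$ does not strictly post-dominate $x_j$, this is exactly the witness that $(x_j,p)\in C$. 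If $j=0$ I conclude $p\in G^*(r)$ directly; otherwise $x_j\neq r$ by minimality of $\pi$, the prefix $r\to\cdots\to x_j$ is a strictly shorter path, and the induction hypothesis gives $x_j\in G^*(r)$, which the edge $(x_j,p)\in C$ extends to $p$.

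For part (B), I would pick a simple $P$-cycle $r=y_0\to y_1\to\cdots\to y_m=r$ in $R$ (available because $R$ is strongly connected) and replay the same latest-non-strictly-post-dominated-conditional argument with $r$ in the role of $p$, now using the vacuous fact that $r$ does not strictly post-dominate itself to initialise the selection. This produces $(y_j,r)\in C$; if $j=0$ this is a self-loop giving $r\in G^*(r)$, and otherwise $y_j\neq r$ along the simple cycle, so (A) supplies a $C$-chain $r\to\cdots\to y_j$ that closes with $(y_j,r)$. The main obstacle is the structural step asserting that the chosen $x_j$ (resp.\ $y_j$) must be a conditional and that the path at hand genuinely witnesses the control dependence: both rely on carefully tracking how strict post-domination propagates backward along assignment nodes, combined with the particular form of the weak control dependence condition. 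A secondary issue is ensuring non-circularity between the two parts, which is immediate because (B) invokes (A) only on a node $y_j$ that is strictly earlier along a simple cycle and distinct from $r$.
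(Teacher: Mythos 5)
Your strategy is genuinely different from the paper's: you try to reach $p$ from $r$ directly, by walking a shortest CFG path $r=x_0\to\cdots\to x_m=p$ and extracting a chain of control-dependence edges from it, whereas the paper first places $p$ in $G(r')$ or $G^*(r')$ for \emph{some} iteration statement $r'$ of $R$ and then transfers to the given $r$ via Lemma \ref{lemma:CFGloopA} and the CDG loop containing the iteration statements of $R$. The difference matters, because your argument has a genuine gap at its foundation, namely the case $p\in R$, $p\neq r$ of the ``key structural observation.'' Exhibiting the edge from $r$ to its out-of-loop successor $v$ does not show that $p$ fails to strictly post-dominate $r$: post-domination in this paper quantifies over \emph{maximal} (possibly infinite) paths, so you must produce a complete path from $r$ avoiding $p$, and every continuation of $r\to v$ may be forced back through $p$ before reaching the end node.

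Concretely, let $r$ be a conditional with $(r,w)_\BT\in E$ and $(r,v)_\BF\in E$, let $w$ and $v$ be assignments with $w\to p$ and $v\to p$, and let $p$ be a conditional with $(p,r)_\BT\in E$ and $(p,z)_\BF\in E$ where $z$ is the return node. Then $R=\{r,w,p\}$ is a loop and $r$ is an iteration statement of $R$, but both branches of $r$ reconverge at $p$, so every maximal path from $r$ contains $p$; hence $p$ strictly post-dominates $r$, $\CD(r,p)$ fails, and your index $j=0$ does not qualify, so the selection that initialises the induction in part (A) is empty. Worse, in this configuration the only $C$-predecessors of $p$ are $\pentry$ and $p$ itself, while the $C$-successors of $r$ are the assignments $w$ and $v$; no chain of $C$-edges starting from a $C$-successor of $r$ ever reaches $p$, so the defect is not in one step but in the whole plan of witnessing $p\in G^*(r)$ by control dependences emanating from $r$. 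This is precisely why the paper's proof does not argue from $r$ at all: it obtains $p\in G(r')$ for a different iteration statement $r'$ (here $r'=p$) and then appeals to the loop structure of the CDG to relate $G^*(r)$ and $G^*(r')$. Your part (B) inherits the same problem, since it replays the same selection argument along a cycle in $R$.
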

\begin{proof}
 If $p$ is a statement in $R$, we have $p \in G(r')$ for some iteration
 statement $r'$ of $R$.
If $p$ is not a statement in $R$, by the definition of control
 dependence, $p \in G^*(r')$ for some iteration statement $r'$ of $R$.
Here, by Lemma \ref{lemma:CFGloopA}, any iteration statement $r$
 contained in $R$ is contained in a loop in $G$ that contains $r'$.
Therefore, we have $r' \in G^*(r)$ and $p \in G^*(r)$ holds.
\end{proof}

\section{Semantical correspondence between CFG and PDG} \label{sec:cfgpdgsem}
In this section, we prove the semantical correspondence between a CFG
and its corresponding PDG.
Here the important assumption is that the PDG should be a dPDG.
Hence, our claim is that if the PDG constructed from a CFG is a dPDG, then
the results of the executions of the CFG and the PDG coincide.
This is derived from the following two properties:

\begin{description}
 \item[Property 1] There is the same execution order of statements for
	    both a CFG and its corresponding PDG (Theorem
	    \ref{thm:corresponding_run}).
 \item[Property 2] In a dPDG, every run starting with the same initial
	    state has the same final state (Theorem \ref{thm:dpdg}).
\end{description}

By these properties, if a PDG corresponding to a CFG is a dPDG, the
execution of the CFG and any execution of the PDG have the same final
result.

To prove these theorems, we need a number of lemmata.
To make the presentation easy and understandable, we introduce some
notations.
Hereafter, we write $P=(N,E)$ for a CFG, $G=(N \cup
\{\pentry\},C,F,L,D)$ for the corresponding PDG, $\sigma$ or $\sigma_i$
for stores and $s=(\av,\ec)$ or $s_i=(\av_i,\ec_i)$ for states.
If $Q \in \{\BT, \BF\}$, we write $\bar{Q}$ for the negation of $Q$.
We write $[i,j]$ for the set of integers from $i$ to $j$ and $(i,j)$ for
the set of integers from $i+1$ to $j-1$.

We define a relation that represents a state (i.e., store) of a
run of a CFG is `equal' to a state of a run of a PDG.
Intuitively, it says that a variable used at node $n$ is the same in
$\sigma$ and $\av$.

\begin{definition}
 Let $\sigma$ be a store and $\av$ be an avail.
Let $N$ be a set of nodes.
For $n \in N$, we write $\puse(n)$ for the set of variables used in $n$.
We define a binary relation $\sigma \approx_n \av$ as follows:
\begin{align*}
 \forall x \in \puse(n). \sigma(x) = \av(n,x).
\end{align*}
We simply write $\sigma \approx \av$ for $\forall n \in N. \sigma
 \approx_n \av$.
\end{definition}

Next, we define a notation that represent reachable and unreachable
nodes from some node in a CFG.

\begin{definition}
Let $n \in N$.
We define the following sets:
\begin{align*}
\UR(n) &= \{m ~|~ \text{$m$ is unreachable from $n$ in $P$}\} \\
\UR'(n) &= \{m ~|~ \text{$m$ is unreachable from $n$ in the graph
 obtained by removing} \\
& \qquad \qquad \text{all back edges from $P$} \} \\
R(n) &= \{m ~|~ \text{$m$ is reachable from $n$ in $P$}\} \\
R'(n) &= \{m ~|~ \text{$m$ is reachable from $n$ in the graph
 obtained by removing} \\
& \qquad \qquad \text{all back edges from $P$} \} 
\end{align*}
\end{definition}

\subsection{Proof of Property 1}
The fact that a PDG can execute sentences in the same order as the
corresponding CFG is executed means that any statement that should
be executed next in the CFG is also executable at the current state of
the run of the PDG, that is, it satisfies $\pcondCFLD$ (Lemma
\ref{lemma:sequential_execution}).
Therefore our first target is to prove this fact.

When a statement is to be executed next in a CFG run, a statement
on which it is loop-independently data-dependent or def-order-dependent
should have already been executed, or is contained in the branch of some
conditional statement that are not executed.
The next two lemmata formally describe this fact.
They states that if node $q$ is loop-independently data-dependent or
def-order-dependent on $p$, $p$ is not reachable from $q$ in the graph 
obtained by removing all back edges from the CFG.

\begin{lemma} \label{lemma:LIDD}
 For all $p,q \in N$, if $\LIDD(p,q)$ holds then $p \in \UR(q)$ or $p \in
 \UR'(q) - \UR(q)$ holds.
\end{lemma}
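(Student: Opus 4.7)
The plan is to first simplify the conclusion. Because removing edges from a graph cannot enlarge the set of reachable nodes, $\UR(q) \subseteq \UR'(q)$, so the stated disjunction $p \in \UR(q) \vee p \in \UR'(q) - \UR(q)$ collapses to $p \in \UR'(q)$; that is, it suffices to show that $q$ cannot reach $p$ by any path of $P$ avoiding the back edges of the loops of $P$. I would argue this by contradiction, assuming a back-edge-free path $\rho$ from $q$ to $p$ exists, and combining it with a reaching path $\pi$ from $p$ to $q$ witnessing $\LIDD(p,q)$. The argument then splits along the two disjuncts in the definition of loop-independent data dependence.

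Case (i), in which $p$ and $q$ lie in no common loop, is immediate. The reaching path $\pi$ shows that $p$ reaches $q$ in $P$, and any back-edge-free path is a fortiori a path in $P$, so $\rho$ shows that $q$ reaches $p$ in $P$. Hence $\{p,q\}$ is contained in a strongly connected region of $P$, i.e.\ a common loop, contradicting the case hypothesis.

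Case (ii) is where the real work lies. Here $p$ and $q$ are in a common loop $R$, and the reaching path $\pi$ can be chosen to contain no back edge of $R$. Concatenating $\pi$ with $\rho$ yields a closed walk through $p$ and $q$; by the maximality of $R$ as an SCC all its nodes lie in $R$, and by construction the walk uses no back edge of $R$ at all. The contradiction should then come from the structural fact that no cycle lying inside a loop $R$ can avoid every back edge of $R$---intuitively, back edges are exactly the edges whose removal cuts the cycles of $R$ at the entry nodes into $R$ from outside.

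I expect the main obstacle to lie in rigorously establishing this last structural step, especially since the paper explicitly admits irreducible control flow and its definition of back edge is rather permissive (every edge within $R$ whose target has any external incoming edge). I would approach it by taking a putative back-edge-free cycle $C$ in $R$ of minimal length and using the fact that every node of $R$ must be reachable from the unique start node of $P$---which lies outside $R$---to locate a node of $C$ that carries an external incoming edge; any edge of $C$ targeting such a node is by definition a back edge, contradicting the choice of $C$. If that combinatorial argument fails for truly irreducible configurations, the fallback is to add the mild assumption that the loops of $P$ become acyclic once their back edges are removed, which is the property the rest of the paper implicitly relies on anyway.
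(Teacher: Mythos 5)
Your overall strategy is the same as the paper's: reduce the disjunction to $p \in \UR'(q)$ via $\UR(q) \subseteq \UR'(q)$, assume a back-edge-free path from $q$ to $p$ for contradiction, and split on the two disjuncts of the definition of loop-independent data dependence. The paper in fact stops one step earlier than you do --- it simply declares that ``both cases contradict $p \in R'(q)$'' without further argument. Your treatment of case (i), where mutual reachability places $p$ and $q$ in a common strongly connected region and hence in a common loop, is the correct justification and is more explicit than anything in the paper.

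The gap is in case (ii), and you have half-diagnosed it yourself. The structural fact you invoke --- that no cycle inside a loop $R$ can avoid every back edge of $R$ --- is false under the paper's definitions: loops are arbitrary strongly connected regions, not maximal ones, and a cycle lying strictly inside $R$ none of whose nodes has an incoming edge from outside $R$ contains no back edge \emph{of $R$} (only of smaller loops). What your minimal-cycle argument actually establishes is the weaker, true statement that every cycle $C$ contains a back edge of the loop $C$ itself, because the start node lies outside $C$ yet reaches it. That suffices for the half of your closed walk coming from $\rho$, which avoids back edges of \emph{all} loops, but the reaching path $\pi$ witnessing $\LIDD(p,q)$ is only required to avoid back edges of the one common loop $R$; the back edge of the sub-loop $C$ that your argument locates may therefore lie on $\pi$ without contradicting anything. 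To close this you need either to read the definition of loop-independent dependence as excluding back edges of every common loop (or of the cycle formed by concatenating $\pi$ and $\rho$), or a separate argument that the located back edge is in fact a back edge of $R$. Your fallback assumption (that removing all back edges leaves an acyclic graph) does hold and would repair the $\rho$ side, but it does not by itself resolve the mismatch on the $\pi$ side --- and the paper's own one-sentence justification skates over exactly the same point.
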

\begin{proof}
Since $p \in \UR(q)$ or $p \in \UR'(q) - \UR(q)$ implies $p \in \UR(q)
 \cup \UR'(q)$ and $\UR(q) \subseteq \UR'(q)$, we just need to show $p
 \in \UR'(q)$.
Suppose that $\LIDD(p,q)$ and $p \not \in \UR'(q)$ hold.
$p \not \in \UR'(q)$ is equivalent to $p \in R'(q)$.
Thus, $p$ is reachable from $q$ without passing any back edge.
Since $\LIDD(p,q)$ holds, by the definition of loop-independent data
 dependence, either i) $q$ is reachable from $p$ and $p$ and $q$ are not
 contained in the same loop, or ii) if $p$ and $q$ are contained in the
 same loop then there is a reaching path from $q$ to $p$ that does not
 contain any back edge.
However, both cases contradict to $p \in R'(q)$.
\end{proof}

\begin{lemma} \label{lemma:Deforder}
 For all $p, q \in N$, if $\DefOrd(p,q)$ holds then $p \in \UR(q)$ or $p
 \in \UR'(q) - \UR(q)$ holds.
\end{lemma}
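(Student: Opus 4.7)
The plan is to mirror the proof of Lemma~\ref{lemma:LIDD}. As before, because $\UR(q) \subseteq \UR'(q)$, it suffices to establish that $\DefOrd(p,q)$ forces $p \in \UR'(q)$. So I would argue by contradiction, assuming $p \in R'(q)$, i.e.\ there is a path from $q$ to $p$ in $P$ that uses no back edge, and derive a contradiction from each of the three defining clauses of def-order dependence.

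Unfolding $\DefOrd(p,q)$, clause (i) gives a path from $p$ to $q$ together with the assertion that $p$ and $q$ lie in no common loop; combining this with the assumed back-edge-free path from $q$ to $p$ yields mutual reachability, hence both nodes sit in a common strongly connected region, contradicting the clause. This is exactly the argument from case (i) of Lemma~\ref{lemma:LIDD}. Clause (ii) puts $p$ and $q$ in the same loop and supplies a path from $p$ to $q$ free of back edges; paired with the assumed back-edge-free path from $q$ to $p$, this reduces directly to case (ii) of Lemma~\ref{lemma:LIDD}, whose argument I would invoke verbatim.

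Clause (iii) is the cleanest to refute: it explicitly requires that any path witnessing mutual reachability between $p$ and $q$ traverses a back edge of their common loop. But $p \in R'(q)$ is precisely such a mutual-reachability witness that avoids every back edge, which contradicts the clause directly. The main obstacle is really contained in Lemma~\ref{lemma:LIDD} itself --- specifically case (ii), where one must argue that two back-edge-free paths forming a cycle conflict with the loop/back-edge structure; for the present lemma I would simply reduce to that already-established case rather than reprove it, so no new graph-theoretic work is needed beyond what Lemma~\ref{lemma:LIDD} already supplies.
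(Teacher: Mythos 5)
Your proposal is correct and follows essentially the same route as the paper's proof: reduce the claim to $p \in \UR'(q)$ using $\UR(q) \subseteq \UR'(q)$, assume $p \in R'(q)$ for contradiction, and show each of the three clauses of def-order dependence is incompatible with a back-edge-free path from $q$ to $p$. Your version merely spells out the per-clause contradictions (and the reduction of clause (ii) to the corresponding case of Lemma~\ref{lemma:LIDD}) in more detail than the paper, which states them tersely.
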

\begin{proof}
 It is again sufficient to show $p \in \UR'(q)$.
Suppose $DefOrd(p,q)$ and $p \not \in \UR'(q)$.
$p \not \in \UR'(q)$ is equivalent to $p \in R'(q)$.
Thus, $p$ is reachable from $q$ without passing any back edge.
Since $\DefOrd(p,q)$ holds, by the definition of def-order dependence,
 either i) $q$ is reachable from $p$ and $p$ and $q$ are not
 contained in the same loop, ii) $p$ and $q$ are contained in the
 same loop and $q$ is reachable from $p$ without passing any back edge,
 or iii) $p$ and $q$ are contained in the same loop, $p$ is reachable
 from $q$ and vice versa, but only by passing back edges.
However, either case contradicts to $p \in R'(q)$.
\end{proof}

The next lemma states that if a run of a PDG corresponds to a run of
the corresponding CFG, then at the current state of the run of the PDG,
any loop-independent data dependence and def-order dependence from the
statements that are already executed in a run of the CFG are
satisfied (i.e., the states of the edges are $\chk$).

\begin{lemma} \label{lemma:condFD}
 Suppose that there is a run $s_0 \stackrel{n_0}{\to} s_1
 \stackrel{n_1}{\to} \dots \stackrel{n_{k-1}}{\to} s_k$  of $G$ that
 corresponds to a run $\sigma_0 \stackrel{n_0}{\to} \sigma_1
 \stackrel{n_1}{\to} \dots \stackrel{n_{k-1}}{\to} \sigma_k$ of $P$,
 that is to say, $\forall i. \sigma_i \approx_{n_i} \av_i$.
If $\sigma_k \stackrel{n_k}{\to} \sigma_{k+1}$, the following
 holds:
\begin{align*}
\forall p \in \mathit{UR'}(n_k). (p,q) \in F \oplus D \Rightarrow
 \ec_k(p,q) = \chk.
\end{align*}
\end{lemma}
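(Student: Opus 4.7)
The proof goes by induction on $k$. For the base case $k=0$, the initial state $s_0$ has every $F$- and $D$-edge in state $\unchk$, and $n_0$ is the start node of $P$. Since every CFG node is reachable from the start via a simple (hence back-edge-free) path, $\UR'(n_0)$ contains at most the artificial $\pentry$ node, which carries no outgoing $F$- or $D$-edges, so the claim holds vacuously.

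For the inductive step I assume the lemma at stage $k-1$ and split on the kind of $n_{k-1}$. If $n_{k-1}$ is an assignment, its unique CFG out-edge goes to $n_k$, and $\udec$ modifies only edges incident on $n_{k-1}$. Fix $p\in\UR'(n_k)$ with $(p,q)\in F\oplus D$. If $p=n_{k-1}$, then $\udec$ explicitly sets $\ec_k((n_{k-1},q))=\chk$. If $p\neq n_{k-1}$, I would argue $p\in\UR'(n_{k-1})$: a forward path from $n_{k-1}$ to $p$ would have to begin with the unique out-edge $n_{k-1}\to n_k$, which, if forward, produces $p\in R'(n_k)$ contradicting $p\in\UR'(n_k)$, and, if a back edge, forces $R'(n_{k-1})=\{n_{k-1}\}$ so that $p\neq n_{k-1}$ still yields $p\in\UR'(n_{k-1})$. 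The induction hypothesis then supplies $\ec_{k-1}(p,q)=\chk$, and since $\udec$ leaves edges out of $p\neq n_{k-1}$ unchanged, $\ec_k(p,q)=\chk$.

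If instead $n_{k-1}$ is a conditional $\iif~e$ with, say, $\Se(n_{k-1},e)\av_{k-1}=\BT$, so that $n_k$ is its true CFG-successor (the $\BF$ case is symmetric), then a conditional has no outgoing $F$- or $D$-edge, so $p=n_{k-1}$ is vacuous. When $p\in\UR'(n_{k-1})$, the induction hypothesis gives $\ec_{k-1}(p,q)=\chk$; the only clause of $\udec$ that could reset it is the one firing on $p\in G_T(n_{k-1})-\{n_{k-1}\}$, but nodes in $G_T(n_{k-1})$ are $\widehat{C}$-reachable from a $ct$-successor of $n_{k-1}$ and hence forward-reachable from $n_k$ in the CFG, so $p\in G_T(n_{k-1})$ would place $p\in R'(n_k)$, contradicting $p\in\UR'(n_k)$. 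Otherwise $p\in R'(n_{k-1})\setminus\{n_{k-1}\}$: any forward path from $n_{k-1}$ to $p$ must begin with the $\BF$-branch edge (else $p\in R'(n_k)$), so $p$ lies on the non-taken side and belongs to $G_F(n_{k-1})\setminus G_T(n_{k-1})$, for which $\udec$ explicitly sets $\ec_k(p,q)=\chk$.

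The chief obstacle is the implicit bridge used in the conditional case: that membership in $G_T(n_{k-1})$ (resp.\ $G_F(n_{k-1})$) coincides with back-edge-free CFG-reachability from the true (resp.\ false) successor of $n_{k-1}$. Establishing this requires translating $\widehat{C}$-reachability along control-dependence edges (with looping edges removed) in the PDG into forward reachability in the CFG, via the correspondence between looping edges and back edges sketched in Section \ref{sec:pdgsemantics}. Once that link is in hand—either as an auxiliary lemma or inlined—the case analysis above closes.
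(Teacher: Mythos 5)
Your inductive skeleton matches the paper's: induction on $k$, case split on whether $n_{k-1}$ is an assignment or a conditional, with the induction hypothesis supplying $\chk$ for edges out of previously-unreachable nodes and $\udec$ supplying it for the newly-unreachable ones. Your decomposition is in fact cleaner than the paper's in two respects: you work directly with $\UR'(n_k)$ instead of splitting the claim into the two sub-claims over $\UR(n_k)$ and $\UR'(n_k)-\UR(n_k)$, and you split on $p\in\UR'(n_{k-1})$ versus $p\in R'(n_{k-1})$ rather than on whether $n_{k-1}$ and $n_k$ lie in a common loop. The assignment case and the base case are fine (though your parenthetical ``simple, hence back-edge-free'' is not a valid inference --- a simple path may traverse a back edge; the paper simply asserts $\UR(n_0)=\UR'(n_0)=\emptyset$).

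The gap is the one you name yourself, and it is not a formality --- it is where the entire difficulty of the lemma lives. Both directions of your ``bridge'' are load-bearing and neither is immediate: (i) $G_T(n_{k-1})\subseteq R'(p_t)$, which you justify by ``$\widehat{C}$-reachable from a $ct$-successor, hence forward-reachable from $n_k$'' --- but the first $ct$-edge out of $n_{k-1}$ may itself be a looping edge (the paper notes this explicitly in Definition \ref{def:subgraphs}), so the $ct$-successor corresponds to the head of the next loop iteration and is reached in the CFG only through a back edge out of $n_{k-1}$; that it is nonetheless back-edge-free reachable from the \emph{target} $p_t$ of that back edge needs an argument; and (ii) $R'(p_f)-R'(p_t)\subseteq G_F(n_{k-1})-G_T(n_{k-1})$, which the paper also uses but dismisses as ``trivial'' only in the same-loop subcase. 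The paper's actual mechanism for the problematic nodes is different from yours: for $p$ reachable from $n_{k-1}$ in the CDG only via a looping edge (hence not in $G_F(n_{k-1})-G_T(n_{k-1})$), it invokes Lemma \ref{lemma:looping_edge_back_edge} to conclude $p\in\UR'(n_{k-1})-\UR(n_{k-1})$ and falls back on the induction hypothesis --- this is precisely why the paper carries the two sub-claims separately. Your unified treatment can absorb those nodes into your first subcase ($p\in\UR'(n_{k-1})$) only if you also show the $\udec$ reset clause does not fire for them when they lie outside $R'(n_k)$, which is again direction (i). So the proposal is a correct plan, but it closes only once you prove the looping-edge/back-edge correspondence (essentially Lemma \ref{lemma:looping_edge_back_edge} plus the reachability claim (i)), which you have deferred rather than established.
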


\begin{proof}
 Since $\UR(n) \subseteq \UR'(n)$, the target of the proof can be
 divided into the following two claims.
\begin{align}
\forall p \in \mathit{UR}(n_k). (p,q) \in F \oplus D \Rightarrow
 \ec_k(p,q) = \chk \\
\forall p \in \mathit{UR'}(n_k) - \mathit{UR}(n_k). (p,q) \in F \oplus D
 \Rightarrow \ec_k(p,q) = \chk
\end{align}
We prove this by induction on $k$.
 \\
\noindent \textbf{(Base Case)}
If $k=0$, the claims hold since $\UR(n_0)=\UR'(n_0)=\emptyset$.

\noindent \textbf{(Induction Step)}
\begin{enumerate}
 \item If $n_{k-1} = (x := e)$, $n_k$ has a unique successor.
\begin{enumerate}
 \item If $n_k$ and $n_{k-1}$ are not contained in the same loop, we
       have $\UR(n_k) = \UR(n_{k-1}) \cup \{n_{k-1}\}$.
By the induction hypothesis, we have $\forall p \in
       \UR(n_{k-1}). \forall (p,q) \in F \oplus D. \ec_{k-1}(p,q) =
       \chk$.
By the definition of $\udec$, $\ec_k(n_{k-1},q) = \chk$ and $\forall
       (p,q) \in F \oplus D. p \neq n_{k-1} \Rightarrow \ec_k(p,q) =
       \ec_{k-1}(p,q)$.
Therefore, claim (1) holds.
Since $\UR'(n_k) = \UR(n_k)$, claim (2) also holds.
\item If $n_k$ and $n_{k-1}$ are contained in the same loop,
      $\UR(n_k) = \UR(n_{k-1})$.
Since $n_{k-1} \not \in \UR(n_k)$, by the induction hypothesis and the
      definition of $\udec$, we have $\forall p \in \UR(n_k). \forall
      (p,q) \in F \oplus D \Rightarrow \ec_k(p,q) = \chk$.
That is, claim (1) holds.
Meanwhile, because $\UR'(n_k) = \UR'(n_{k-1}) \cup \{n_{k-1}\}$,
\begin{align*}
\UR'(n_k) - \UR(n_k) &= (\UR'(n_{k-1}) \cup \{n_{k-1}\}) - \UR(n_{k-1})
 \\
&= (\UR'(n_{k-1}) - \UR(n_{k-1})) \cup (\{n_{k-1}\} - \UR(n_{k-1})) \\
&= (\UR'(n_{k-1}) - \UR(n_{k-1})) \cup \{n_{k-1}\} \quad (\because n_{k-1}
 \not \in \UR(n_{k-1}))
\end{align*}
By the induction hypothesis and the definition of $\udec$, we have
      $\forall p \in \UR'(n_{k-1}) - \UR(n_{k-1}). (p,q) \in F \oplus D
      \Rightarrow \ec_k(p,q) = \chk$.
Again by the definition of $\udec$, $\forall (n_{k-1},q) \in F \oplus
      D. \ec_k(n_{k-1},q) = \chk$.
This implies claim (2).
\end{enumerate}
\item If $n_{k-1} = (\iif e)$, assume that $\Se(e)\sigma_{k-1} =
      \Se_{\av}(n_{k-1},e)\av_{k-1}=\BT$.
Suppose $(n_{k-1}, p_t)_\BT \in E$ and $(n_{k-1}, p_f)_\BF \in E$.
Note that $n_k = p_t$.
\begin{enumerate}
 \item If $n_k$ and $n_{k-1}$ are not contained in the same loop,
       $\UR(n_k) = (\UR(n_{k-1}) \cup \{n_{k-1}\}) \cup (R(p_f) -
       R(p_t))$.
Because $\UR'(n_k) = \UR(n_k)$, claim (2) immediately holds.
We show claim (1).
For a statement $p \in \UR(n_{k-1})$, by the induction hypothesis we
       have $(p,q) \in F \oplus D \Rightarrow \ec_{k-1}(p,q) = \chk$.
Since $n_{k-1}$ is an $\iif$ statement, there are no $F$- and $D$-edges
       from $n_{k-1}$.
For a statement $p \in R(p_f) - R(p_t)$, if $p \in G_F(n_{k-1}) -
       G_T(n_{k-1})$ then by the definition of $\udec$ we have $\forall
       p \in G_F(n_{k-1}) - G_T(n_{k-1}). (p,q) \in F \oplus D
       \Rightarrow \ec_k(p,q) = \chk$.
If not, that is to say, $p$ is reachable from $n_{k-1}$ only via a
       looping edge (i.e., in the CDG $G$), we have $(r,p) \in C -
       \widehat{C}$ for some $r \in G_F(n_{k-1}) - G_T(n_{k-1})$.
Then, by the definition of looping edges, there is $o \neq r$ such that
       $(o,p) \in C$ and $\exists Q \in \{\BT,\BF\}. n_{k-1}
       \in G_Q(o)$ and $p \in G_Q(o)$.
By Lemma \ref{lemma:looping_edge_back_edge}, we have $p \in
       \UR'(n_{k-1}) - \UR(n_{k-1})$ and we apply the induction
       hypothesis.
\item If $n_k$ and $n_{k-1}$ are contained in the same loop, $\UR(n_k) =
      \UR(n_{k-1})$.
Therefore, claim (1) holds as in case 1-(b).
Meanwhile, we have $\UR'(n_k) = \UR'(n_{k-1}) \cup \{n_{k-1}\} \cup
      (R'(p_f) - R'(p_t))$.
Therefore, $\UR'(n_k) - \UR(n_k) = (\UR'(n_{k-1}) - \UR(n_k)) \cup
      (\{n_{k-1}\} - \UR(n_k)) \cup ((R'(p_f) - R'(p_t)) - \UR(n_k))$.
By the induction hypothesis and the definition of $\udec$, $\forall p
      \in \UR'(n_{k-1}) - \UR(n_k)). (p,q) \in F \oplus D \Rightarrow
      \ec_k(p,q) = \chk$.
Furthermore, since $\{n_{k-1}\} - \UR(n_{k-1}) = \{ n_{k-1}\}$ and
      $n_{k-1}$ is an $\iif$ statement, there are no $F$- and $D$-edges
      from $n_{k-1}$.
Finally, since $(R'(p_f) - R'(p_t)) - \UR(n_k) = R'(p_f) - R'(p_t)$, for
      $p \in R'(p_f) - R'(p_t)$ it is trivial that $p \in G_F(n_{k-1}) -
      G_T(n_{k-1})$. Thus, by the definition of $\udec$, claim (2)
      holds.
\end{enumerate}
\end{enumerate}
\end{proof}

The above lemma will be used to prove that when a PDG is executed in
the same order as the corresponding CFG is executed, the statements that
should be executed next in the CFG satisfy $\pcondF$ and $\pcondD$ at
the current state of the run of the PDG.

\begin{lemma} \label{lemma:looping_edge_back_edge}
 Suppose that $p$ is an $\iif$ statement and has outgoing looping
 edges whose label is $Q$.
We assume $\exists r \neq p. \exists p' \in G_{Q'}(r). \{p,p'\}\subseteq
 G_{Q'}(r)$.
If node $q$ is reachable from $p$ only via a looping edge outgoing from
 $p$, then $q$ is reachable from $p$ only via a back edge in the CFG
 $P$.
\end{lemma}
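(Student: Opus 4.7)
The strategy is a proof by contradiction. Suppose that $q$ is reachable from $p$ in the CFG $P$ by some path $\pi$ that does not traverse any back edge; I aim to construct a CDG path from $p$ to $q$ whose first edge is \emph{not} a looping edge of $p$, contradicting the hypothesis that every CDG path from $p$ to $q$ begins with a looping edge of $p$.

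First I would invoke Lemma \ref{lemma:CFGloopA}: since $p$ has outgoing looping edges in the CDG, $p$ lies in a CDG loop and is therefore an iteration statement of some CFG loop $R_P$. From the definition of looping edges, the CDG back edge issuing from $p$ carries the label $Q$, so in the CFG the $Q$-branch of $p$ is the one that re-enters $R_P$ (and will eventually need a CFG back edge to return to $p$), while the $\bar{Q}$-branch exits this iteration of $R_P$. The auxiliary assumption that $p,p' \in G_{Q'}(r)$ for some $r\neq p$ places $p$ inside a nontrivial outer control context $G_{Q'}(r)$, which will later supply an alternative CDG witness that avoids $p$'s looping edges.

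The core step is to walk along $\pi$ and translate it into a control-dependence chain from $p$ to $q$ in the CDG, by induction on the length of $\pi$, maintaining the invariant that the initial CDG edge from $p$ is non-looping. At each conditional encountered along $\pi$, the weak-control-dependence definition used in this paper produces a CDG edge corresponding to the branch taken by $\pi$. The crucial observation is that, because $\pi$ contains no back edge, it cannot return to $p$, so $q$ is reached in a single forward pass; consequently the target of the constructed initial CDG edge from $p$ has no iteration-statement predecessor coming from $\pi$, which by the definition of looping edges means the edge is not a looping edge of $p$. In the subcase where $\pi$'s first step is $p$'s $\bar{Q}$-edge, the resulting CDG edge is $\bar{Q}$-labelled and therefore cannot be a $Q$-looping edge at all. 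In the subcase where the first step is $p$'s $Q$-edge yet $\pi$ still avoids every back edge, $\pi$ must leave $R_P$ before completing the iteration, and the outer hypothesis about $r$ is used to route the CDG chain through $G_{Q'}(r)$ rather than through $p$'s looping structure, again yielding a non-looping first edge.

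The main obstacle I anticipate is making this CFG-to-CDG translation rigorous, especially when $\pi$ passes through nested inner loops or through nodes having multiple CDG predecessors. The non-looping status of the initial CDG edge rests on a local predecessor count at its target, and this count must be pinned down using the fact that $\pi$ does not revisit $p$. The role of the hypothesis involving $r$ and $p'$ is precisely to close the otherwise delicate case in which $p$ would appear to be a top-level iteration statement with no surrounding conditional; the outer context $G_{Q'}(r)$ containing both $p$ and $p'$ guarantees a CDG route to $q$ that sidesteps $p$'s looping edges.
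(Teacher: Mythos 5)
Your overall plan (contrapositive: exhibit a back-edge-free CFG path from $p$ to $q$ and turn it into a CDG path from $p$ to $q$ whose first edge is not a looping edge) is structurally coherent, but the core translation step does not work, and it is exactly the step you flag as the ``main obstacle.'' A CFG path $\pi$ from $p$ to $q$ does not induce a control-dependence chain from $p$ to $q$: under the paper's (weak) control dependence, as soon as $\pi$ passes a node that post-dominates an earlier conditional on $\pi$, that node is \emph{not} control-dependent on the conditional, so the chain breaks; in general the first CDG edge of any $C$-path to $q$ need not correspond to the first branch taken by $\pi$ at all. Moreover, whether a $C$-edge out of $p$ is a looping edge is a global structural property of the CDG (there is a CDG back edge from $p$ with label $Q$, and the edge's destination has a predecessor other than $p$); it is not determined by ``a local predecessor count at its target coming from $\pi$,'' and under the lemma's own hypothesis ($\exists r \neq p$ with $\{p,p'\} \subseteq G_{Q'}(r)$) the relevant destinations \emph{do} have predecessors other than $p$. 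So the invariant you propose to maintain along $\pi$ is neither well-defined nor preserved, and the argument does not close.

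The paper's proof goes the other way and is direct: since every $Q$-edge out of $p$ is a looping edge, the node set $G_Q(p)$ forms a loop (strongly connected region) in the CFG $P$ containing $p$; the hypothesis about $r$ guarantees a node $q'$ of that loop with an incoming CFG edge from outside the loop, so by the paper's definition of back edges the CFG edge $(p,q')_Q$ is a back edge; and any $q$ that is $C$-reachable from $p$ only through a looping edge is reached in $P$ only through $q'$, hence only through that back edge. The role of the $\exists r$ hypothesis is thus to certify the existence of the back edge's target $q'$, not to ``route'' an alternative CDG chain around $p$. If you want to salvage a contrapositive argument, you would still need the lemma-level fact that $G_Q(p)$ is a CFG loop entered at $q'$, at which point the direct argument is shorter; as written, your proposal has a genuine gap.
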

\begin{proof}
Since $Q$-edges from $p$ in $G$ are looping edges, $G_Q(p)$ consists of
 a loop in $P$.
Therefore, there is $q'$ in the loop that has an incoming edge from the
 outside of the loop.
The reason why such $q'$ exists is because there is $r \neq p$ such that
 $q' \in G_{Q'}(r)$.
By the definition of the back edge, the edge $(p,q')_Q \in E$ of $P$ is a
 back edge of the loop.
Now assume that a node $q$ in the loop is reachable from $p$ only via a
 looping edge outgoing from $p$.
That means $q$ is reachable from $q'$ by the definition of control
 dependence.
Therefore, $q$ is reachable from $p$ only via the back edge $(p,q')_Q$.
\end{proof}

The next lemma states that if there exists a run of $G$ corresponding to
the execution of $P$, the states of the $C$-edges outgoing from the
statements in the subgraph of the statement that will be executed next
in $P$ are not $\act$.

\begin{lemma} \label{lemma:condC1}
 Suppose that there is a run $s_0 \stackrel{n_0}{\to} s_1
 \stackrel{n_1}{\to} \dots \stackrel{n_{k-1}}{\to} s_k$  of $G$ that
 corresponds to a run $\sigma_0 \stackrel{n_0}{\to} \sigma_1
 \stackrel{n_1}{\to} \dots \stackrel{n_{k-1}}{\to} \sigma_k$ of $P$,
 that is to say, $\forall i. \sigma_i \approx_{n_i} \av_i$.
If $\sigma_k \stackrel{n_k}{\to} \sigma_{k+1}$, then $\forall p \in
 G^*(n_k). p \neq n_k \Rightarrow \forall c(p,q) \in C. ec_k(c(p,q))
 \neq \act$.
\end{lemma}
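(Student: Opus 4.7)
The plan is induction on $k$, with a case analysis on $n_{k-1}$ in the spirit of Lemma~\ref{lemma:condFD}. The essential observation is that $\udec$ modifies $C$-edges only in two places: every incoming $C$-edge of $n_{k-1}$ is reset to $\unchk$, and, if $n_{k-1}$ is an $\iif$ statement, every outgoing $C$-edge from $n_{k-1}$ whose label matches the evaluation of the condition is set to $\act$; every other $C$-edge keeps its previous value.

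For the base case $k = 0$, only edges $c(\pentry, \cdot)$ are $\act$ in $\ec_0$; since $\pentry$ is the root of the CDG and has no incoming $C$-edges, it cannot be $C$-reachable from $n_0$, so $\pentry \notin G^*(n_0)$ and the implication is vacuous.

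For the inductive step, assume the lemma at step $k{-}1$ and first dispense with the case where $n_{k-1}$ is an assignment or a return. Here no $C$-edge becomes newly $\act$, so any edge $c(p,q)$ with $\ec_k(c(p,q)) = \act$ must already satisfy $\ec_{k-1}(c(p,q)) = \act$ with $q \neq n_{k-1}$. I then split on whether $n_k$ and $n_{k-1}$ lie in a common loop---mirroring the sub-cases of Lemma~\ref{lemma:condFD}---to transfer the induction hypothesis from $G^*(n_{k-1})$ to $G^*(n_k) \setminus \{n_k\}$, using that $n_{k-1}$ has no outgoing $C$-edges to exclude $p = n_{k-1}$.

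The substantive case is $n_{k-1} = (\iif\ e)$. The carry-over edges are handled as in the previous paragraph; the genuinely new edges have the form $c(n_{k-1}, q)$ of the label of $\Se(n_{k-1}, e)\av_{k-1}$. For these, the claim reduces to showing $n_{k-1} \notin G^*(n_k) \setminus \{n_k\}$. This is where the main difficulty lies: CDGs can contain cycles, so a priori $n_{k-1}$ might be a $C$-descendant of $n_k$. The intended argument exploits the assumed correspondence with the CFG run together with Lemma~\ref{lemma:looping_edge_back_edge}. Any hypothetical $C$-path from $n_k$ to $n_{k-1}$ that uses a looping edge corresponds to a CFG back edge, which the single CFG step $n_{k-1} \to n_k$ does not traverse; and a $C$-path consisting only of $\widehat{C}$-edges would make $n_{k-1}$ transitively $\widehat{C}$-controlled by $n_k$, which contradicts $n_k$ being the immediate CFG-successor of the just-executed $\iif$ statement $n_{k-1}$. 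Making this dichotomy fully rigorous, especially for paths that alternate looping and non-looping edges, is the main obstacle I foresee.
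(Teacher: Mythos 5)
Your base case and your inventory of which $C$-edges $\udec$ can newly activate are fine, but the inductive step has a genuine gap in the case where $n_{k-1}$ is an assignment or a return. Your induction hypothesis is the lemma at step $k-1$, which only constrains edges emanating from $G^*(n_{k-1})$; an assignment or return has no outgoing $C$-edges, so $G^*(n_{k-1})=\emptyset$ and the hypothesis is vacuous. There is therefore nothing to ``transfer'' to $G^*(n_k)\setminus\{n_k\}$, and no case split on whether $n_k$ and $n_{k-1}$ share a loop can manufacture the missing information. The paper's proof does not step from $k-1$ to $k$ at all: when $n_k$ is an $\iif$ statement it rewinds to the largest $i<k$ with $\CD(n_i,n_k)$ (the most recently executed control ancestor of $n_k$), applies the induction hypothesis at step $i$ using the inclusion $G^*(n_k)\subseteq G^*(n_i)$, and then argues that no $n_j$ with $j\in(i,k)$ has a $C$-edge into $G^*(n_k)$, so by the definition of $\udec$ the relevant edge states are frozen between $s_i$ and $s_k$. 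If you insist on a step-by-step induction you must strengthen the invariant so that it survives assignment steps; as written, the proof does not go through.

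The obstacle you flag in the $\iif$ case is also more serious than a missing technicality. You reduce the treatment of the newly activated edges $c(n_{k-1},\cdot)$ to showing $n_{k-1}\notin G^*(n_k)\setminus\{n_k\}$, but this can fail: two iteration statements of the same CFG loop form a cycle in the CDG (nodes $3$ and $6$ in Fig.~\ref{fig:subgex}), so a just-executed $\iif$ statement may well lie in $G^*(n_k)$, and the edge $c(n_{k-1},n_k)$ that was just set to $\act$ then has its source inside $G^*(n_k)$. The dichotomy via Lemma~\ref{lemma:looping_edge_back_edge} cannot rescue a claim that is false on such inputs. What the lemma is actually used for is the second conjunct of $\pcondC(s_k,n_k)$, which concerns edges whose \emph{target} lies in $G^*(n_k)\setminus\{n_k\}$, and the paper's induction step is phrased accordingly (``does not have a $C$-edge pointing to a node in $G^*(n_k)$''); newly activated edges whose target is $n_k$ itself are then harmless because $n_k$ is excluded. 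Reorienting your argument from sources to targets, and combining it with the rewinding device above, is what is needed to close the proof.
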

\begin{proof}
 By induction on $k$.\\
\noindent \textbf{(Base Case)}
$k=0$.
By the definition of the operational semantics of the PDG, we have $\forall
 (p,q) \in C. \ec_0(p,q) = \act \Rightarrow p = \pentry$.
If $n_0$ is not an $\iif$ statement, the claim immediately follows.
If $n_0$ is an $\iif$ statement, by the definition of the augmented CFG,
it is not the case that $(\pentry, q) \in C$ for some $q \in G^*(n_0)$.
Thus, the claim follows.

\noindent \textbf{(Induction Step)}
Since the claim trivially follows if $n_k$ is not an $\iif$ statement,
 we only consider the case where $n_k$ is an $\iif$ statement.
For some $i<k$, we consider the case where there is $n_i$ such that
 $\CD(n_i,n_k) \wedge \forall j \in (i,k). \neg \CD(n_j,n_k)$.
For $p \in G^*(n_k)$, since $G^*(n_k) \subseteq G^*(n_i)$, by the
 induction hypothesis, $\ec_i(c(p,q)) \neq \act$.
Since node $n_j$ for any $j \in (i,k)$ does not have a $C$-edge pointing
 to a node in $G^*(n_k)$, we have $\ec_k(c(p,q))=\ec_i(c(p,q)) \neq
 \chk$ by the definition of $\udec$.
\end{proof}

The next lemma states that if there exists a run of $G$ corresponding to
the execution of $P$, the loop-carried data dependence from the
statement that will be executed next in $P$ is satisfied.

\begin{lemma} \label{lemma:condL1}
Suppose that there is a run $s_0 \stackrel{n_0}{\to} s_1
 \stackrel{n_1}{\to} \dots \stackrel{n_{k-1}}{\to} s_k$  of $G$ that
 corresponds to a run $\sigma_0 \stackrel{n_0}{\to} \sigma_1
 \stackrel{n_1}{\to} \dots \stackrel{n_{k-1}}{\to} \sigma_k$ of $P$,
 that is to say, $\forall i. \sigma_i \approx_{n_i} \av_i$.
If $\sigma_k \stackrel{n_k}{\to} \sigma_{k+1}$, then $\forall l(n_k,q)
 \in L. \ec_k(l(n_k,q))=\chk$.
\end{lemma}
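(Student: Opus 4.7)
The plan is to proceed by induction on $k$, paralleling the structure of the proof of Lemma~\ref{lemma:condFD}.

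For the base case $k=0$, all $L$-edges are $\unchk$ in the initial econf $\ec_0$. The initial node $n_0$ is the start node of the CFG, which has no predecessor and hence cannot lie on any cycle; since loop-carried data dependence requires both endpoints to belong to a common loop, $n_0$ has no outgoing $L$-edges at all, and the claim is vacuously true.

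For the inductive step, if $n_k$ is an $\iif$ statement then by the PDG's structural constraints it has no outgoing $L$-edges, so the claim is trivial; assume therefore that $n_k = (x:=e)$ is an assignment, and fix an edge $l(n_k, q) \in L$ with $n_k \neq q$. I would first establish a structural fact: since $\LCDD(n_k, q)$ holds but $\LIDD(n_k, q)$ does not, $n_k$ and $q$ share a common loop $R$ of the CFG and every reaching path from $n_k$ to $q$ crosses a back edge of $R$. Moreover, along any such reaching path the dependent variable is not redefined, which forces $n_k$ not to appear between the loop header and $q$ on its continuation into the next iteration---in other words, $q$ is executed before $n_k$ in every iteration of $R$ in which both are executed.

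The bulk of the proof is then a case analysis on $n_{k-1}$ mirroring Lemma~\ref{lemma:condFD}. When $n_{k-1} = q$, the ``any type'' clause of $\udec$ sets $\ec_k(l(n_k, q)) = \chk$ directly. When $n_{k-1}$ is a different assignment, $\udec$ leaves $l(n_k, q)$ untouched and the desired value is inherited from $\ec_{k-1}$; this requires a strengthened invariant carried through the induction. When $n_{k-1}$ is an $\iif$ statement, I would split on whether $q$ lies in the branch being taken (in which case the resetting clauses of $\udec$ may fire) or in the dead branch (in which case the preemptive $\chk$-setting clauses ensure $\ec_k(l(n_k, q)) = \chk$), appealing to Lemmas~\ref{lemma:CFGloopA} and~\ref{lemma:CFGloopB} to relate CDG loops to CFG iteration statements.

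The hardest step will be formulating the correct strengthened induction hypothesis and ruling out intermediate $\iif$ updates that could flip $l(n_k, q)$ back to $\unchk$ between $q$'s execution and $n_k$'s. The PDG semantics handles nested loops via the subgraph operators $G_T(n)$ and $G_F(n)$, but tying each rule firing to a particular iteration of $R$ in the CFG---and arguing that no inner iteration statement can fire between the relevant occurrences of $q$ and $n_k$ without $q$ being re-executed---requires a careful invocation of the CDG--CFG correspondence.
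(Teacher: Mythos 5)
Your overall strategy---induction on $k$ with a case analysis on $n_{k-1}$, mirroring Lemma~\ref{lemma:condFD}---is not the route the paper takes, and as written it has a genuine gap at exactly the point you flag as ``the hardest step.'' The problem is that the property being proved at step $k$ concerns the outgoing $L$-edges of $n_k$, while the induction hypothesis at step $k-1$ concerns the outgoing $L$-edges of $n_{k-1}$, a different node. So in your middle case (``$n_{k-1}$ is a different assignment, the value is inherited from $\ec_{k-1}$'') there is nothing to inherit: $\ec_{k-1}(l(n_k,q))=\chk$ is not an instance of the induction hypothesis. Everything therefore rests on the ``strengthened invariant'' you mention but never state, and that invariant is essentially the entire content of the lemma. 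The same deferral occurs in your $\iif$ case: the danger is precisely that an $\iif$ statement with $q$ in its taken branch resets $l(n_k,q)$ to $\unchk$ after $q$ has already run, and you do not identify the mechanism that excludes this.

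The paper's proof is not an induction on $k$ at all. It fixes the edge $l(n_k,q)$ and argues temporally: let $p$ be the \emph{last executed minimal common ancestor} of $n_k$ and $q$. Either $q$ is executed between $p$ and $n_k$---in which case the ``any type'' clause of $\udec$ sets $l(n_k,q)$ to $\chk$ when $q$ runs---or $q$ is not, in which case $q$ lies in the dead branch of some $\iif$ statement $o$ executed after $p$ (i.e.\ $q \in G_{\bar{Q}}(o) - G_Q(o)$ for the taken value $Q$), and the dead-branch clause of $\udec$ sets $l(n_k,q)$ to $\chk$ when $o$ runs. The key step that plugs the hole in your plan is then: any $\iif$ statement $o$ executed afterwards satisfies $\{q,n_k\} \not\subseteq G_Q(o)$ for both $Q \in \{\BT,\BF\}$, because otherwise $o$ would itself be a later-executed minimal common ancestor of $q$ and $n_k$, contradicting the choice of $p$. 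Inspecting the clauses of $\udec$, no such $o$ can flip $l(n_k,q)$ back to $\unchk$ (the resetting clause requires both endpoints in the taken subgraph). That ``last executed mca'' device is the missing idea; without it, or an equivalent explicitly stated invariant, your induction does not close.
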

\begin{proof}
 If $(n_k,q) \in L$, by the definition of the loop-carried data
 dependence, $n_k$ and $q$ reside in the same loop and every reaching
 path from $n_k$ to $q$ passes a back edge of the loop.
Suppose that $p \in \mca(n_k,q)$ be the last executed minimal common
 ancestor of $n_k$ and $q$.
If $q$ was executed during the execution from $p$ to $n_k$, the state of
 $l(n_k,q)$ turned to be $\chk$ according to $\udec$.
Let $o$ be an arbitrary $\iif$ statement executed after $q$'s execution.
Then we have $\forall Q \in \{\BT,\BF\}. \{q,n_k\}\not \in G_Q(o)$.
Otherwise, $o$ should have been the last executed minimal common
 ancestor of $q$ and $n_k$.
Therefore, if $n_k \not \in G(o)$, the state of $l(n_k,o)$ will not be
 changed by the execution of $o$.
Meanwhile, if we assume $n_k \in G_T(o)$ (the case of $G_F(o)$ is the
 same), since $n_k$ will be executed, the condition of $o$ was evaluated
 to $\BT$.
Thus, by the definition of $\udec$, the state of $l(n_k,q)$ will not be
 changed.
As a consequence, we have $\ec_k(l(n_k,q))=\chk$.
If $q$ is not executed during the execution from $p$ to $n_k$, there is
 an $\iif$ statement $o$ such that $q \in G(o)$ (note that $o$ may
 coincide with $p$).
If we assume $q \in G_T(o)$ (the case of $G_F(o)$ is the same), since
 $q$ was not executed, the condition of $o$ had been
 evaluated to $\BF$ and $q \in G_F(o) - G_T(o)$.
By the same argument as above, we have $\forall Q \in
 \{\BT,\BF\}. \{q,n_k\}\not \in G_Q(o)$.
By the definition of $\udec$, the state of $l(n_k,q)$ turned to be $\chk$
 after $o$ was executed.
As for the $\iif$ statements that will be executed after that, the same
 argument applies as above and the state of $l(n_k,q)$ will be
 unchanged.
Thus, we have $\ec_k(l(n_k,q))=\chk$.
\end{proof}

This lemma will be used to show that the statement to be executed next
in a run of the CFG satisfies $\pcondL$ in the current state of the
run of the PDG.

By using the lemmata \ref{lemma:condFD}, \ref{lemma:condC1} and
\ref{lemma:condL1}, we can show that there is a run of $G$ that has the
same execution order as that of $P$.

\begin{lemma} \label{lemma:sequential_execution}
Suppose that there is a run $s_0 \stackrel{n_0}{\to} s_1
 \stackrel{n_1}{\to} \dots \stackrel{n_{k-1}}{\to} s_k$  of $G$ that
 corresponds to a run $\sigma_0 \stackrel{n_0}{\to} \sigma_1
 \stackrel{n_1}{\to} \dots \stackrel{n_{k-1}}{\to} \sigma_k$ of $P$,
 that is to say, $\forall i. \sigma_i \approx_{n_i} \av_i$.
If $\sigma_k \stackrel{n_k}{\to} \sigma_{k+1}$, then $n_k \in
 \Next(s_k)$.
\end{lemma}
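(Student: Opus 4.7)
The plan is to establish $\pcondCFLD(s_k, n_k)$, which decomposes as the conjunction $\pcondC(s_k, n_k) \wedge \pcondF(s_k, n_k) \wedge \pcondL(s_k, n_k) \wedge \pcondD(s_k, n_k)$. I will dispatch the four conjuncts separately by invoking the lemmata already proved.

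The conjuncts $\pcondF$ and $\pcondD$ are handled uniformly. For $\pcondF$, fix any $f(p, n_k) \in F$; by construction of the PDG this means $\LIDD(p, n_k)$ holds, so Lemma \ref{lemma:LIDD} places $p \in \UR'(n_k)$, and Lemma \ref{lemma:condFD} then yields $\ec_k(f(p, n_k)) = \chk$. The argument for $\pcondD$ is identical with $\DefOrd$ replacing $\LIDD$ and Lemma \ref{lemma:Deforder} replacing Lemma \ref{lemma:LIDD}. The conjunct $\pcondL$ is exactly the content of Lemma \ref{lemma:condL1} applied to each $l(n_k, q) \in L$ with $q \neq n_k$.

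The remaining work is $\pcondC(s_k, n_k)$, which consists of a universal and an existential part. For the universal part---no incoming $C$-edge into any $q \in G^*(n_k) \setminus \{n_k\}$ is $\act$---I plan to reduce to Lemma \ref{lemma:condC1} by observing that an active edge $c(r, q)$ with $q \in G^*(n_k) \setminus \{n_k\}$ must have its source $r$ lie inside $G^*(n_k)$ as well (since $G^*$ is closed under the $C$-predecessor relation on its non-root elements, once one unfolds the definition), whence the lemma gives a direct contradiction. The existential part---there exists $p$ with $c(p, n_k) \in C$ and $\ec_k(c(p, n_k)) = \act$---is where the main effort lies. The plan is to choose $p$ by locating the largest index $i \le k-1$ at which an $\iif$ statement $n_i$ with $c(n_i, n_k) \in C$ was executed on the branch whose label ($\ct$ or $\cf$) agrees with that edge; if no such $i$ appears in the history, $p = \pentry$ serves as witness through the initial econf. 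Inspecting $\udec$ shows that $c(p, n_k)$ can be modified only at steps whose executed node is $p$ or $n_k$, and maximality of $i$ (respectively the fact that $\pentry$ is never executed, and that $n_k$ is being executed for the first time since the chosen activation) rules out these cases, so the $\act$ label persists to step $k$.

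The main obstacle will be proving the existence of such a $p$. This requires bridging the CFG reachability that witnesses $\sigma_k \stackrel{n_k}{\to} \sigma_{k+1}$ with the control-dependence structure of the PDG, using the post-domination definition of $\CD$ and the fact that $\pentry$ is the root of the CDG to guarantee that the execution history must contain an $\iif$ (or default to $\pentry$) that most recently activated $n_k$'s control-dependence edge. A secondary subtlety is reconciling the direction mismatch between Lemma \ref{lemma:condC1} (outgoing $C$-edges from $G^*(n_k) \setminus \{n_k\}$) and the universal part of $\pcondC$ (incoming $C$-edges into $G^*(n_k) \setminus \{n_k\}$); I expect the closure argument above to be routine but will need to be spelt out in detail.
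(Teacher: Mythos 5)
Your overall plan --- establishing $\pcondCFLD(s_k,n_k)$ conjunct by conjunct, and discharging $\pcondF$, $\pcondD$ and $\pcondL$ via Lemmata \ref{lemma:LIDD}, \ref{lemma:Deforder}, \ref{lemma:condFD} and \ref{lemma:condL1} --- is exactly the paper's proof. Both points where you diverge lie in $\pcondC$, and both are problems. For the universal conjunct, the bridging claim you rely on, that $G^*(n_k)$ is closed under taking $C$-predecessors of its non-root elements, is false: a node may be control-dependent on two incomparable nodes, and in the paper's own running example (the CDG in Fig.~\ref{fig:pdgex}) node $2$ lies in $G^*(4)$ via the looping edge $\ct(4,2)$ while also having the $C$-predecessor $\pentry$, which is not in $G^*(4)$. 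So an edge $c(r,q)$ into $q \in G^*(n_k)\setminus\{n_k\}$ need not have $r \in G^*(n_k)$ (nor does your reduction cover the case $r = n_k$, which Lemma \ref{lemma:condC1} explicitly excludes). The paper performs no such reduction: it reads Lemma \ref{lemma:condC1} as already being the statement about edges \emph{pointing into} $G^*(n_k)\setminus\{n_k\}$ --- its proof argues that no $n_j$ with $j\in(i,k)$ ``has a $C$-edge pointing to a node in $G^*(n_k)$'' --- so the directional mismatch you perceive is a misreading of that lemma's intent, and your repair of it does not survive.

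For the existential conjunct you correctly identify the existence of the activating node as the crux, but you leave it as an acknowledged obstacle, so this part is not yet a proof. The paper closes it with a one-step lookback rather than a search through the whole history: since the run of $G$ reaches $s_k$, we know $n_{k-1} \in \Next(s_{k-1})$, hence some $c(q,n_{k-1})$ is $\act$ at $s_{k-1}$. If $n_{k-1}$ is not an $\iif$ statement, then $n_k$ is its unique successor, so $\{n_{k-1},n_k\} \subseteq G_Q(q)$ for some $Q$, the edge $c(q,n_k)$ is likewise $\act$, and executing $n_{k-1}$ does not reset it; if $n_{k-1}$ is an $\iif$ statement branching to $n_k$, then executing it sets $c(n_{k-1},n_k)$ to $\act$ directly. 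Some argument of this kind (exploiting that the previous step was executable, rather than reconstructing the most recent activation from scratch) is needed to complete your plan.
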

\begin{proof}
 We show that $\pcondCFLD(s_k,n_k)$ is the case.
\begin{description}
 \item[Proof of $\pcondC(s_k,n_k)$.] 
The predicate $\pcondC(s_k,n_k)$ is divided as the following two
	    conjuncts:
\begin{align}
&\exists c(p,n_k) \in C. \ec(c(p,n_k)) = \act \\
&\forall q \in G^*(n_k).q \neq n_k \Rightarrow \forall c(r,q) \in
 C. \ec(c(r,q)) \neq \act.
\end{align}
In the following, we prove them individually.
\begin{itemize}
 \item If $n_{k-1}$ is not an $\iif$ statement, since $n_{k-1} \in
       \Next(s_{k-1})$, there is an $\iif$ statement $q$ such that
       $\ec_{k-1}(c(q,n_{k-1})) = \act$.
Since $n_k$ is the unique successor of $n_{k-1}$, $\exists Q \in
      \{\BT, \BF\}. \{n_{k-1}, n_k\} \subseteq G_Q(q)$.
Therefore, we have $\ec_{k-1}(c(q,n_k)) = \act$.
Since the execution of $n_{k-1}$ does not change the state of $c(q,n_k)$
       by the definition of $\udec$, we have $\ec_k(c(q,n_k)) = \act$.
If $n_{k-1}$ is an $\iif$ statement\footnote{Here we assume that
       $n_{k-1}$ has successors $n_k$ and the one different from
       $n_k$. Otherwise the above argument applies.}, $c(n_{k-1}, n_k)
       \in C$.
Since $\sigma_k \stackrel{n_k}{\to} \sigma_{k+1}$ and $\sigma_k(e) =
       \Se(n_k,e)\av_k$, we have $\ec_k(c(n_k,p)) = \act$ by the
       definition of $\udec$.
Thus, conjunct (3) follows.
\item By Lemma \ref{lemma:condC1}, conjunct (4) follows.
\end{itemize}
\item[Proof of $\pcondF(s_k,n_k)$.]
For all $f(p,n_k) \in F$, by Lemma \ref{lemma:LIDD}, $p \in
	   \UR(n_k)$ or $p \in \UR'(n_k) - \UR(n_k)$ holds.
Then, by Lemma \ref{lemma:condFD}, we have $\ec_k(f(p,n_k)) = \chk$ and
	   the claim follows.
\item[Proof of $pcondD(s_k,n_k)$.]
For all $d(p,n_k) \in F$, by Lemma \ref{lemma:Deforder}, $p \in
	   \UR(n_k)$ or $p \in \UR'(n_k) - \UR(n_k)$ holds.
Then, by Lemma \ref{lemma:condFD}, we have $\ec_k(f(p,n_k)) = \chk$ and
	   the claim follows.
\item[Proof of $\pcondL(s_k,n_k)$.]
It immediately follows from Lemma \ref{lemma:condL1}.
\end{description}
\end{proof}

By this lemma, the following important theorem is proved.
This theorem states that for any run of $P$ there is a corresponding run
of $G$.

\begin{theorem} \label{thm:corresponding_run}
Suppose $\sigma_0 \approx \av_0$.
For a run $\sigma_0 \stackrel{n_0}{\to} \sigma_1 \stackrel{n_1}{\to}
 \dots \stackrel{n_k}{\to} \sigma_{k+1}$ of $P$, there is a run $s_0
 \stackrel{n_0}{\to} s_1 \stackrel{n_1}{\to} \dots \stackrel{n_k}{\to}
 s_{k+1}$ of $G$ that satisfies $\forall i \in [0,k]. \sigma_i
 \approx_{n_i} \av_i$.
\end{theorem}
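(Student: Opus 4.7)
The plan is to proceed by induction on $k$, the index of the final transition of the given CFG run, building the corresponding PDG run one step at a time. At each step, Lemma~\ref{lemma:sequential_execution} provides executability in the PDG and the updates $\udav$ and $\udec$ determine the new state. For the base case $k = 0$ we take $s_0 = (\av_0, \ec_0)$ with the initial econf of Definition~\ref{def:pdgsemantics}; the hypothesis $\sigma_0 \approx \av_0$ already entails $\sigma_0 \approx_{n_0} \av_0$, and the correspondence precondition of Lemma~\ref{lemma:sequential_execution} is vacuous on the empty PDG prefix, so the lemma yields $n_0 \in \Next(s_0)$ and we set $s_1 = (\udav(n_0, \av_0), \udec(n_0, s_0))$.

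For the inductive step, assume the theorem for all shorter CFG runs. Given $\sigma_0 \stackrel{n_0}{\to} \cdots \stackrel{n_k}{\to} \sigma_{k+1}$, truncate at $\sigma_k$ and apply the induction hypothesis to obtain a PDG run $s_0 \stackrel{n_0}{\to} \cdots \stackrel{n_{k-1}}{\to} s_k$ with $\sigma_i \approx_{n_i} \av_i$ for $i \in [0, k-1]$. The heart of the proof is to establish the additional correspondence $\sigma_k \approx_{n_k} \av_k$; once this is in hand, Lemma~\ref{lemma:sequential_execution} gives $n_k \in \Next(s_k)$ and we extend by $s_{k+1} = (\udav(n_k, \av_k), \udec(n_k, s_k))$.

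To prove $\sigma_k \approx_{n_k} \av_k$, fix $y \in \puse(n_k)$ and let $j$ be the largest index in $[0, k-1]$ with $n_j = (y := e_j)$, if any. If no such $j$ exists, then $\sigma_k(y) = \sigma_0(y) = \av_0(n_k, y)$ by $\sigma_0 \approx \av_0$; since $\udav$ touches entries indexed by $y$ only at assignments to $y$, no earlier PDG step has modified $\av(n_k, y)$, so $\av_k(n_k, y) = \av_0(n_k, y)$. Otherwise $\sigma_k(y) = \Se(e_j)\sigma_j = \Se(n_j, e_j)\av_j$ by the induction hypothesis at step $j$. By maximality of $j$, no $n_i$ with $j < i < k$ redefines $y$, so the run fragment $n_j n_{j+1} \cdots n_k$ is a reaching path in $P$ witnessing $(n_j, n_k) \in F \oplus L$; Definition~\ref{def:udav} then sets $\av_{j+1}(n_k, y) = \Se(n_j, e_j)\av_j$, and the same maximality ensures this value persists to $\av_k(n_k, y)$.

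The main obstacle I anticipate is exactly this case analysis: one must carefully confirm that the CFG fragment from the most recent assignment to $y$ really does constitute a reaching path per Definition~\ref{def:dd} (so that the induced edge lies in $F \oplus L$ and is therefore acted on by $\udav$), and that no intervening PDG step between $j$ and $k$ overwrites $\av(n_k, y)$. The bookkeeping is routine but must faithfully interleave the CFG semantics, the definition of data dependence, and the local-avail update rule.
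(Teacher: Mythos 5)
Your proposal is correct and follows essentially the same route as the paper's own proof: induction on $k$, with Lemma~\ref{lemma:sequential_execution} supplying executability of $n_k$ in the PDG, and the ``last assignment to $y$'' case analysis (reaching path giving an edge in $F \oplus L$, hence an $\udav$ update that persists by maximality of $j$, versus falling back on $\sigma_0 \approx \av_0$) establishing $\sigma_k \approx_{n_k} \av_k$. The only cosmetic difference is that you verify the new correspondence before invoking the lemma while the paper does so after extending the run; both orders are sound since the lemma only needs the correspondence at already-executed nodes.
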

\begin{proof}
By induction on $k$ .\\
\noindent \textbf{(Base Case)}
$k=0$.
For a run $\sigma_0 \stackrel{n_0}{\to} \sigma_1$ of $P$, since we know
 $\sigma_0 \approx \av_0$ by the assumption, $n_0 \in \Next(s_0)$
 follows by Lemma \ref{lemma:sequential_execution}.
Henceforth, there is a run $s_0 \stackrel{n_0}{\to} s_1$ of $G$.
Moreover, by the assumption we have $\sigma_0 \approx_{n_0} \av_0$.

\noindent \textbf{(Induction Step)}
By the induction hypothesis, for $P$'s run $\sigma_0 \stackrel{n_0}{\to}
 \sigma_1 \stackrel{n_1}{\to} \dots \stackrel{n_k}{\to} \sigma_{k+1}$,
 there is a $G$'s run $s_0 \stackrel{n_0}{\to} s_1 \stackrel{n_1}{\to}
 \dots \stackrel{n_k}{\to} s_{k+1}$ that satisfies $\forall i \in
 [0,k]. \sigma_i \approx_{n_i} \av_i$.
Suppose that $\sigma_{k+1} \stackrel{n_{k+1}}{\to} \sigma_{k+2}$.
By Lemma \ref{lemma:sequential_execution}, we have $n_{k+1} \in
 \Next(s_{k+1})$.
Thus, there is a $G$'s run $s_0 \stackrel{n_0}{\to} s_1
 \stackrel{n_1}{\to} \dots \stackrel{n_k}{\to} s_{k+1}
 \stackrel{n_{k+1}}{\to} s_{k+2}$.
We then show $\sigma_{k+1} \approx_{n_{k+1}} \av_{k+1}$, that is to say,
 $\forall x \in \puse(n_{k+1}). \sigma_{k+1} = \av_{k+1}(n_{k+1}, x)$.
For $x \in \puse(n_{k+1})$, if an assignment statement defining $x$ is
 executed between $n_0$ and $n_k$, then let the last one among such
 statements be $n_j = (x := e)$.
Then we have $(n_j, n_{k+1}) \in F \oplus L$.
By the definition of $\udav$, $\av_{k+1}(n_{k+1}, x) = \Se(n_j,
 e)\av_j$ holds.
Since we have $\sigma_j \approx_{n_j} \av_j$ by the induction hypothesis,
 $\Se(e)\sigma_j = \Se(n_j,e)\av_j$ follows.
Meanwhile, by the operational semantics of the CFG, we have
 $\sigma_{j+1}(x) = \Se(e)\sigma_j$.
Hence, $\sigma_{j+1}(x) = \av_{j+1}(n_{k+1}, x)$.
Since no assignment statements defining $x$ are executed between $n_j$ and
 $n_{k-1}$, we have $\sigma_{j+1}(x) = \sigma_{k+1}(x)$ and
 $\av_{j+1}(n_{k+1},x) = \av_{k+1}(n_{k+1},x)$.
Therefore, $\sigma_{k+1}(x) = \av_{k+1}(n_{k+1},x)$ follows.
If no assignment statements defining $x$ are executed between $n_0$ and
 $n_k$, then we have $\sigma_{k+1}(x) = \sigma_0(x)$ and
 $\av_{k+1}(n_{k+1},x) = \av_0(n_{k+1},x)$.
Since $\sigma_0 \approx \av_0$ from the assumption, $\sigma_{k+1}(x) =
 \av_{k+1}(n_{k+1},x)$ follows.
\end{proof}

\subsection{Proof of Property 2}
Next we prove Property 2.
Property 2 says that all runs of a dPDG starting with the same initial
state end with the same final state (if they terminate).
This property is derived from \emph{confluence} of the runs of a dPDG
(Lemma \ref{lemma:diamond_property}) which says if there are multiple
executable statements in a run of a dPDG, the order of execution does
not change the final result.
This property is proved by the fact that an execution of a statement
does not affect executability and the execution results of the other
statements that are simultaneously executable in the dPDG (Lemma
\ref{lemma:NextFL}, \ref{lemma:NextD}, \ref{lemma:dPDGcondC1} and
\ref{lemma:dPDG_non_interference}).
Other lemmata are used to prove the above lemmata or Theorem
\ref{thm:dpdg}.

The next lemma states that two executable nodes at a run of a PDG are
contained in the subgraph of the same truth value of some $\iif$
statement.

\begin{lemma} \label{lemma:executable_then_mca}
 Let $s_0 \stackrel{n_0}{\to} s_1 \stackrel{n_1}{\to} \dots$ be a run of
 a PDG.
Suppose $c(o_1,p) \in C$, $c(o_2,q) \in C$, $p \not \in G^*(q)$ and $q
 \not \in G^*(p)$.
For any $i$, if $\ec_i(o_1,p) = \ec_i(o_2,q) = \act$ then $\exists
 r \in \mca(p,q). \exists Q \in \{\BT,\BF\}. \{p,q\} \subseteq G^*_Q(r)$.
\end{lemma}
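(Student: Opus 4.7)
I would proceed by induction on $i$. In the base case $i=0$, the initial econf $\ec_0$ places only $C$-edges outgoing from $\pentry$ in state $\act$; since the PDG does not carry over the exit node of the augmented CFG, all such edges are $\ct$-edges, so $o_1 = o_2 = \pentry$ and both $p$ and $q$ are $\ct$-children of $\pentry$. The unique acyclic $C$-path from $\pentry$ to either target has length one, giving $\pentry \in \mca(p,q)$ and $\{p,q\} \subseteq G^*_T(\pentry)$; take $r=\pentry$ and $Q=\BT$.

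For the inductive step I consider the transition $s_i \stackrel{n_i}{\to} s_{i+1}$. Inspection of $\udec$ shows that a $C$-edge is set to $\act$ only when $n_i$ is an $\iif$-statement, and then precisely the outgoing edges of $n_i$ carrying the evaluated truth value are newly activated. Hence if $n_i$ is not an $\iif$-statement, both witness edges were already $\act$ at $s_i$ and the induction hypothesis closes the case. Assume now that $n_i$ is an $\iif$-statement with, without loss of generality, $\Se(n_i, e)\av_i = \BT$. If both witness edges were $\act$ at $s_i$, apply the IH. If both are newly $\act$, then $o_1 = o_2 = n_i$ and $p,q$ are two distinct $\ct$-children of $n_i$; the longest common prefix of the $\pentry$-to-$p$ and $\pentry$-to-$q$ paths terminates at $n_i$, so $r=n_i$ and $Q=\BT$ work.

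The delicate subcase is when exactly one of the witness edges is newly $\act$, say $(o_1, p) = (n_i, p)$, while $(o_2, q)$ was already $\act$ at $s_i$. Because $n_i$ is executed at $s_i$, $\pcondC(s_i, n_i)$ produces an incoming $\act$-edge $(o_1', n_i)$ and, crucially, asserts that every $q' \in G^*(n_i) \setminus \{n_i\}$ has no $\act$ incoming $C$-edge. The ``any type'' rule of $\udec$ sets every incoming $C$-edge of $n_i$ to $\unchk$ at $s_{i+1}$, so the persistence $\ec_{i+1}(o_2, q) = \act$ forces $q \neq n_i$, and the second conjunct of $\pcondC$ then yields $q \not\in G^*(n_i)$. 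Moreover, $p$ is a $C$-child of $n_i$, hence $p \in G^*(n_i)$; the hypothesis $p \not\in G^*(q)$, together with the closure of $G^*(q)$ under $C$-reachability, gives $n_i \not\in G^*(q)$. The IH then applies to the pair $(o_1', n_i)$ and $(o_2, q)$ at step $i$, producing some $r \in \mca(n_i, q)$ and $Q' \in \{\BT, \BF\}$ with $\{n_i, q\} \subseteq G^*_{Q'}(r)$. Extending the $\pentry$-to-$n_i$ path by the single $C$-edge from $n_i$ to $p$ leaves the longest common prefix with the $\pentry$-to-$q$ path unchanged, so $r \in \mca(p,q)$; finally $p \in G^*(n_i) \subseteq G^*_{Q'}(r)$ concludes the argument.

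The main obstacle is establishing $q \not\in G^*(n_i)$ in this last subcase: without it, the IH cannot be invoked on $(o_1', n_i)$ and $(o_2, q)$, and the induction does not go through. The key insight is that this exclusion is encoded directly in the second conjunct of $\pcondC(s_i, n_i)$, once one uses the transition-level fact that incoming $C$-edges of $n_i$ are set to $\unchk$ at $s_{i+1}$ in order to rule out $q = n_i$.
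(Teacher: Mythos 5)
Your proof is correct and follows essentially the same route as the paper's: induction on $i$ with a case split on whether the two witness edges were already $\act$, both newly activated, or mixed, resolving the mixed case by applying the induction hypothesis to the executed $\iif$ node paired with the other target and then pushing the result one $C$-edge further. You are in fact somewhat more careful than the paper in the mixed case, where you explicitly verify the side conditions $q \neq n_i$, $q \not\in G^*(n_i)$ and $n_i \not\in G^*(q)$ needed to invoke the induction hypothesis — steps the paper's proof uses implicitly.
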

\begin{proof}
 By induction on $i$.\\
\noindent \textbf{(Base Case)}
If $i=0$, $o_1=o_2=\pentry$.
Obviously, $\{p,q\} \subseteq G_T(\pentry)$.
Since $\mca(p,q) = \{\pentry\}$, the claim holds.

\noindent \textbf{(Induction Step)}
\begin{enumerate}
 \item If $\ec_{i-1}(o_1,p) = \ec_{i-1}(o_2,q) = \act$, by the induction
       hypothesis the claim holds.
 \item If $\ec_{i-1}(o_1,p) \neq \act$ and $\ec_{i-1}(o_2,p) \neq \act$,
       then by the definition of $\udec$ and $\pcondC$, $o_1 = o_2 =
       n_{i-1}$ and $\exists Q \in \{\BT,\BF\}. \{p,q\} \subseteq
       G_Q^*(n_{i-1})$ must be the case.
       Here suppose $n_{i-1} \not \in \mca(p,q)$, then in the CDG either
       $p$ or $q$ must lie on a path from $\pentry$ to $n_{i-1}$.
       This contradicts to either $q \in G^*(p)$ or $p \in G^*(q)$.
       Henceforth, $n_{i-1} \in \mca(p,q)$.
 \item If $\ec_{i-1}(o_1,p) = \act$ and $\ec_{i-1}(o_2,p) \neq \act$,
       then $n_{i-1} = o_2$ and must be some $(o,o_2) \in C$ such that $\ec_{i-1}(o,o_2) =
       \act$ by the definition of $\udec$ and $\pcondC$.
       By the induction hypothesis, there is $r_1 \in \mca(p,o_2)$ and
       $Q \in \{\BT, \BF\}$ such that $\{p,o_2\} \subseteq G^*_Q(r_1)$.
       Since $(o_2,q) \in C$, we have $\{p,q\} \subseteq G^*_Q(r_1)$.
       If we assume that $r_1 \not \in \mca(p,q)$, either $p$ or $q$
       must lie on a path from $\pentry$ to $r_1$ in the CDG.
       This contradicts to either $q \in G^*(p)$ or $p \in G^*(q)$.
       Henceforth, $n_{i-1} \in \mca(p,q)$.
\end{enumerate}
\end{proof}

\begin{lemma} \label{lemma:NextC}
 Let $s_0 \stackrel{n_0}{\to} s_1 \stackrel{n_1}{\to} \dots$ be a run of
 a PDG.
For any $p$ and $q$ ($p \neq q$), if $\{p,q\} \subseteq
 \Next(s_i)$, then $p \not \in G^*(q)$ and $q \not \in G^*(p)$.
\end{lemma}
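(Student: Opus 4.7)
The plan is to derive the conclusion almost directly from the second conjunct of $\pcondC$. Since $p, q \in \Next(s_i)$, by definition both satisfy $\pcondCFLD$, and in particular $\pcondC(s_i, p)$ and $\pcondC(s_i, q)$ both hold.

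First I would unpack $\pcondC(s_i, q)$, which supplies an incoming control edge $c(o_2, q) \in C$ with $\ec_i(c(o_2, q)) = \act$ from its first conjunct. Then I would unpack the second conjunct of $\pcondC(s_i, p)$, namely
\[
\forall r \in G^*(p).\; r \neq p \Rightarrow \forall c(o,r) \in C.\; \ec_i(c(o,r)) \neq \act.
\]
Now I would argue by contradiction: suppose $q \in G^*(p)$. Since $p \neq q$, instantiating the displayed formula with $r = q$ and $o = o_2$ yields $\ec_i(c(o_2, q)) \neq \act$, contradicting the activation obtained from $\pcondC(s_i, q)$. Hence $q \notin G^*(p)$, and the symmetric argument, swapping the roles of $p$ and $q$ and using the second conjunct of $\pcondC(s_i, q)$ together with the first conjunct of $\pcondC(s_i, p)$, yields $p \notin G^*(q)$.

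There is essentially no obstacle here: the lemma is an immediate structural consequence of how $\pcondC$ is stated, because the second conjunct precisely forbids any other node in the $G^*$-cone from having a currently activated incoming control edge. The only care needed is to invoke the hypothesis $p \neq q$ so that we are entitled to instantiate the universal quantifier $r \neq p$ (resp. $r \neq q$) at the other executable node.
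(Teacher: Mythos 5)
Your proof is correct and follows the same route as the paper: the paper's own argument is exactly that $q \in G^*(p)$ (or $p \in G^*(q)$) contradicts the second conjunct of $\pcondC$, which you have simply spelled out in full, including the correct use of $p \neq q$ to instantiate the universal quantifier and of the first conjunct of $\pcondC$ at the other node to produce the activated edge.
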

\begin{proof}
 $p \in G^*(q)$ contradicts to the second conjunct of the predicate $\pcondC$.
$q \in G^*(p)$ does the same.
\end{proof}

\begin{lemma} \label{lemma:NextFL}
 Let $s_0 \stackrel{n_0}{\to} s_1 \stackrel{n_1}{\to} \dots$ be a run of
 a PDG.
For any $p$ and $q$ ($p \neq q$), if $\{p,q\} \subseteq \Next(s_i)$ then
 there are no $F$- and $L$-edges between $p$ and $q$.
\end{lemma}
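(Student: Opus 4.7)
The plan is to proceed by contradiction. Suppose an $F$- or $L$-edge between $p$ and $q$ exists. By swapping the roles of $p$ and $q$, it suffices to rule out $(p,q) \in F$ and $(p,q) \in L$. The strategy is to expose a conflict between the $\pcondF$ (or $\pcondL$) requirement that forces the dependence edge to be $\chk$, and the $\pcondC$ requirement that forces the source node still to be ``fresh,'' i.e.\ awaiting execution. The key tool is to identify the most recent step at which the dependence edge transitioned from $\unchk$ to $\chk$, and then exploit the $\udec$ update rules.

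For the $F$ case, observe first that $p$ must be an assignment (since $\iif$ nodes have no outgoing $F$-edges). From $q \in \Next(s_i)$ and $\pcondF(s_i, q)$ we get $\ec_i(f(p,q)) = \chk$. Because $\ec_0(f(p,q)) = \unchk$, pick the greatest $j < i$ at which this edge transitioned $\unchk \to \chk$. Inspection of Definition \ref{def:udec} shows the only triggers are: (i) $n_j = p$ executed as an assignment, or (ii) $n_j$ is an $\iif$ evaluated to some $Q$ with $p \in G_{\bar{Q}}(n_j) \setminus G_Q(n_j)$. In case (i), executing $p$ sets every $c(r,p) \in C$ to $\unchk$, so for $p \in \Next(s_i)$ some $\iif$ $r$ must be executed between $j$ and $i$ to reactivate some $c(r,p)$. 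That execution puts $p \in G_{Q_r}(r) \setminus \{r\}$ (using $p \neq r$, since $p$ is an assignment and $r$ is a conditional), whence $\udec$ resets $\ec(f(p,q))$ to $\unchk$ at that step; a further $\unchk \to \chk$ transition would then be needed before step $i$, contradicting the maximality of $j$. Case (ii) is handled by the same reactivation analysis: any subsequent $\iif$ execution that activates a $C$-edge into $p$ simultaneously toggles $\ec(f(p,q))$ back to $\unchk$, again contradicting maximality.

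For the $L$ case, $p \in \Next(s_i)$ together with $\pcondL(s_i, p)$ and $p \neq q$ give $\ec_i(l(p,q)) = \chk$, and one runs the same template. The $\udec$ rules for $L$-edges are more intricate — the $\chk$/$\unchk$ update to $l(p,q)$ by an $\iif$ execution depends jointly on the source's and destination's positions relative to $G_T(n)$ and $G_F(n)$ — but the greatest-$\unchk$-to-$\chk$-step argument still isolates a reactivation of $p$'s $C$-incoming that must in turn reset $l(p,q)$ and contradict maximality. The main obstacle throughout is case (ii): because an $\iif$-induced $\chk$-update to a data-dependence edge does not directly disturb the source's $C$-incoming, isolating the reactivating event and showing it resets the dependence edge requires careful case-splitting. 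A clean way to package the whole argument is to prove, by induction on $m$, the invariant that if $p \in \Next(s_m)$ and $(p,q) \in F$ (resp.\ $L$), then $\ec_m(f(p,q)) \neq \chk$ (resp.\ $\ec_m(l(p,q)) \neq \chk$), from which the lemma is immediate.
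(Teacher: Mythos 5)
Your $F$-case is sound and is essentially the paper's own argument run in the opposite direction: the paper starts from the last activation of a $C$-edge $c(o,p)$ (forced by $p \in \Next(s_i)$), notes that this activation puts $p \in G_Q(o)-\{o\}$ and hence resets $f(p,q)$ to $\unchk$ via $\udec$, and then shows the edge cannot be re-checked before step $i$ because $p \notin G(n_k)$ at every intermediate step; you instead start from the last $\unchk\to\chk$ transition of $f(p,q)$ and show that the subsequent reactivation of $p$'s incoming $C$-edge must undo it. Both hinge on the same $\udec$ rule. Your case (ii) does close, but only once you add the observation you left implicit: $\pcondC(s_j,n_j)$ together with $p \in G_{\bar{Q}}(n_j) \subseteq G^*(n_j)$ forces every $C$-edge into $p$ to be inactive at step $j+1$, so a later reactivation is indeed mandatory.

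The $L$-case, however, has a genuine gap: you track the wrong node. For $l(p,q)$ the executability constraint $\pcondL$ sits on the \emph{source} $p$ (it demands $\ec_i(l(p,q))=\chk$), but the $\udec$ events that toggle $l(p,q)$ are governed by the \emph{destination} $q$ --- the edge is set to $\chk$ when $q$ is executed or when $q$ lands in a skipped branch, and the $\unchk$-reset triggered by an $\iif$ node $n$ that activates a $C$-edge into $p$ fires only when $q$ also lies in $G_{Q}(n)$. So the claim that ``a reactivation of $p$'s $C$-incoming \dots must in turn reset $l(p,q)$'' is false in general. The paper instead pivots to $q$: from $q\in\Next(s_i)$ it takes the last activation of some $c(o,q)$, argues that $\ec(l(p,q))$ is $\unchk$ from that point on (executing $q$ and the skipped-branch rule are both excluded while $c(o,q)$ stays $\act$, since $q\notin G(n_k)$ for the intermediate steps), and contradicts $\pcondL(s_i,p)$. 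Your closing invariant confirms the slip: ``if $p\in\Next(s_m)$ and $(p,q)\in L$ then $\ec_m(l(p,q))\neq\chk$'' contradicts $\pcondL(s_m,p)$ all by itself, i.e.\ it would assert that no node with an outgoing $L$-edge is ever executable, which is absurd (loop bodies with loop-carried dependences do execute). The invariant you need for the $L$ half must be conditioned on $q\in\Next(s_m)$, not on $p\in\Next(s_m)$.
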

\begin{proof} We split the proof into the following cases.
 \begin{itemize}
  \item Let $(p,q) \in F$.
	Since $p \in \Next(s_i)$, there is $(o,p) \in C$ such that
	$\ec_i(o,p) = \act$.
	Therefore, there is some $j < i$ such that $n_j = o$ and
	$\forall k \in (j,i). \ec_k(o,p) = \act$ (this implies $p$ is
	not executed between $s_j$ and $s_i$).
	By the definition of $\udec$, $\ec_{j+1}(f(p,q)) = \unchk$.
	Furthermore, by the definition of $\pcondC$, we have $\forall k
	\in (j,i). p \not \in G(n_k)$ since $\ec_k(o,p) = \act$.
	Therefore, $\ec_i(f(p,q))=\ec_{j+1}(f(p,q)) = \unchk$.
	This contradicts to $q \in Next(s_i)$.
  \item Let $(p,q) \in L$.
	Since $q \in \Next(s_i)$, there is $(o,q) \in C$ such that
	$\ec_i(o,q) = \act$.
	Therefore, there is some $j < i$ such that $n_j = o$ and
	$\forall k \in (j,i). \ec_k(o,q) = \act$ (this implies $q$ is
	not executed between $s_j$ and $s_i$).
	By the definition of $\udec$, $\ec_{j+1}(l(p,q)) = \unchk$.
	Furthermore, by the definition of $\pcondC$, we have $\forall k
	\in (j,i). q \not \in G(n_k)$ since $\ec_k(o,q) = \act$.
	Therefore, $\ec_i(l(p,q))=\ec_{j+1}(l(p,q)) = \unchk$.
	This contradicts to $p \in Next(s_i)$.
  \item The proofs of the cases for $(q,p) \in F$ and $(q,p) \in L$ are
	the same as above.
 \end{itemize}
\end{proof}

\begin{lemma} \label{lemma:NextD}
Let $s_0 \stackrel{n_0}{\to} s_1 \stackrel{n_1}{\to} \dots$ be a run of
 a deterministic PDG.
For any $p$ and $q$ ($p \neq q$), if $f_x(p,n) \in F$ and $f_x(q,n) \in
 F$ for some $n$, then $\{p,q\} \not \subseteq \Next(s_i)$ holds.
\end{lemma}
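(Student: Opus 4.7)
The plan is to argue by contradiction, following the template of the proof of Lemma \ref{lemma:NextFL}. Assume $\{p, q\} \subseteq \Next(s_i)$. From $\pcondC$ applied to both nodes there are activated control edges $c(o_1, p)$ and $c(o_2, q)$ at state $s_i$. Lemma \ref{lemma:NextC} gives $p \not\in G^*(q)$ and $q \not\in G^*(p)$, and Lemma \ref{lemma:executable_then_mca} then supplies $r \in \mca(p, q)$ together with $Q \in \{\BT, \BF\}$ such that $\{p, q\} \subseteq G_Q^*(r)$.

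Because $G$ is a deterministic PDG, Condition 2 of Definition \ref{def:dPDG} applied to $f_x(p, n), f_x(q, n) \in F$ yields $d(p, q) \in D$ or $d(q, p) \in D$; without loss of generality, take $d(p, q) \in D$. Since $q \in \Next(s_i)$, $\pcondD(s_i, q)$ requires $\ec_i(d(p, q)) = \chk$. The remainder of the proof shows that in fact $\ec_i(d(p, q)) = \unchk$, yielding the contradiction.

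As in Lemma \ref{lemma:NextFL}, let $j < i$ be the largest step at which $n_j = o_1$ was executed with a condition that activated $c(o_1, p)$ (the corner case $o_1 = \pentry$ is handled by running the invariance argument from step $0$, where $\ec_0(d(p, q)) = \unchk$). By the $\iif$-clause of $\udec$, $\ec_{j+1}(d(p, q)) = \unchk$, since $p \in G_Q(o_1) - \{o_1\}$ and $(p, q) \in D$. The maximality of $j$ together with $\ec_i(c(o_1, p)) = \act$ prevents $p$ from being executed during $(j, i)$, and the second conjunct of $\pcondC$ for each $n_k$ in that interval forces $p \not\in G^*(n_k)$.

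The only $\udec$ clauses that could flip $d(p, q)$ back to $\chk$ in the interval are $n_k = p$ (excluded, since $p$ is not executed in the interval) or $p \in G_{\bar{R}}(n_k) - G_R(n_k)$ for some $\iif$ node $n_k$ with condition $R \in \{\BT, \BF\}$ (also excluded, since this would require $p \in G^*(n_k)$). Hence $\ec_i(d(p, q)) = \ec_{j+1}(d(p, q)) = \unchk$, contradicting $\pcondD(s_i, q)$. The main obstacle is not the flip-tracking itself, which is largely a replay of Lemma \ref{lemma:NextFL}, but the appeal to Condition 2 of Definition \ref{def:dPDG}: this condition is exactly what supplies the def-order edge between $p$ and $q$ needed as the witness whose state the tracking argument then pins down, and it is precisely what makes the lemma fail for non-deterministic PDGs.
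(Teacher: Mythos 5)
Your proposal is correct and follows essentially the same route as the paper's own proof: establish $p \not\in G^*(q)$ and $q \not\in G^*(p)$, invoke Lemma \ref{lemma:executable_then_mca} to satisfy the hypothesis of Condition 2 of Definition \ref{def:dPDG}, obtain the def-order edge $d(p,q)$ (or $d(q,p)$), and then replay the edge-state tracking argument of Lemma \ref{lemma:NextFL} to conclude $\ec_i(d(p,q)) = \unchk$, contradicting $\pcondD(s_i,q)$. The only cosmetic differences are that the paper derives the non-containment from $p$ and $q$ being assignment statements rather than from Lemma \ref{lemma:NextC}, and your treatment of the $o_1 = \pentry$ corner case is slightly more explicit than the paper's.
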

\begin{proof}
 Suppose $\{p,q\} \subseteq \Next(s_i)$.
Since $p$ and $q$ are assignment statements, we know $p \not \in G^*(q)$
 and $q \not \in G^*(p)$.
By Lemma \ref{lemma:executable_then_mca}, there must be some $r \in
 \mca(p,q)$ and $Q \in \{\BT,\BF\}$ such that $\{p,q\} \subseteq G^*_Q(r)$.
Then, by the condition of the dPDG, either $d(p,q) \in D$ or $d(q,p) \in D$
 holds.
We consider the case $d(p,q) \in D$.
Since $p \in \Next(s_i)$, there is $(o,p) \in C$ such that $\ec_i(o,p) =
 \act$.
Therefore, there is some $j < i$ such that $n_j = o$ and $\forall k \in
 (j,i). \ec_k(o,p) = \act$ (this implies $p$ is not executed between
 $s_j$ and $s_i$).
By the definition of $\udec$, $\ec_{j+1}(d(p,q)) = \unchk$.
Furthermore, by the definition of $\pcondC$, we have $\forall k
\in (j,i). p \not \in G(n_k)$ since $\ec_k(o,p) = \act$.
Therefore, $\ec_i(d(p,q))=\ec_{j+1}(d(p,q)) = \unchk$.
This contradicts to $q \in Next(s_i)$.
\end{proof}

\begin{lemma} \label{lemma:mca_subsumption}
 If $o \in G^*(p)$ and $p \not \in G^*(q)$ and $q \not \in G^*(p)$, then 
 $\mca(p,q) \subseteq \mca(o,q)$ holds.
\end{lemma}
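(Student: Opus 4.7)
The plan is to fix an arbitrary $r \in \mca(p,q)$ and exhibit witnesses showing $r \in \mca(o,q)$. By the definition of $\mca$, I can fix acyclic CDG-paths $\pi_p = v_0 v_1 \cdots v_k v_{k+1} \cdots v_m$ from $\pentry$ to $p$ and $\pi_q = u_0 u_1 \cdots u_k u_{k+1} \cdots u_n$ from $\pentry$ to $q$ satisfying $v_i = u_i$ for $i \le k$, $v_k = u_k = r$, $v_m = p$, $u_n = q$, and $v_{k+1} \neq u_{k+1}$ (divergence right after $r$). The strategy is to reuse $\pi_q$ as the witness to $q$ and construct an acyclic path $\pi_o$ from $\pentry$ to $o$ whose first $k+2$ vertices are $v_0, v_1, \ldots, v_k, v_{k+1}$; because $v_{k+1} \neq u_{k+1}$, the longest common prefix of $\pi_o$ and $\pi_q$ then ends exactly at $r$, giving $r \in \mca(o,q)$.

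To build $\pi_o$, I will start from the walk $\pi_p \cdot \tau$, where $\tau = p = w_0 w_1 \cdots w_l = o$ is any simple $C$-path from $p$ to $o$ (which exists since $o \in G^*(p)$), and apply standard shortcut-pruning: whenever the current walk contains a coincidence $w_i = v_j$ with $i \ge 1$ and $j < m$, I replace it by $v_0 \cdots v_j w_{i+1} \cdots w_l$, iterating until the walk is simple. Each such shortcut preserves the initial segment $v_0 \cdots v_k$ (and, in particular, the outgoing edge to $v_{k+1}$) provided every coincidence being pruned satisfies $j > k$.

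The key step, and the main obstacle, is to show that every coincidence $w_i = v_j$ arising during this process must satisfy $j > k$. I plan to argue by contradiction: assume $j \le k$, so $v_j = u_j$ lies on the common prefix of $\pi_p$ and $\pi_q$. Then concatenating the initial segment $w_0 w_1 \cdots w_i$ of $\tau$ with the suffix $u_j u_{j+1} \cdots u_n = q$ of $\pi_q$ produces a $C$-walk from $p$ to $q$ whose first edge is a genuine outgoing $C$-edge of $p$ (namely $(p, w_1)$ if $i \ge 1$, and in any case enters $q$ via $\pi_q$'s successors of $v_j$). This forces $q \in G^*(p)$, contradicting the hypothesis $q \not\in G^*(p)$. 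Exactly the same argument applies to every intermediate walk produced by pruning, because any residual coincidence is still of the form $v_{a'} = w_{b'}$ with $a' \le j_1$ and $b' > i_1$, and the contradiction goes through unchanged.

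Having established that all shortcuts remove vertices strictly past $r$, the surviving walk $\pi_o$ is a simple path through $v_0, v_1, \ldots, v_k = r, v_{k+1}, \ldots, o$. Its longest common prefix with $\pi_q$ is precisely $v_0 \cdots v_k$ (divergence at $r$ is guaranteed by $v_{k+1} \neq u_{k+1}$), so $r \in \mca(o,q)$. Since $r$ was an arbitrary element of $\mca(p,q)$, we conclude $\mca(p,q) \subseteq \mca(o,q)$. The hypothesis $p \not\in G^*(q)$ does not enter the main argument directly but underwrites the symmetry of the setting and rules out the degenerate situation in which $q$ itself lies on a CDG-path to $p$ (which would distort the $\mca$ analysis); the technical work is carried by $o \in G^*(p)$ (providing $\tau$) together with $q \not\in G^*(p)$ (preventing early coincidences).
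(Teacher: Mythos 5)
Your proof is correct and follows essentially the same route as the paper's: fix $r \in \mca(p,q)$ with witnessing acyclic paths that diverge at $r$, and extend the path to $p$ into an acyclic path to $o$ that still diverges from the path to $q$ at $r$. The paper simply asserts that the extension $\pi_3 = \pi_1 \dots o$ can be taken acyclic with the relevant prefix intact, whereas your shortcut-pruning argument (using $q \notin G^*(p)$ to rule out coincidences at or before $r$) supplies the justification the paper elides.
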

\begin{proof}
Since CDGs are connected and $\pentry$ is the root, there is a
 non-cyclic path $\pi=\pentry \dots r$ in the CDG.
Suppose $r \in \mca(p,q)$.
By the definition of the mca, we have two non-cyclic paths $\pi_1=\pi
 \dots p$ and $\pi_2=\pi \dots q$.
Since $o \in G^*(p)$, there is a non-cyclic path $\pi_3=\pi_1 \dots o$.
Here since the longest common prefix of $\pi_1$ and $\pi_2$ is $\pi$, and
 $q \not \in G^*(p)$ and $p \not \in G^*(q)$, we have $r \neq p$
 and $r \neq q$.
Therefore, the longest common prefix of $\pi_2$ and $\pi_3$ is $\pi$.
This shows $r \in \mca(o,q)$.
\end{proof}

The next lemma states that for a run of a dPDG, if there are two
executable statements, the execution of one node does not
affect the condition $\pcondC$ of the other.

\begin{lemma} \label{lemma:dPDGcondC1}
 Let $s_0 \stackrel{n_0}{\to} s_1 \stackrel{n_1}{\to} \dots$ be a run of
 a dPDG.
For any $p$ and $q (p \neq q)$, we assume $\{p,q\} \subseteq
 \Next(s_i)$.
If $s_i \stackrel{q}{\to} s_{i+1}$, the following holds:
\begin{equation}
\forall r \in G^*(p). r \neq p \Rightarrow \forall (w,r) \in
 C. \ec_{i+1}(w,r) \neq \act.
\end{equation}
\end{lemma}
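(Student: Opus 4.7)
The plan is to proceed by contradiction after using Definition \ref{def:udec} to reduce what must be shown. By inspection of $\udec$, the only $C$-edges whose state becomes $\act$ upon executing $q$ are the outgoing $\ct$- or $\cf$-edges of $q$, and only when $q$ is a conditional; every other rule touching a $C$-edge writes $\unchk$. The second conjunct of $\pcondC(s_i,p)$ (which holds because $p \in \Next(s_i)$) together with Lemma \ref{lemma:NextC} (giving $q \neq p$ and $q \not\in G^*(p)$, so no edge with target $q$ is of concern) therefore reduces the claim to: no $c(q,r) \in C$ has $r \in G^*(p) \setminus \{p\}$.

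Suppose for contradiction such an edge $c(q,r) \in C$ exists. Since $r \in G^*(p) \setminus \{p\}$, there is a $C$-path from $p$ to $r$; let $u$ be the immediate $C$-predecessor of $r$ along a shortest such path, so that $(u,r) \in C$ and $u \in G^*(p) \cup \{p\}$. By Lemma \ref{lemma:NextC}, $u \neq q$. Condition 1 of Definition \ref{def:dPDG} applied to the two $C$-parents $u$ and $q$ of $r$ then gives one of three subcases: (i) $u \in G^*(q)$, (ii) $q \in G^*(u)$, or (iii) $\forall r' \in \mca(u,q). \forall Q \in \{\BT,\BF\}. \{u,q\} \not\subseteq G_Q^*(r')$.

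Subcases (ii) and (iii) dispatch quickly. For (ii), the inclusion $G^*(u) \subseteq G^*(p) \cup \{p\}$ forces $q \in G^*(p)$, contradicting Lemma \ref{lemma:NextC}. For (iii), Lemma \ref{lemma:executable_then_mca} applied to $\{p,q\} \subseteq \Next(s_i)$ (whose non-reachability hypotheses come from Lemma \ref{lemma:NextC}) produces some $m \in \mca(p,q)$ and $Q_0$ with $\{p,q\} \subseteq G_{Q_0}^*(m)$; if $u = p$ this already contradicts (iii) at $m$, and otherwise $u \in G^*(p) \setminus \{p\}$ lets Lemma \ref{lemma:mca_subsumption} upgrade the inclusion to $\mca(p,q) \subseteq \mca(u,q)$, so $m \in \mca(u,q)$ with $\{u,q\} \subseteq G_{Q_0}^*(m)$, again contradicting (iii).

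The main obstacle is subcase (i), in which $u$ is a common descendant of $p$ and $q$. My plan is to choose $r$ at the outset to minimise the length of a shortest acyclic $C$-path from $p$ to $r$ among all counterexamples, and then, along an acyclic $C$-path $q = w_0, w_1, \ldots, w_\ell = u$, let $w_j$ be its first node lying in $G^*(p)$ (which exists because $u \in G^*(p)$ while $q \not\in G^*(p)$). The predecessor $w_{j-1}$ then lies in $G^*(q) \cup \{q\}$ but not in $G^*(p) \cup \{p\}$, whereas $w_j \in G^*(p) \setminus \{p\}$ still has some $C$-parent $v \in G^*(p) \cup \{p\}$ reached from $p$. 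Re-applying Condition 1 to the pair $v, w_{j-1}$ of $C$-parents of $w_j$ at once rules out the analogue of (ii) (since $G^*(v) \subseteq G^*(p) \cup \{p\}$ cannot contain $w_{j-1}$), handles the analogue of (iii) by the same twofold use of Lemmas \ref{lemma:executable_then_mca} and \ref{lemma:mca_subsumption}, and reduces the analogue of (i) to a configuration involving a strictly nearer common descendant, so that the finiteness of the CDG closes the induction. The delicate point is arranging the minimality choice sharply enough that this descent really is strict rather than circular; I expect this to be where the proof requires the most care.
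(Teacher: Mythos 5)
Your reduction via $\udec$ and your treatment of subcases (ii) and (iii) --- ruling out the first two disjuncts of Condition 1 with Lemma \ref{lemma:NextC} and killing the third with Lemmas \ref{lemma:executable_then_mca} and \ref{lemma:mca_subsumption} --- are exactly the paper's argument. The genuine gap is subcase (i), where the $p$-side parent $u$ of $r$ lands in $G^*(q)$, and this is not a peripheral loose end: it is where the entire difficulty of the lemma sits. Your proposed descent is not shown to be well-founded. After one round you obtain a node $v \in G^*(p) \cap G^*(q)$ that is a parent of $w_j$, but $w_j$ was located by its position on a $C$-path from $q$ to $u$, not by its distance from $p$, and $v$'s relation to the quantity you minimised at the outset (shortest $C$-path length from $p$ to the counterexample $r$) is not controlled; neither $d(p,\cdot)$ nor $d(q,\cdot)$ nor their sum visibly decreases between rounds, so as written the recursion could cycle.

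What would close the argument is a single combinatorial claim: the nonempty set $G^*(p)\cap G^*(q)$ contains a node with one direct $C$-parent in $(G^*(p)\cup\{p\})\setminus(G^*(q)\cup\{q\})$ and another in $(G^*(q)\cup\{q\})\setminus(G^*(p)\cup\{p\})$; for such a pair of parents the first two disjuncts of Condition 1 are impossible and your subcase-(iii) machinery yields the contradiction. Be aware that the paper itself does not prove this either: its proof simply asserts that ``there exists some $o \in G^*(p)-G^*(q)$ such that $r \in G^*(o)$'' and then invokes Condition 1 for the pair $(o,q)$, which share only the descendant $r$ rather than a common direct $C$-successor as Condition 1 literally requires. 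So you have correctly isolated the step the paper glosses over, but your proposal does not yet supply the missing well-founded descent (or the existence claim that would replace it), and until one of these is furnished the proof is incomplete.
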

\begin{proof}
 Suppose that claim (5) is violated by the execution of $q$.
By the definition of $\udec$, there exists some $r \in G^*(p)$ such that
 $(q,r) \in C$.
Since $\{p,q\} \subseteq \Next(s_i)$, by Lemma \ref{lemma:NextC}, we
 have $p \not \in G^*(q)$ and $q \not \in G^*(p)$.
By Lemma \ref{lemma:executable_then_mca}, there exists some $u \in
 \mca(p,q)$ and $Q \in \{\BT,\BF\}$ such that $\{p,q\} \subseteq
 G_Q^*(u)$.
Meanwhile, since $r \in G^*(p)$ and $(q,r) \in C$, there exists some $o
 \in G^*(p) - G^*(q)$ such that $r \in G^*(o)$.
That is to say, we have $r \in G^*(o)$ and $r \in G^*(q)$.
This implies $\forall v \in \mca(o,q). \forall Q \in
 \{\BT,\BF\}. \{o,q\} \not \subseteq G_Q^*(v)$ from the first condition
 of the dPDG.
However, by Lemma \ref{lemma:mca_subsumption}, we have $u \in \mca(p,q)
 \subseteq \mca(o,q)$ which leads to contradiction.
\end{proof}

\begin{lemma} \label{lemma:outgoing_looping_edge}
 Let $s_0 \stackrel{n_0}{\to} s_1 \stackrel{n_1}{\to} \dots$ be a run of
 a dPDG and $q \in \Next(s_i)$.
For $o \in G(q)$ such that $(o,p) \in F \oplus D$, we assume $\ec_i(o,p)
 = \chk$.
Then $q$ has an outgoing looping edge.
\end{lemma}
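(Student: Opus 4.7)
My plan is to argue by contradiction: suppose $q$ has no outgoing looping edge, and derive $\ec_i(o, p) \neq \chk$. Since every outgoing $C$-edge of $q$ then lies in $\widehat{C}$, the $C$-edge $c(q, r)$ witnessing $o \in G(q)$ is in $\widehat{C}$, so $o$ is $\widehat{C}$-reachable from $q$ along a chain $q, r, r_1, \ldots, r_{m-1}, o$ in the CDG. From $q \in \Next(s_i)$ I pick $w$ with $\ec_i(w, q) = \act$ and let $j < i$ be the last step at which $n_j = w$; the $\udec$ rule that executing $q$ resets its incoming $C$-edges to $\unchk$, combined with the maximality of $j$, forces $q$ not to be executed anywhere in $(j, i]$. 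Letting $Q \in \{\BT, \BF\}$ be the value of $w$ at step $j$ (so that $q$ is a $Q$-successor of $w$), we have $q \in G_Q(w)$ and therefore $o \in G_Q(w) \setminus \{w\}$, so the clause of $\udec$ for $w$'s execution sets $\ec_{j+1}(o, p) = \unchk$.

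The remaining work is to show that $\ec_k(o, p)$ stays $\unchk$ throughout $[j+1, i]$, which directly contradicts the hypothesis. Inspecting $\udec$, the state of $(o, p)$ can be promoted to $\chk$ only if at some step $k \in (j, i-1]$ either (a)~$n_k = o$ is executed, or (b)~$n_k$ is an $\iif$ with $o \in G_{\bar{Q_k}}(n_k) \setminus G_{Q_k}(n_k)$. Both possibilities require some $C$-edge $c(u, n_k)$ with $n_k \in G^*(q) \setminus \{q\}$ to be $\act$ at $s_k$, so it suffices to prove the invariant that, throughout $[j+1, i]$, no $C$-edge into any node of $G^*(q) \setminus \{q\}$ is $\act$.

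I would prove this invariant by induction on $k$. The base case $k = j+1$ follows from the second conjunct of $\pcondC(w, s_j)$ (which holds since $w \in \Next(s_j)$ and $G^*(q) \setminus \{q\} \subseteq G^*(w) \setminus \{w\}$), together with the observation that the $\udec$ clauses triggered by $w$'s execution do not activate any $C$-edge internal to $G^*(q)$. For the inductive step, an activation $\ec_{k+1}(c(u, r')) = \act$ would demand $n_k = u$ to be executed with matching value, whence $u$ itself must satisfy $\pcondC$ at $s_k$, requiring some $C$-edge into $u$ to be $\act$; if $u \in G^*(q)$ this is ruled out by the induction hypothesis applied to $u$, and if $u \notin G^*(q)$, then $r'$ has two $C$-predecessors from incompatible branches, to which condition~1 of the dPDG definition applies and forces contradiction. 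Since $q$ itself is never re-executed in $(j, i]$, the chain from $q$ can also never be reinitialised at its root, blocking both cases. The main obstacle is precisely this last sub-case: ruling out ``side-door'' activations into $G^*(q)$ from outside is where dPDG condition~1 plays its essential role, which is why the lemma is restricted to deterministic PDGs.
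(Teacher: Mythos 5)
Your argument matches the paper's proof essentially step for step: assume $q$ has no outgoing looping edge, locate the last execution step $j$ of the node $w$ controlling $q$, use the absence of looping edges to place $G(q)$ inside the taken branch $G_Q(w)$ so that $\udec$ resets $(o,p)$ to $\unchk$ at $s_{j+1}$, and then show that no node of $G^*(q)\setminus\{q\}$ can have an incoming $C$-edge activated before step $i$, so $(o,p)$ stays $\unchk$, contradicting $\ec_i(o,p)=\chk$. The only real difference is that the no-activation invariant you prove by hand from condition~1 of the dPDG definition is precisely the content of Lemma~\ref{lemma:dPDGcondC1}, which the paper instead invokes repeatedly after observing $q \in \Next(s_{j+1})$ (and your explicit worry about the $\chk$-promoting clause of $\udec$ for an $\iif$ outside $G^*(q)$ is a point the paper glosses over rather than resolves).
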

\begin{proof}
 We assume that all $C$-edges from $q$ are not looping edges.
Since $q \in \Next(s_i)$, there is $(r,q) \in C$ such that
$\ec_i(r,q) = \act$.
Therefore, there is some $j < i$ such that $n_j = r$ and
$\forall k \in (j,i). \ec_k(r,q) = \act$.
Suppose $q \in G_T(r)$ (the case of $G_F(r)$ is the same)．
Since $C$-edges from $q$ are not looping edges, we have $G(q) \subseteq
 G_T(r)$.
Therefore, $\forall o \in G(q). \forall (o,p) \in F \oplus
 D. \ec_{j+1}(o,p) = \unchk$ by the definition of $\udec$.
Since $r \in \Next(s_j)$, there are no $C$-edges pointing to a node in
 $G(r)$ whose states are $\act$ at state $s_j$.
Moreover, since $G(q) \subseteq G_T(r)$, there are no $C$-edges pointing
 to a node in $G(q)$ whose states are $\act$ at state $s_{j+1}$.
Thus, $q \in \Next(s_{j+1})$.
If $s_{j+1} \stackrel{v}{\to} s_{j+2}$ (i.e., $\{q, v\} \subseteq
 \Next(s_{j+1})$), by applying Lemma \ref{lemma:dPDGcondC1} we have
 $\forall o \in G^*(q). o \neq q \Rightarrow \forall (w,o) \in
 C. \ec_{j+2}(w,o) \neq \act$.
By repeating this argument, we have $\forall o \in G^*(q). \forall (w,o)
 \in C. \ec_k(w,o) \neq \act$ for any $k \in (j,i]$.
That is to say, no $o \in G(p)$ is executed between $s_j$ and
 $s_i$.
Therefore, $\forall o \in G(q). \forall (o,p) \in F \oplus
 D. \ec_i(o,p) = \ec_{j+1}(o,p) = \unchk$, which leads to
 contradiction.
\end{proof}

The next lemma states that if there are multiple executable statements
in a run of a dPDG, execution of one statement does not interfere with
executability of the other statements.

\begin{lemma} \label{lemma:dPDG_non_interference}
 Let $s$ be a state in a run of a dPDG.
For all $p, q \in \Next(s)$ such that $p \neq q$, if $s
 \stackrel{q}{\to} s'$, then $p \in \Next(s')$ holds.
\end{lemma}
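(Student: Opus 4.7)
The plan is to verify each of the four conjuncts of $\pcondCFLD(s',p)$ in turn, showing that the execution of $q$ cannot invalidate any of them for $p$. The overall strategy is, for every $\udec$ clause that might change an edge state relevant to $p$, to use the hypotheses $p\neq q$ and $\{p,q\}\subseteq\Next(s)$ together with a dPDG condition or one of the earlier lemmas to rule that clause out.

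For $\pcondC(s',p)$ I would first establish the active-incoming-edge conjunct by inspection of $\udec$: no clause deactivates a $C$-edge except the general rule that unchecks incoming $C$-edges of the executed node (which targets $q\neq p$), and the $\iif$ clauses only activate outgoing $\ct$- or $\cf$-edges of $q$. Thus the active $C$-edge into $p$ at $s$ survives to $s'$. The second conjunct is then exactly Lemma~\ref{lemma:dPDGcondC1}.

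The key step is $\pcondF(s',p)$ together with $\pcondD(s',p)$. Here the only problematic $\udec$ clauses are the $\iif$ updates that reset $(w,p)$ to $\unchk$ when $w\in G_T(q)\setminus\{q\}$ (and its symmetric $\BF$-branch counterpart). I plan to argue by contradiction: assume such a $w$ exists with $(w,p)\in F\oplus D$ and $\ec_s(w,p)=\chk$. Then $w\in G(q)$, so Lemma~\ref{lemma:outgoing_looping_edge} forces $q$ itself to have an outgoing looping edge and hence to lie in some loop $R$ of the CDG; condition~3 of the dPDG applied to $(w,p)$ with $r:=q\in R$ yields $p\in G^*(q)$, contradicting Lemma~\ref{lemma:NextC}. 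When $q$ is an assignment, Lemma~\ref{lemma:NextFL} handles the $F$-edge case directly, and the assignment clause only sets edges to $\chk$.

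For $\pcondL(s',p)$ the observation is that every $\udec$ clause that can $\unchk$ an $L$-edge requires one of its endpoints to lie in $G_T(q)\setminus\{q\}$ or $G_F(q)\setminus G_T(q)$, hence in $G^*(q)$; by Lemma~\ref{lemma:NextC}, $p\notin G^*(q)$, so none of these clauses fire with $p$ at that endpoint, and the general rule on $l(p,q)$ only sets it to $\chk$. The main obstacle is the $\pcondF$/$\pcondD$ step above: it is the only place where condition~3 of the dPDG and Lemma~\ref{lemma:outgoing_looping_edge} both come into play, while every other case reduces to essentially bookkeeping against the clauses of $\udec$.
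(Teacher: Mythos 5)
Your proposal is correct and follows essentially the same route as the paper's proof: the paper also discharges the second conjunct of $\pcondC$ via Lemma~\ref{lemma:dPDGcondC1}, handles $\pcondF$/$\pcondD$ by deriving an outgoing looping edge for $q$ from Lemma~\ref{lemma:outgoing_looping_edge} and then invoking condition~3 of the dPDG to get the contradiction $p \in G^*(q)$, and disposes of $\pcondL$ and the assignment case via Lemmas~\ref{lemma:NextC} and~\ref{lemma:NextFL}. The only difference is organizational (you case-split by conjunct, the paper by the statement type of $q$), which does not affect the substance.
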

\begin{proof}
We show $\pcondCFLD(s',p)$ holds.
We divide the proof by the type of $q$.
 \begin{enumerate}
  \item If $q$ is an assignment statement, the change of the econf
	caused by the execution of $q$ does not affect the value of the
	predicates $\pcondC$, $\pcondF$ and $\pcondD$.
	Moreover, by Lemma \ref{lemma:NextFL}, $\{p,q\} \subseteq
	\Next(s)$ implies that there are no $L$-edges between $p$ and
	$q$.
	Therefore we have $\pcondCFLD(s',p)$ and $p \in \Next(s')$.
  \item Suppose $q$ is an $\iif$ statement ($\iif e$).
	We prove each condition individually.
\begin{description}
 \item[Proof of $\pcondC(s',p)$.] Since $\pcondC(s,p)$ holds, there is
	    $(o,p) \in C$ that satisfies $\ec(o,p)=\act$.
	    If the execution of $q$ brings $\ec'(o,p) \neq \act$, then
	    $o \in G(q)$ by the definition of $\udec$.
	    This, however, contradicts to $\pcondC(s,q)$ since it
	    implies $p \in G^*(q)$.
	    Therefore, the first conjunct of $\pcondC(s',p)$ holds.
	    The second conjunct follows from Lemma
	    \ref{lemma:dPDGcondC1}.
 \item[Proof of $\pcondF(s',p)$.] Assume that the execution of $q$ causes
	    violation of $\pcondF(s',p)$.
	    We consider the case $\Se(q,e)\av = \BT$ (The case of
	    $\BF$ is the same).
	    By the definition of $\udec$, there must be some $r \in
	    G_T(q) - \{q\}$ such that $(r,p) \in F$ and $\ec(r,p)=
	    \chk$\footnote{Note that $q$'s execution brings $\ec'(r,p)=\unchk$,
	    thus $p$ becomes unexecutable at state $s'$.}.
	    Thus, by Lemma \ref{lemma:outgoing_looping_edge}, $q$ has
	    outgoing looping edges.
	    That is, $q$ is contained in some loop in the CDG by the
	    definition of looping edges.
	    By the third condition of the dPDG, then, $p \in G^*(q)$
	    is the case, which contradicts to $\pcondC(s,q)$.
 \item[Proof of $\pcondD(s',p)$.] The same as the proof of
	     $\pcondF(s',p)$.
 \item[Proof of $\pcondL(s',p)$.] Assume that the execution of $q$ causes
	    violation of $\pcondL(s',p)$.
	    That is, for some $l(p,r) \in L$, $\ec'(l(p,r)) = \unchk$.
	    We consider the case $\Se(q,e)\av = \BT$ (The case of
	    $\BF$ is the same).
	    By the definition of $\udec$, we have $r \in G_T(q) - \{q\}$
	    and $p \in G_T(q)$.
	    This contradicts to $\pcondC(s,q)$.
\end{description}
 \end{enumerate}
\end{proof}

\begin{lemma} \label{lemma:subgraphs_are_disjoint}
 Let $s$ be a state of a run of a dPDG.
If $\{p,q\} \subseteq \Next(s)$, then $G(p) \cap G(q) = \emptyset$
 holds.
\end{lemma}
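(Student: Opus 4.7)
The plan is to argue by contradiction: assume that some node $n$ lies in $G(p) \cap G(q)$, and extract from this a pair of distinct $C$-predecessors of a common node that violates Condition~1 of the dPDG. From the definitions I fix $\widehat{C}$-paths $\pi_p : p_1 \to^{*} n$ and $\pi_q : q_1 \to^{*} n$ with $(p,p_1),(q,q_1) \in C$. The backbone is supplied by earlier lemmata: Lemma~\ref{lemma:NextC} yields $p \notin G^*(q)$ and $q \notin G^*(p)$, and Lemma~\ref{lemma:executable_then_mca} provides $r \in \mca(p,q)$ and $Q \in \{\BT,\BF\}$ with $\{p,q\} \subseteq G^{*}_{Q}(r)$.

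To expose a useful pair of $C$-predecessors, I pick $n$ minimal inside $G(p) \cap G(q)$ with respect to the $\widehat{C}$-reachability preorder, i.e.\ no proper $\widehat{C}$-ancestor of $n$ also sits in $G(p) \cap G(q)$. Let $u_p$ be the immediate $\widehat{C}$-predecessor of $n$ along $\pi_p$, taking $u_p = p$ when $p_1 = n$, and define $u_q$ analogously. Lemma~\ref{lemma:NextC} rules out degenerate coincidences such as $u_q = p$ (which would force $p \in G(q) \subseteq G^*(q)$), so $u_p \neq u_q$. Both $u_p$ and $u_q$ lie in $G^{*}_{Q}(r)$, since this set is closed under $C$-reachability from its elements and $p,q \in G^{*}_{Q}(r)$. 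Applying Lemma~\ref{lemma:mca_subsumption} (twice if necessary) places $r$ inside $\mca(u_p,u_q)$, so the third disjunct of Condition~1 fails for the pair $(u_p,u_q,n)$, forcing $u_p \in G^*(u_q)$ or $u_q \in G^*(u_p)$.

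The remaining task, and the main obstacle, is to refute both of these possibilities. The easy subcases are when $u_p = p$ or $u_q = q$: for example, $u_p = p$ combined with $u_p \in G^*(u_q)$ yields $p \in G^*(u_q) \subseteq G^*(q)$, contradicting Lemma~\ref{lemma:NextC}. The hard subcase is when $u_p \in G(p)$ and $u_q \in G(q)$ are both proper and, say, $u_p \in G^*(u_q)$; I plan to trace the witnessing $C$-path from a $C$-successor of $u_q$ to $u_p$. If that path uses only $\widehat{C}$-edges, then $u_p$ becomes $\widehat{C}$-reachable from $q_1$, so $u_p \in G(p) \cap G(q)$ is a proper $\widehat{C}$-ancestor of $n$ in the intersection, contradicting minimality. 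The surviving case, in which the witnessing path traverses a looping edge, is the crux of the proof; here I expect to mimic the strategy of Lemma~\ref{lemma:dPDGcondC1}, extracting an intermediate $C$-predecessor $o$ with $o \in G^*(p) \setminus G^*(q)$ and reapplying Condition~1 to the pair $(o,u_q)$, again using Lemma~\ref{lemma:mca_subsumption} to place the new pair within $G^{*}_{Q}(r)$.
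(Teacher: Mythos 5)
Your skeleton matches the paper's proof: assume $G(p)\cap G(q)\neq\emptyset$, locate a ``first'' common node together with two distinct $C$-predecessors coming from the $p$-side and the $q$-side, and combine Lemma~\ref{lemma:NextC}, Lemma~\ref{lemma:executable_then_mca} and Lemma~\ref{lemma:mca_subsumption} to show that the third disjunct of Condition~1 of Definition~\ref{def:dPDG} fails for that pair, yielding a contradiction. The paper does exactly this, with $o_1\in G(p)-G(q)$ and $o_2\in G(q)-G(p)$ playing the role of your $u_p$ and $u_q$.

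The proposal is nevertheless incomplete at the decisive step. The whole argument hinges on excluding the first two disjuncts of Condition~1, i.e.\ on showing $u_p\notin G^*(u_q)$ and $u_q\notin G^*(u_p)$ --- and note that these non-memberships are also among the hypotheses you need before Lemma~\ref{lemma:mca_subsumption} may be applied to the pair $(u_p,u_q)$, so they cannot be postponed until after that lemma has been used. Your minimality argument disposes only of the sub-case in which the witnessing $C$-path from a successor of $u_q$ to $u_p$ is a $\widehat{C}$-path; for the sub-case in which it traverses a looping edge you give no argument, only the statement that you ``expect to mimic'' Lemma~\ref{lemma:dPDGcondC1}. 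An announced intention is not a proof, and this is exactly where the difficulty lives. The paper closes this step by committing to the stronger claim that $o_1\in G(p)-G(q)$ already entails $o_1\notin G^*(q)$ (and symmetrically $o_2\notin G^*(p)$); since $o_2\in G^*(q)$ gives $G^*(o_2)\subseteq G^*(q)$, this rules out $o_1\in G^*(o_2)$ and $o_2\in G^*(o_1)$ at once, with no case split on looping edges. Until you either carry out your looping-edge case in detail or establish these stronger non-memberships directly, the argument does not close.
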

\begin{proof}
 Assume $G(p) \cap G(q) \neq \emptyset$.
Since $\{p,q\} \subseteq \Next(s)$, we have $p \not \in G^*(q)$ and $q
 \not \in G^*(p)$ by Lemma \ref{lemma:NextC}.
Thus, for some $o_1 \in G(p) - G(q)$ and $o_2 \in G(q) - G(p)$, there is
 $o \in G(p) \cap G(q)$ such that $(o_1,o) \in C$ and $(o_2,o) \in C$.
Since $\{p,q\} \subseteq \Next(s)$, there is $r \in \mca(p,q)$ and $Q \in
 \{\BT,\BF\}$ such that $\{p,q\} \subseteq G_Q^*(r)$ by Lemma
 \ref{lemma:executable_then_mca}.
Since $o_1 \in G(p) - G(q)$ and $o_2 \in G(q) - G(p)$, we have
$o_1 \not \in G^*(q)$ and $o_2 \not \in G^*(p)$.
By applying Lemma \ref{lemma:mca_subsumption} twice, we have $r \in
 \mca(o_1,o_2)$ which implies $\{o_1, o_2\} \subseteq G_Q^*(r)$.
This contradicts to the first condition of the dPDG.
\end{proof}

The next lemma states that if there are two executable statements, the
final result of the execution of them are independent of the execution
order.

\begin{lemma} \label{lemma:diamond_property}
 Let $s$ be a state of a run of a dPDG.
If $\{p,q\} \subseteq \Next(s)$, the following holds:
\[
s \stackrel{p}{\to} s_1 \stackrel{q}{\to} s_2 \Leftrightarrow
s \stackrel{q}{\to} s_1' \stackrel{p}{\to} s_2
\]
\end{lemma}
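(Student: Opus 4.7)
\emph{Proof proposal.} By the symmetry of $p$ and $q$, it suffices to treat the forward implication: starting from a valid run $s \stackrel{p}{\to} s_1 \stackrel{q}{\to} s_2$, produce $s \stackrel{q}{\to} s_1' \stackrel{p}{\to} s_2$ with the same terminal state. Write $s_i = (\av_i, \ec_i)$ and $s_i' = (\av_i', \ec_i')$. The first step of the plan is to appeal to Lemma \ref{lemma:dPDG_non_interference}: since $\{p,q\} \subseteq \Next(s)$, the lemma gives $q \in \Next(s_1)$ and $p \in \Next(s_1')$, so both two-step sequences are genuine runs. Everything then reduces to showing $\av_2 = \av_2'$ and $\ec_2 = \ec_2'$.

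For the avail component I would argue in two parts. First, the value $\Se(p, e_p)\av$ that $p$'s execution substitutes is unaffected by $q$'s prior execution: Lemma \ref{lemma:NextFL} excludes any $F$- or $L$-edge from $q$ to $p$, so $\udav(q, \cdot)$ does not touch any binding $(p, w)$ with $w \in \puse(p)$, whence $\Se(p, e_p)\av = \Se(p, e_p)\av_1'$; symmetrically for $q$. Second, the two writes target disjoint bindings: $\udav(p, \cdot)$ modifies $(p', x_p)$ for $F \oplus L$-successors $p'$ of $p$, while $\udav(q, \cdot)$ modifies $(q', x_q)$. A collision would force $x_p = x_q$ together with $(p, n), (q, n) \in F \oplus L$ for some common successor $n$; Lemma \ref{lemma:NextD} rules out the $F$/$F$ case, and the mixed $F$/$L$ combinations can be eliminated by unfolding the definitions of loop-independent and loop-carried data dependence together with the $\pcondL$ constraints forced on $p$ and $q$ by their membership in $\Next(s)$.

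For the econf component I would lean on Lemma \ref{lemma:subgraphs_are_disjoint} to get $G(p) \cap G(q) = \emptyset$, together with Lemma \ref{lemma:NextC} to get $p \notin G^*(q)$ and $q \notin G^*(p)$. A case-by-case reading of Definition \ref{def:udec} shows that every edge $\udec(p, \cdot)$ rewrites is either incident to $p$ itself or has an endpoint inside $G(p) \setminus \{p\}$, and symmetrically for $q$; combined with the disjointness facts, no edge is simultaneously touched by both invocations, so the two rewrites commute. Moreover the branch chosen inside $\udec(p, \cdot)$ is governed by $\Se(p, e_p)\av$, which by the read-invariance above agrees at $s$ and at $s_1'$, so $\udec(p, \cdot)$ produces identical deltas in the two orders.

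The hard part will be the econf bookkeeping: Definition \ref{def:udec} has six subcases each for the $\BT$ and $\BF$ branches, several of which manipulate $L$-edges $l(a, b)$ where only one endpoint is explicitly required to lie in $G_T(n)$ or in $G_F(n) - G_T(n)$. For every pairing of a subcase of $\udec(p, \cdot)$ with a subcase of $\udec(q, \cdot)$, one has to trace the constrained endpoint through the $G(p) \cap G(q) = \emptyset$ and non-$C$-reachability hypotheses to confirm no common edge is hit, and in the few places where an edge is touched by only one of the two invocations, that the assigned value is independent of the order. I expect this mechanical but delicate accounting to be the bulk of the work, while the non-interference appeal and the avail argument go through smoothly.
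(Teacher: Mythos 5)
Your proposal follows essentially the same route as the paper's proof: both orders being genuine runs comes from Lemma \ref{lemma:dPDG_non_interference}, the avail component is handled via Lemmas \ref{lemma:NextFL} and \ref{lemma:NextD}, and the econf component via Lemmas \ref{lemma:NextC} and \ref{lemma:subgraphs_are_disjoint}, with the only difference being that you organize the argument by state component where the paper cases on the statement types of $p$ and $q$. The two points you defer --- the mixed $F$/$L$ collision at a shared successor and the edge-by-edge $\udec$ accounting when both $p$ and $q$ are $\iif$ statements --- are precisely the steps the paper itself asserts without further detail, so your sketch is complete to the same degree as the published proof.
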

\begin{proof}
 Suppose $s \stackrel{p}{\to} s_1 \stackrel{q}{\to} s_2$ and $s
 \stackrel{q}{\to} s_1' \stackrel{p}{\to} s_2'$.
We split the proof by the type of $p$ and $q$.
\begin{enumerate}
 \item Both $p$ and $q$ are respectively assignment statements $x:=e_1$
       and $y:=e_2$.
       Since the execution order of $p$ and $q$ do not change the final
       result of the econf, we have $\ec_2 = \ec_2'$.
       If $x \neq y$, we can easily see $\av_2 = \av_2'$ by the
       definition of $\udav$.
       We consider the case $x=y$.
       If there exists $u$ such that $f_x(p,u) \in F$ and $f_x(q,u) \in
       F$, it should be $\{p,q\} \not \subseteq \Next(s)$ by Lemma
       \ref{lemma:NextD} and leads to contradiction.
       Therefore, $\{n ~|~ (p,n) \in F \oplus L\} \cap \{n ~|~ (q,n) \in F \oplus
       L\} = \emptyset$.
       By Lemma \ref{lemma:NextFL} we have $(p,q) \not \in F \oplus L$ and
       $(q,p) \not \in F \oplus L$, which yields $\av_2=\av_2'$.
 \item $p$ is an assignment statement $x:=e_1$ and $q$ is an $\iif$
       statement $\iif e_2$.
       By Lemma \ref{lemma:NextC}, we have $p \not \in G^*(q)$, and thus
       $\ec_2=\ec_2'$.
       Furthermore, we have the following:
\begin{align*}
\av_1 &= \av[(n,x) \mapsto \Se(p,e_1)\av : (p,n) \in F \oplus L] \\
&= \av_2, \\
\av_2' &= \av_1'[(n,x) \mapsto \Se(p,e_1)\av : (p,n) \in F \oplus L].
\end{align*}
Since $\av_1' = \av$, and $(p,q) \not \in F \oplus L$ holds by Lemma
       \ref{lemma:NextFL}, we conclude $\av_2 = \av_2'$.
 \item Both $p$ and $q$ are respectively $\iif$ statements $\iif e_1$
       and $\iif e_2$.
       By the definition of $\udav$, we obviously have $\av_2 = \av_2'$.
       By Lemma \ref{lemma:NextC}, we have $p \not \in G^*(q)$ and $q
       \not \in G^*(p)$.
       By Lemma \ref{lemma:subgraphs_are_disjoint}, we have $G(p) \cap
       G(q) = \emptyset$.
       Therefore, we conclude $\ec_2=\ec_2'$.
\end{enumerate}
\end{proof}

By the above lemmata, we finally prove the following theorem, which
states that executions of a dPDG reach the same final sate regardless of
execution orders of statements.

\begin{theorem} \label{thm:dpdg}
 Let $s$ be a state of a run of a dPDG.
If there is a finite run $s \stackrel{n}{\to} s_1 \stackrel{n_1}{\to}
 \dots \stackrel{n_m}{\to} s_m$, all runs from state $s$ have the length
 $m+1$ and end with $s_m$.
\end{theorem}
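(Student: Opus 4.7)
The plan is to prove this by induction on the length $k$ of the given finite run, counting transitions, with the induction statement: for every state $t$, if some run from $t$ is finite with exactly $k$ transitions and ends in a state $t_k$, then every run from $t$ has exactly $k$ transitions and ends in $t_k$.

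The base case $k=0$ is immediate: Definition \ref{def:pdgsemantics} forces $\Next(t)=\emptyset$, so the only run from $t$ is the trivial one. For the inductive step, let $\rho = s \stackrel{n_0}{\to} s_1 \stackrel{n_1}{\to} \cdots \stackrel{n_{k-1}}{\to} s_k$ be the given run and let $\rho'$ be any run from $s$. Since $n_0 \in \Next(s)$, $\rho'$ must take at least one step, say $s \stackrel{n'_0}{\to} s'_1$. If $n'_0 = n_0$, then $s'_1 = s_1$ by determinism of the update functions $\udav$ and $\udec$, and applying the induction hypothesis at $s_1$ (the tail of $\rho$ being a finite run of length $k-1$ ending at $s_k$) forces the tail of $\rho'$ to have length $k-1$ and end at $s_k$, so $\rho'$ has length $k$ and ends at $s_k$.

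If instead $n'_0 \neq n_0$, then $\{n_0,n'_0\}\subseteq \Next(s)$. By Lemma \ref{lemma:dPDG_non_interference}, $n'_0 \in \Next(s_1)$, so there is a transition $s_1 \stackrel{n'_0}{\to} t$ for some state $t$, and Lemma \ref{lemma:diamond_property} supplies the companion transition $s'_1 \stackrel{n_0}{\to} t$ reaching exactly the same state $t$. The induction hypothesis applied at $s_1$ (using the length-$(k-1)$ tail of $\rho$) then tells us that every run from $s_1$, including the one starting with $n'_0$, has length $k-1$ and ends at $s_k$; peeling off the initial $n'_0$ yields a finite run from $t$ of length $k-2$ ending at $s_k$. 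Prefixing this with $s'_1 \stackrel{n_0}{\to} t$ produces a finite run from $s'_1$ of length $k-1$ ending at $s_k$, so a second application of the induction hypothesis, at $s'_1$, forces the tail of $\rho'$ to have length $k-1$ and end at $s_k$. Hence $\rho'$ has length $k$ and ends at $s_k$.

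The main obstacle is the swap case: we must know that $n'_0$ remains executable after $n_0$ is taken from $s$, and that the two interleavings $n_0;n'_0$ and $n'_0;n_0$ from $s$ reach exactly the same state. These are precisely what Lemmata \ref{lemma:dPDG_non_interference} and \ref{lemma:diamond_property} deliver, and through them all three dPDG conditions enter the argument; once they are in hand, the rest is a clean Church--Rosser style induction with two applications of the inductive hypothesis in the swap case.
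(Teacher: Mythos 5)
Your proposal is correct and follows essentially the same route as the paper's own proof: a Newman-style induction on the run length, using Lemma \ref{lemma:dPDG_non_interference} to keep the alternative first step executable after one transition and Lemma \ref{lemma:diamond_property} to commute the two steps, followed by a second application of the induction hypothesis at the swapped intermediate state. The only cosmetic difference is that the paper quantifies over all of $\Next(s)$ at once while you peel off an arbitrary competing run, which amounts to the same argument.
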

\begin{proof}
 By induction on $m$.\\
\noindent \textbf{(Base)}
The claim is trivial for $m=0$.

\noindent \textbf{(Induction)}
Suppose that there is a run $s \stackrel{p}{\to} s_1 \stackrel{}{\to}
 \dots \stackrel{}{\to} s_m$ of the length $m+1$ starting with $s$.
Suppose $\Next(s) = \{p,p_0,\dots, p_j\}$.
By Lemma \ref{lemma:dPDG_non_interference}, we have $\{p_0,\dots,p_j\}
 \subseteq \Next(s_1)$.
Since we have a run $s_1 \stackrel{}{\to} \dots \stackrel{}{\to} s_m$
 starting from $s_1$ whose length is $m$, all runs from state $s_1$ have
 the length $m$ and end with $s_m$ by the induction hypothesis.
Therefore, for all $i \in [0,j]$, there is a finite run $s_1
 \stackrel{p_i}{\to} s_2^i \to \dots \to s_m$ of the length $m$.
Since we have $s_0 \stackrel{p}{\to} s_1 \stackrel{p_i}{\to} s_2^i$ by
 Lemma \ref{lemma:diamond_property}, there is $s_1^i$ such that $s_0
 \stackrel{p_i}{\to} s_1^i \stackrel{p}{\to} s_2^i$.
This implies there is a run starting with $s_1^i$, ending with $s_m$ and
 whose length is $m$.
By the induction hypothesis, all runs from state $s_1^i$ have the length
 $m$ and end with $s_m$.
Since $i$ is arbitrary, all runs from $s$ have the length $m+1$ and end
 with $s_m$.
\end{proof}

\subsection{Equivalence of the operational semantics of CFG and PDG}
Finally, we arrive at the equivalence of the operational semantics of
the CFG and the PDG.

\begin{theorem} \label{thm:cfgpdgsem}
 Let $P$ be a CFG and $G$ be the corresponding PDG.
Suppose $\sigma_0 \approx \av_0$ and $G$ is a dPDG.
If there is a finite run $\sigma_0 \stackrel{n_0}{\to} \sigma_1
 \stackrel{n_1}{\to} \dots \stackrel{n_m}{\to} \sigma_m$ of $P$ where
 $n_m = (\iret~x)$, there is a finite run $s_0 \stackrel{n_0'}{\to} s_1
 \stackrel{n_1'}{\to} \dots \stackrel{n_m'}{\to} s_m$ of $G$ and
 $\sigma_m(x) = \av_m(n_m, x)$.
\end{theorem}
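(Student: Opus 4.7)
The plan is to put together the two properties established in the preceding two subsections, since all the real work has already been carried out there. The argument therefore has only two substantive steps, with the dPDG hypothesis entering only to upgrade existence to genuine equivalence.

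First, I would apply Theorem~\ref{thm:corresponding_run} to the given CFG run, using the hypothesis $\sigma_0 \approx \av_0$. This produces a PDG run of the same length that schedules the statements in exactly the order $n_0, n_1, \dots, n_m$ used by the CFG and, crucially, satisfies the invariant $\sigma_i \approx_{n_i} \av_i$ at every step. Setting $n_i' := n_i$ yields the desired run of $G$ ending at some state $s_m = (\av_m, \ec_m)$.

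Second, to derive the value equality for the returned variable, I would simply unfold $\approx_{n_m}$. Since $n_m = (\iret~x)$, we have $x \in \puse(n_m)$, and the invariant $\sigma_m \approx_{n_m} \av_m$ then gives exactly $\sigma_m(x) = \av_m(n_m, x)$.

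The dPDG hypothesis, although not strictly needed to exhibit the single witness above, is what turns this statement into a true semantical equivalence: by Theorem~\ref{thm:dpdg}, the finite run just constructed forces every run of $G$ from $s_0$ to have length $m+1$ and to end in the same state $s_m$, so the returned value is independent of which scheduling of simultaneously executable nodes is chosen. I do not expect any serious obstacle at this stage; the two heavy lifts — simulating the CFG schedule inside the PDG while preserving $\approx$, and confluence of dPDG runs — have already been absorbed into Theorems~\ref{thm:corresponding_run} and~\ref{thm:dpdg}. What remains is only to chain these two applications and unfold one definition, plus some cosmetic care with the off-by-one indexing between the $\sigma_i$'s and the labels $n_i$.
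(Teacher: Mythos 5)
Your proposal matches the paper's own proof, which simply cites Theorem~\ref{thm:corresponding_run} and Theorem~\ref{thm:dpdg}; you have correctly identified both the role of each theorem and the final unfolding of $\approx_{n_m}$ at the return node that yields $\sigma_m(x) = \av_m(n_m,x)$. Your elaboration is in fact more explicit than the paper's one-line proof, but the decomposition is identical.
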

\begin{proof}
 By Theorem \ref{thm:corresponding_run} and \ref{thm:dpdg}.
\end{proof}

\section{Conclusion} \label{sec:conclusion}
In this paper, we proposed an operational semantics of the PDG that
applies even to unstructured programs.
In our operational semantics, a PDG has the same execution sequence as
the corresponding CFG (Theorem \ref{thm:corresponding_run}).
We identified the class of deterministic PDGs (dPDGs) and proved that
every run of a dPDG starting with the same initial state end with the
same final state (Theorem \ref{thm:dpdg}).
These facts lead to the semantical equivalence between the CFG and the
PDG (Theorem \ref{thm:cfgpdgsem}).

We expect that our operational semantics will be used to formally
discuss correctness of program optimisation techniques based on PDG
transformations.
Since optimised PDGs should be translated into CFGs, there are some
algorithms that translate PDGs into CFGs
\cite{simons90foundation,steensgaard93sequentializing,zeng04generating}.
However, there are cases where optimised PDGs do not have corresponding
CFGs.
Thus, those algorithms duplicate some nodes or insert artificial
conditional statements.
Nevertheless, correctness of such transformations are not formally
proved.
One possible reason is that such PDGs are usually not well-structured
and the existing PDG semantics cannot be applied.
Since our semantics can be applied to such programs, it will be helpful
to prove the correctness of such algorithms.

Another interesting research direction is to prove semantical
equivalence between concurrent programs and PDGs.
This enables to formally discuss correctness of algorithms that
construct concurrent programs from PDGs.

\bibliographystyle{plain}
\bibliography{compiler}
\end{document}